\documentclass[prx,floatfix,superscriptaddress,nofootinbib,notitlepage]{revtex4-1}
\pdfoutput=1
\usepackage{graphicx}
\usepackage{amsthm}
\usepackage{thmtools,thm-restate}
\usepackage{mathtools}
\usepackage{amsmath}
\usepackage{amssymb}
\usepackage{bm}
\usepackage{dsfont}
\usepackage{xcolor}
\usepackage{placeins}
\usepackage{svg}
\usepackage[normalem]{ulem}
\usepackage[colorlinks]{hyperref}
\hypersetup{
	pdfstartview={FitH},
	pdfnewwindow=true,
	colorlinks=true,
	linkcolor=blue,
	citecolor=blue,
	filecolor=blue,
	urlcolor=blue}
\usepackage{multirow}
\usepackage{physics}
\usepackage[capitalise]{cleveref}

\newcommand{\thm}[1]{\hyperref[thm:#1]{Theorem~\ref*{thm:#1}}}
\newcommand{\defn}[1]{\hyperref[defn:#1]{Definition~\ref*{defn:#1}}}
\newcommand{\lem}[1]{\hyperref[lem:#1]{Lemma~\ref*{lem:#1}}}
\newcommand{\prop}[1]{\hyperref[prop:#1]{Proposition~\ref*{prop:#1}}}
\newcommand{\fig}[1]{\hyperref[fig:#1]{Figure~\ref*{fig:#1}}}
\newcommand{\tab}[1]{\hyperref[tab:#1]{Table~\ref*{tab:#1}}}
\renewcommand{\sec}[1]{\hyperref[sec:#1]{Section~\ref*{sec:#1}}}
\newcommand{\app}[1]{\hyperref[app:#1]{Appendix~\ref*{app:#1}}}
\newcommand{\cor}[1]{\hyperref[cor:#1]{Corollary~\ref*{cor:#1}}}
\newcommand{\nn}{\nonumber \\}
\newcommand{\append}[1]{\hyperref[append:#1]{Appendix~\ref*{append:#1}}}

\newcommand{\uin}{\ensuremath{\norm{\mathbf{u}_{\mathrm{in}}}}}
\newtheorem{theorem}{Theorem}
\newtheorem{definition}[theorem]{Definition}
\newtheorem{lemma}[theorem]{Lemma}

\newtheorem{corollary}[theorem]{Corollary}
\newtheorem{prob}{Problem}

\newcommand{\uu}{\ensuremath{\mathbf{u}}}

\newcommand{\xx}{\ensuremath{\mathbf{x}}}
\newcommand{\yy}{\ensuremath{\mathbf{y}}}
\newcommand{\bb}{\ensuremath{\mathbf{b}}}
\newcommand{\dt}[1]{\ensuremath{\frac{\dd#1}{\dd t}}}

\DeclareMathOperator{\identity}{\mathds{I}}
\DeclareMathOperator{\reals}{\mathbb{R}}
\DeclareMathOperator{\carlmatrix}{\ensuremath{\mathcal{A}_\mathit{N}}}

\newcommand{\MQ}{\affiliation{
School of Mathematical and Physical Sciences,
Macquarie University, Sydney, NSW 2109, AU} }

\newcommand{\trNSW}{\affiliation{
Quantum for New South Wales, Sydney, NSW 2000, AU}}

\newcommand{\UofT}{\affiliation{Department of Computer Science, University of Toronto, Canada.}}

\newcommand{\vectorInst}{\affiliation{Vector Institute for Artificial Intelligence, Toronto, Canada.}}

\newcommand{\UTS}{\affiliation{
Centre for Quantum Software and Information,
University of Technology Sydney, NSW 2007, AU}} 

\newcommand{\UMD}{\affiliation{
Joint Center for Quantum Information and Computer Science (QuICS), University of Maryland, College Park, Maryland 20742, USA}}

\setcounter{tocdepth}{1}
\begin{document}
\title{Further improving quantum algorithms for nonlinear differential equations via higher-order methods and rescaling}
\date{\today}
\author{Pedro C.~S.~Costa}\MQ\trNSW
\author{Philipp Schleich}\MQ\UofT\vectorInst
\author{Mauro E.~S.~Morales}\UTS\UMD
\author{Dominic W.~Berry}\MQ

\begin{abstract}
   The solution of large systems of nonlinear differential equations is needed for many applications in science and engineering. In this study, we present three main improvements to existing quantum algorithms based on the Carleman linearisation technique.    
   First, by using a high-precision technique for the solution of the linearised differential equations, we achieve 
   logarithmic dependence of the complexity on the error 
   and near-linear dependence on time.
   Second, we demonstrate that a rescaling technique can considerably reduce the cost, which would otherwise be exponential in the Carleman order for a system of ODEs, preventing a quantum speedup for PDEs. Third, we provide improved, tighter bounds on the error of Carleman linearisation.
   We apply our results to a class of discretised reaction-diffusion equations using higher-order finite differences for spatial resolution.
   We show that providing a stability criterion independent of the discretisation can conflict with the use of the rescaling due to the difference between the max-norm and 2-norm.
   An efficient solution may still be provided if the number of discretisation points is limited, as is possible when using higher-order discretisations.
 
\end{abstract}

\maketitle

\tableofcontents

\section{Introduction}

Many processes in nature exhibit nonlinear behaviour that is not sufficiently approximated by linear dynamics. Examples range from biological systems, chemical reactions, fluid flow, and population dynamics to problems in climate science. 
Because the Schr\"odinger equation is linear, quantum algorithms are more naturally designed for linear ordinary differential equations (ODEs), as in \cite{berry2014high,berry2017quantum,berry2022quantum,childs2020quantum,an2022theory,fang2023time,krovi2023improved,PhysRevLett.131.150603,an2023quantum}.
These algorithms are normally based on discretising time to encode the ODE in a system of linear equations, then using quantum linear system solvers~\cite{HHL09,costa2022optimal}.
Others are based on a time-marching strategy, solving the ODE using a linear combination of unitary dynamics \cite{PhysRevLett.131.150603,an2023quantum}.
The advantage of these quantum algorithms is that they naturally provide an exponential speedup in the dimension (number of simultaneous equations), similar to the simulation of quantum systems, with the caveat that the solution is encoded in the amplitudes of a quantum state.

The most natural way to approximate quantum solutions of partial differential equations (PDEs) is to first discretise the PDE to construct an ODE, which can then be solved using a quantum ODE algorithm.
Although one might expect an exponential speedup in the number of discretisation points (which would give the dimension for the ODE), this is not realised.
This approach to solve PDEs typically has a more modest polynomial speedup over classical methods due to the norm or condition number of the matrices resulting from the discretisation.
Reference \cite{CJS13} suggested using preconditioners, though later work found that the preconditioners did not significantly reduce the condition number.
Reference~\cite{childs2021high} approached this problem by using higher-order finite difference stencils as well as a pseudo-spectral method.
Alternatively, one can use a wavelet-based preconditioner to achieve scaling independent of the condition number in some cases \cite{bagherimehrab2023fast}.
References~\cite{jin2022quantumSchrodingerisation,jin2022quantumSchrodingerisationDetails} introduce a new method using a variable transformation which provides solutions of PDEs in an equivalent frame using quantum simulation techniques.

Quantum algorithms for nonlinear differential equations were addressed in early work which had very large complexity \cite{leyton2008quantum}.
Later proposals were based on the nonlinear Schr\"odinger equation \cite{lloyd2020quantum}, or an exact mapping of the nonlinear Hamilton-Jacobi PDE into a linear PDE \cite{jin2022quantumObservable, jin2022timeLinear}.
Possibly the most promising approach for the solution of nonlinear ODEs is based on Carleman linearisation~\cite{carleman1932application}, which involves transforming the nonlinear differential equation into a linear differential equation on multiple copies of the vector. 
This approach can be realised particularly easily for differential equations with polynomial nonlinearities and has been applied to quantum algorithms in the case of a quadratic function as the nonlinear part of the ODEs~\cite{liu2020efficient}, for a higher power of the function for a specific PDE~\cite{an2022efficient}, and for the notorious Navier-Stokes equations~\cite{li2023potential}. The homotopy perturbation method to tackle quadratic nonlinear equations in Ref.~\cite{xue2022quantum} leads to similar equations as Carleman linearisation. 

However, most approaches to quantum Carleman linearisation~\cite{liu2020efficient,an2022efficient} applied to PDEs suffer from high error rates due to simple discretisation schemes for the underlying PDE in time and space.
One work \cite{krovi2023improved} does use an improved discretisation in time via a truncated Taylor series \cite{berry2017quantum}.
Using a finer discretisation to achieve a given accuracy results in higher complexity, typically due to the complexity depending on the matrix condition number.
That can result in the complexity being the same or worse than that for classical solution.
Another difficulty in the use of Carleman linearisation in prior work is that the component with the solution may have low probability to be measured.
In this study, we provide three improvements over prior work.
First, we use higher-order methods in the time evolution as well as for the spatial discretisation for PDEs.
Second, we use rescaling in order to eliminate the problem of the low probability of the component with the solution for an intrinsic system of ODEs.
(Reference \cite{krovi2023improved} mentioned a rescaling at one point, though their explanation is unclear and it is unclear if they are using it.)
Third, we provide a tighter bound on the error in Carleman linearisation by explicitly bounding repeated integrals.
In the case of a PDE, the appropriate stability condition is in terms of the max-norm.
However, the interaction of the requirement of the rescaling with the Carleman linearisation and the stability requirement for the ODE solver means that a stronger stability criterion is needed to enable efficient solution.

It is important to note that in the case of PDEs the factor that is exponential in $N$ in prior work \cite{an2022efficient} would give a large power in the number of grid points.
Since a simple classical algorithm would have complexity linear in the number of grid points, the quantum speedup would be eliminated.
Our work demonstrates that quantum computers can provide a sublinear complexity in the number of grid points for nonlinear PDEs, as well as establishing the limitations to this type of approach.
We present an overview of the general solution procedure of nonlinear differential equations on quantum computers in relation to the present work in~\cref{fig:main_flow}.

The paper is organized as follows.
We summarise the contributions and significance of the current work in \cref{sec:contribution}, then specify the problem and our method of solution in \cref{sec:problem-description}.
We then describe the Carleman linearisation techniques in \cref{sec:carleman}, with a summary in \cref{sec:backgound}.
\cref{sec:rescaled_ODEs_Trunc} explains how rescaling the solution vector in Carleman linearisation can boost the amplitude for the solution.
We provide improved error bounds for the rescaled case in~\cref{sec:carleman}.
Subsequently, we apply the ODE solver from Ref.~\cite{berry2022quantum} in \cref{sec:ODEs} to improve over the forward Euler method, providing logarithmic as opposed to linear dependency on the error, and linear as opposed to quadratic dependency on the simulation time.
We also provide complexities adjusted to the proposed rescaling.
We analyse the case of time-independent quantities, though the approach can also be applied for time-dependent equations.
We then show how we can apply our results to obtain the solution of a specific case of nonlinear PDEs in \cref{sec:nonlin-pde-problem}, explaining the  limitations. Finally, we conclude in \cref{sec:conclusion}.

\begin{figure}
    \centering\includegraphics[width=\linewidth]{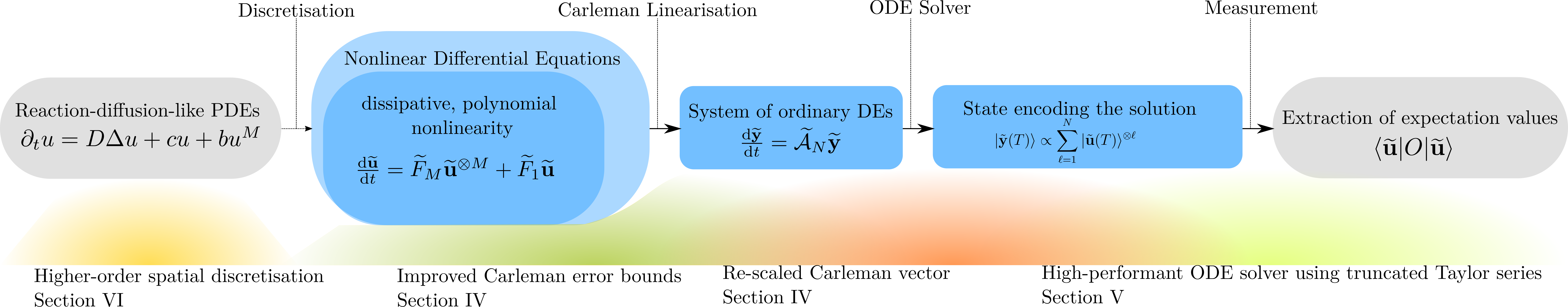}
    \caption{Overview of solution pipeline for nonlinear (partial) differential equations in relation to presented contributions and location in paper. The PDE is approximated by a vectorial ODE via discretisation, $u \longrightarrow \uu$ and we use a scaling $\uu \overset{1/\gamma}{\longrightarrow}\widetilde\uu$ as described in \cref{defn:rescaled-carleman}.}
    \label{fig:main_flow}
\end{figure}

\section{Contribution and significance}\label{sec:contribution}
We give a number of improvements to the solution of nonlinear ODEs and PDEs.
\begin{enumerate}
    \item We use a higher-order method for discretisation of the PDE, which will be required in practice because the stability of the solution will require that the number of points is not too large.
    \item We use rescaling of the components in the Carleman linearisation in order to ensure that the first component containing the solution can be obtained with high probability.
    We show that the amount of rescaling that can be used is closely related to the stability of the equations.
    \item We provide a much tighter analysis of the error due to the Carleman linearisation for ODEs, and extend this analysis to PDEs.
    This analysis is dependent on the stability and the discretisation of the PDE.
\end{enumerate}

All these improvements are dependent on the stability of the equations, which is required for the quantum algorithm to give an efficient solution.
The equations have a linear dissipative term and the nonlinear growth term.
As the input is made larger, the nonlinear term will cause growth and make the solution unstable.
Therefore, for the solution to be dissipative, the input needs to be sufficiently small that the dissipative term dominates.
In the ODE case the input is a vector $\mathbf{u}_{\mathrm{in}}$, and the stability criterion can be given in terms of the 2-norm of that vector.
In the case of the PDE, it is more appropriate to give the stability criterion in terms of the max-norm, because the 2-norm will change depending on the number of discretisation points.

Giving the stability criterion in terms of the max-norm then makes the analysis of higher-order discretisations challenging.
The reason is that, while the first-order discretisation of the Laplace operator is stable in terms of the max-norm, the higher-order discretisations no longer are.
In the analysis of the Carleman linearisation error it is required that the equations are stable.
For the ODE this stability in terms of the 2-norm enables the 2-norm of the error to be bounded.
For the PDE, stability in terms of the max-norm enables the max-norm to be bounded, but the higher-order discretisation complicates the analysis and means slightly stronger stability is required.

The reason why rescaling is needed is that the Carleman method involves constructing a quantum state with a superposition of one copy of the initial vector, two copies, and so forth up to $N$ copies.
If the initial vector is not normalised, then this means that there can be an exponentially large weight on the largest number of copies, whereas the first part of the superposition with a single copy is needed for the solution.
In order to ensure the probability for obtaining that component is not exponentially small, the Carleman vector needs to be rescaled by (at least) the 2-norm so that there is sufficient weight on that first component.
Even if $N$ is small, this feature means that rescaling is essential in order to obtain any speedup over classical algorithms for PDEs.
Without the rescaling, the complexity is superlinear in the number of grid points.

In order to ensure that the same equations are being solved, the components of the matrix need a matching rescaling, which can increase the weight of the nonlinear part (causing growth) as compared to the linear dissipative part.
In the case of an ODE, we show that if the original nonlinear equation is dissipative then the linear ODE obtained from Carleman linearisation is also stable.
That stability is required for the quantum ODE solver to be efficient.
If the ODE is not stable, then the condition number will be exponentially large (in time), which causes the linear equation solver to have exponential complexity.

Similarly for the discretised PDE, there needs to be rescaling by the 2-norm in order to ensure there is adequate weight on the first component of the solution.
The key difference now is that the stability of the equations is given in terms of the max-norm, but the rescaling is by the 2-norm which is typically larger.
That rescaling can give a linear ODE that is no longer stable, which in turn would mean an exponential complexity of the algorithm.
That is perhaps surprising, because the original nonlinear equation is stable.

However, if the PDE is sufficiently dissipative, then the discretised equation will still satisfy the stability criterion in terms of the 2-norm, and there will still be an efficient quantum algorithm.
Because the 2-norm will increase without limit with the number of discretisation points, it is then crucial to minimise the number of discretisation points used.
That further motivates using the higher-order discretisation of the PDE, because that minimises the number of discretisation points.

\section{Problem description and solution strategy}\label{sec:problem-description}

The main focus of this work is the treatment of nonlinear differential equations,  when we have an arbitrary power $M$ in the nonlinear ODE problem on quantum computers, that is
\begin{equation}
\dt{\mathbf{u}} = F_1\mathbf{u} + F_M\mathbf{u}^{\otimes M} \, ,
\end{equation}
followed by its application to the nonlinear reaction-diffusion PDE,
\begin{equation}
   \partial_t  u (\xx,t) = D \Delta u(\xx,t) + c  u(\xx,t)+ b u^{M} (\xx,t). 
\end{equation}
In the following we summarise the problem description and solution strategy for these two equations.

\subsection{The ODE problem}
Here, we present the problem of solving the nonlinear ODE, including the variable definitions and the dissipativity condition needed for an efficient quantum algorithm.
\begin{prob}\label{defn:nonlin-ode-problem} We consider the solution of a system of nonlinear (vectorial) dissipative  ODEs of the form
\begin{equation}
\label{eq:nonlin-ode-problem}
\dt{\mathbf{u}} = F_1\mathbf{u} + F_M\mathbf{u}^{\otimes M} \, ,
\end{equation}
with initial data
\begin{equation}
\mathbf{u}(t=0) = \mathbf{u}_{\mathrm{in}},
\end{equation}
where  $\mathbf{u}=(u_1\cdots,u_{n})^T \in \mathbb{R}^{n}$  with time-dependent components $u_j=u_j(t)$ for $t \in [0,T]$ and $j \in [n]$, using the notation $[n] = \{1, 2, \dots, n\}$. 
The  matrices $F_M \in \mathbb{R}^{n \times n^M}$,  $F_1 \in \mathbb{R}^{n \times n}$ are time-independent.
We denote the eigenvalues of $(F_1+F_1^\dagger)/2$ by $\lambda_j$, and the dissipativity condition means that $\Re(\lambda_j) < 0$.
Denoting the maximum eigenvalue by $\lambda_0$, we require that $R<1$, where
\begin{equation}
\label{eq:R}
{R \coloneqq \frac{\norm{F_M}\cdot \uin^{M-1}}{\abs{\lambda_0}} } .
\end{equation}
The task is to output a state $\ket{\uu}$ encoding the solution to \cref{eq:nonlin-ode-problem} at time $T$.
\end{prob}

To solve \cref{defn:nonlin-ode-problem}, we first map the finite-dimensional system of nonlinear differential equations in \cref{eq:nonlin-ode-problem} to an infinite-dimensional, linear set of ODEs that can be truncated to some order $N$.
This mapping is the Carleman linearisation technique~\cite{carleman1932application}, which has previously been applied to quantum algorithms in Refs.~\cite{liu2020efficient,an2022efficient,krovi2023improved}.
Next, we show that by rescaling the linearised ODEs, we can reduce the complexity of the quantum algorithm.
This is followed by improved error bounds due to Carleman linearisation for the rescaled variable and an estimate of the overall complexity for obtaining the solution of the truncated linearised ODE. 

In contrast to Refs.~\cite{liu2020efficient,krovi2023improved}, we do not consider the driving term; on the other hand, we explore arbitrary nonlinear powers in the ODE problem rather than constrained to the quadratic case as in Refs.~\cite{liu2020efficient,krovi2023improved}. When we have an arbitrary power $M$ in the nonlinear ODE
it is more challenging to include the driving term $F_0$, because $F_0$ will produce characteristics of a more general polynomial of order $M$ as opposed to just a single component. 
Therefore, to analyse the driving term we would also need to consider a general polynomial of order $M$ for the nonlinear part of the ODE problem. We leave that considerably more complicated analysis to future work.

The solution of a linearised form of \cref{defn:nonlin-ode-problem} relies on oracles for $F_1$, $F_M$, and the initial vector.
We show in \cref{sec:ODEs}, that the complexity of the solution in terms of calls to oracles for $F_1$ and $F_M$ scales as
\begin{equation}
    \order{
    \frac{1}{\sqrt{1-R^{2/(M-1)}}}\frac{\uin}{\norm{\uu(T)}} \lambda_{F_1} T N\log({\frac{N}{\varepsilon}}) \log(\frac{ N\lambda_{F_1}T}{\varepsilon})}.
\end{equation}
In this complexity, $\varepsilon$ is the allowable error and $\lambda_{F_1}$ is the $\lambda$-value for block encoding $F_1$ (with an extra assumption on the efficiency of the block encoding of $F_M$).
An important quantity here is the Carleman order $N$, which can be chosen logarithmically in the allowable error provided $R<1$.
For the complexity in terms of calls to preparation of the initial vector, there is an extra factor of $N$, but the final log factor can be omitted, so the overall complexity is similar.
Without the rescaling, there would be an extra factor in the complexity $\order{\uin^N}$ that is exponential in $N$.
Even though $N$ can be chosen logarithmic in the other parameters, that would still result in large complexity.

The result as given in Ref.~\cite{an2022efficient} has that problem.
The complexity from Ref.~\cite{an2022efficient} is (using Eq.~(4.2) of that work and replacing $a$ in their notation with $c$ in our notation)
\begin{equation}
\label{eq:An_complex}
    \order{   
    \frac{1}{G^2\varepsilon} sT^2D^2d^2n^{4/d}N^3\norm{\mathbf{u}_{\text{in}}}^{2N} \text{poly}\left(\log(\frac{cDdMn^{1/d}NsT}{G\varepsilon})\right)
    } ,
\end{equation}
where $G$ denotes the average $\ell_2$ solution norm of the history state, and $s$ is the maximum sparsity of $F_1$, $F_M$.
The factor $\uin^{2N}$ exponential in $N$ is due to the higher-order components of the Carleman vector without rescaling.
They also have a factor of $T^2$ rather than $T$, which is due to using a simple forward Euler scheme in time.
We also give a further improvement in the polynomial factor of $N$, with our scaling being $N$ in comparison to their $N^3$.

\subsection{Carleman solver for the reaction-diffusion equation}

A large system of ODEs of the form in \cref{eq:nonlin-ode-problem} may arise from discretisation of partial differential equations. Specifically, we can derive the nonlinear differential equation resulting from the discretisation of a nonlinear reaction-diffusion PDE similar to Ref.~\cite{an2022efficient},
\begin{equation}
\label{eq:PDE1}
   \partial_t  u (\xx,t) = D \Delta u(\xx,t) + c  u(\xx,t)+ b u^{M} (\xx,t). 
\end{equation}
This equation will be stable according to a criterion that depends on the max-norm of $u(\xx,t)$, in contrast to the condition for the ODE that is based on the 2-norm.
Discretising this PDE into an ODE, the stability condition $R<1$ would be stronger and depend on the number of discretisation points.
That condition is stronger than necessary for the PDE, but after we use Carleman linearisation to give a linear ODE it requires $R<1$ for stability.
This means that the stability condition needed for the quantum algorithm is stronger than that for the original PDE.

We explore techniques of finite-difference methods with higher-order approximations for the spatial discretisation of the PDEs. Our improved nonlinear ODE solver is then applied to the reaction-diffusion equation \cref{eq:PDE1}, with $F_1$ resulting from the Laplacian discretisation and $F_M$ giving the non-linearity from the PDE. 
The overall procedure is illustrated in~\cref{fig:main_flow}. 

We then show in \cref{cor:pde-complexity} that for this PDE, the overall cost for the solution in terms of calls to the oracles that block encode $F_1$ and $F_M$ is 
\begin{equation}\label{eq:pdecomplexity-in-section-2}
\order{
    \frac{1}{\sqrt{1-R^{2/(M-1)}}}\frac{\uin}{\norm{\uu(T)}} (dDn^{2/d} + \abs{c}) T N\log({\frac{N}{\varepsilon}}) \log(\frac{ N(dDn^{2/d} + \abs{c})T}{\varepsilon})}, 
\end{equation}
where we have used $n$ gridpoints in total for the spatial discretisation of the $d$-dimensional PDE given in \cref{eq:PDE1}.

Classically, it is less useful to perform linearisation by the Carleman procedure, because the system size grows exponentially with the truncation number $N$ making the simulation prohibitively costly.
In general, explicit time-stepping methods like forward Euler or Runge-Kutta schemes do not rely on linearisation of the underlying differential equations. However, (semi-)implicit schemes which exhibit more favourable numerical stability rely on inversion of the system. This either requires linearisation (e.g., Carleman or Koopman-von-Neumann schemes) or methods to solve nonlinear systems, such as Newton-Raphson, which rely on a good initial guess and require inversion of a Jacobian matrix.

\section{Quantum Carleman solver with rescaling and improved error bounds on Carleman truncation}\label{sec:carleman}

\subsection{Background on Carleman linearisation}
\label{sec:backgound}

We start with the Carleman linearisation for the initial value problem described by the $n$-dimensional equation with a nonlinearity of order $M$ as given in \cref{eq:nonlin-ode-problem}.  We recall the dissipativity assumption on $F_1$, i.e., the eigenvalues of $(F_1+F_1^\dagger)/2$ are purely negative. The quantity $\lambda_0$, the eigenvalue closest to zero, thus gives the weakest amount of dissipation.  
This way, $R$ in \cref{eq:R} can be used to quantify the strength of the nonlinearity of the problem. As shown in Ref.~\cite{liu2020efficient}, there exists a quantum algorithm that can solve \cref{eq:nonlin-ode-problem} efficiently whenever $R<1$. Furthermore, for $R\geq \sqrt{2}$, the problem was shown to be intractable on quantum computers.

Next, we briefly outline the key idea of the Carleman linearisation. First, notice that   
\begin{equation}
\label{eq:tensM}
\mathbf{u}^{\otimes M}= (u_1^M,u_1^{M-1}u_2,\dots,u_1u_n^{M-1}, u_2u_1^{M-1},\dots,u_n^{M-1} u_{n-1},u_n^M)^T \in \mathbb{R}^{n^M}.
\end{equation} 
In particular, for $M=2$, the Kronecker product gives
\begin{equation}
\mathbf{u}^{\otimes2}= (u_1^2,u_1u_2,\dots,u_1u_n,u_2u_1,\dots,u_nu_{n-1},u_n^2)^T \in \mathbb{R}^{n^2}.
\end{equation} 
Now, define a new variable consisting of Kronecker powers of the solution vector
\begin{equation}\label{eq:carleman-vector}
\yy_1=\uu,\; \yy_2=\uu^{\otimes 2},\;\dots,\;\yy_N=\uu^{\otimes N},\dots, 
\end{equation}
which we can summarise as a vector $\yy = [\yy_1, \yy_2, \dots, \yy_N, \ldots]^T$.
If we consider the time-derivative, we can identify the time-independent matrices $F_M \in \mathbb{R}^{n \times n^M}$ and  $F_1 \in \mathbb{R}^{n \times n}$ as follows,
\begin{align}
\label{eq:Carl_line}
\dt{\mathbf{y}_j} &= \dt{\mathbf{u}^{\otimes j}} \nn 
&= \dt{\mathbf{u}}\otimes \mathbf{u} \otimes \cdots \otimes \mathbf{u} + \dots + \mathbf{u}\otimes \mathbf{u} \otimes \dots \otimes \dt{\mathbf{u}} \nn
&= (F_M \mathbf{u}^{\otimes M}) \otimes \mathbf{u} \otimes \dots \otimes \mathbf{u} + \dots + \mathbf{u}\otimes \mathbf{u} \otimes\cdots \otimes  (F_M \mathbf{u}^{\otimes M}) \nn 
&\quad+ (F_1\mathbf{u}) \otimes \mathbf{u} \otimes\cdots \otimes \mathbf{u} + \dots + \mathbf{u}\otimes \mathbf{u} \otimes\dots \otimes  (F_1  \mathbf{u}).
\end{align}
We can write this in compact form,
\begin{equation}
 \dt{\mathbf{y}_j} =   A_{j+ M - 1}^{(M)}\mathbf{y}_{j + M - 1} + A_{j}^{(1)}\mathbf{y}_{j},
\end{equation}
where $A^{(M)}_{j+M-1} \in \mathbb{R}^{n^j \times n^{j + M-1}}$ and $A^{(1)}_{j} \in \mathbb{R}^{n^j \times n^{j}}$  with
\begin{align}
\label{eq:carleman-assembly}
A^{(M)}_{j+M-1} &= F_M\otimes \identity^{\otimes (j-1)} + \identity\otimes F_M\otimes \identity^{\otimes(j-2)} + \dots + \identity^{\otimes (j-1)}\otimes F_M  \nonumber\\
 A^{(1)}_{j} &= F_1\otimes \identity^{\otimes (j-1)} + \identity\otimes F_1\otimes \identity^{\otimes(j-2)} + \dots + \identity^{\otimes (j-1)}\otimes F_1,
\end{align}
where the $\identity$ operation is the identity with the same domain as $F_1$, i.e., $\reals^{n\times n}$.

This results in an infinite-dimensional linear system, as there is no bound on the range of $j$. To make this computationally feasible, we restrict to $j\in[N]$ for some $\mathbb{N}\ni N > M$. 
Further, we can see that $N> M$ is a requirement in order to be able to capture any effects coming from a nonlinearity of order $M$.
This allows one to write down a matrix form,
\begin{equation}
\label{eq:CarlODE}
\dt{\yy} = \mathcal{A}_N\mathbf{y}, 
\end{equation}
with
\begin{equation}
\label{eq:carlmatrix}
\mathcal{A}_N=
\begin{bmatrix}
A_1^{(1)} & 0          & \cdots & 0        & A^{(M)}_{M}& 0             & \cdots       & 0          \\
0         & A_2^{(1)}  & \cdots & 0        & 0          & A^{(M)}_{M+1} & 0            & \vdots     \\
\vdots    & 0          & \ddots &          &            & 0             & \ddots       & 0          \\
          &            & \ddots & \ddots   &            &               & \ddots       & A_N^{(M)}  \\
          &            &        &          & \ddots     &               &              & 0          \\
          &            &        &          & \ddots     & \ddots        &              & \vdots     \\
\vdots    &            &        &          &            & 0             & A_{N-1}^{(1)}& 0          \\
0         & 0          & \cdots &          &            & \cdots        & 0            & A_N^{(1)}  \\
\end{bmatrix}.
\end{equation}
The matrix $\mathcal{A}_N\in \mathbb{R}^{N_{\mathrm{tot}}\times N_{\mathrm{tot}}}$ is called the  Carleman matrix with truncation order $N$, where ${N_{\mathrm{tot}}=\sum_{j=1}^N n^j} = \frac{n(n^N-1)}{n-1}$. The non-truncated, infinitely large matrix we call $\mathcal{A}$.
As the dimensionality of the system is exponential in the order of Carleman truncation (see \cref{fig:carlmatrix-snippet}), this technique tends to be intractable for practical applications on classical computers. 
\begin{figure}
    \centering
\includegraphics[width=.4\linewidth]{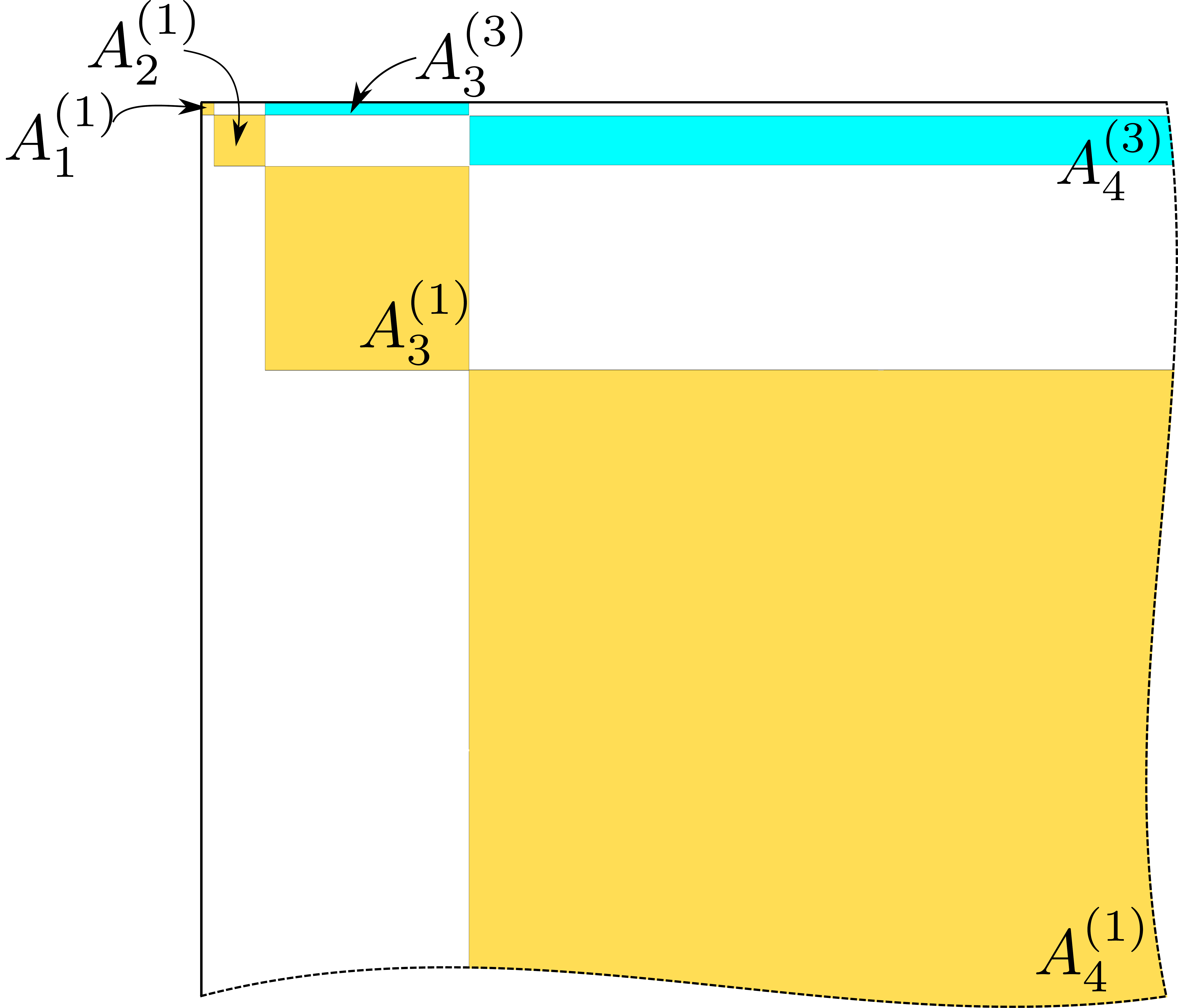}
    \caption{Depiction of a snippet of $\carlmatrix$ for $M=3$ until $N=4$. Given the exponential increase in size, only a fraction of $N=4$ is shown. The diagonal blocks correspond to linear terms of the ODE, the upper-diagonal blocks to a nonlinearity on the $(M-1)$st off-diagonal.}
    \label{fig:carlmatrix-snippet}
\end{figure}

The simple block structure of the matrix $\mathcal{A}_N$ enables us to obtain the upper bound for $\|\mathcal{A}_N\|$ in terms of the norms of the submatrix of $\mathcal{A}_N$, that is 
\begin{align}
\label{eq:norm}
\|\mathcal{A}_N\| &\leq \max_{1\le j\le N} \|A^{(1)}_{j}\| + \max_{1 \le j \le N+M-1} \|A^{(M)}_{j}\| \nn
&= N\|F_1\| + (N-M+1)\|F_M\| \, .
\end{align}
A similar relation holds for the $\lambda$-values, which is important for the estimation of the complexity of our quantum algorithm. In what follows, we present a lemma that allows us to quantify the total error involved in the Carleman truncation. Our lemma considers the error from the Carleman linearisation for the rescaled nonlinear ODE problem when we have an arbitrary power $M$ for the function, as opposed to the quadratic case without the rescaling given in \cite{liu2020efficient}.
To that end, we will first present said rescaling.

\subsection{A rescaled Carleman solver}\label{sec:rescaled_ODEs_Trunc}
We will motivate this rescaling by looking at the measurement probabilities of components in the vector $\yy = [\uu, \uu^{\otimes2}, \ldots, \uu^{\otimes N}]^T$. Recall that the sole entry we are interested in measuring will be $\yy_1\equiv \uu$.
The standard way to encode the solution $\uu(t)$ in a computational basis $\{\ket{j}\}$ is
\begin{equation}
\label{eq:targ_state}
\ket{\mathbf{u}(t)}=\sum_{j=1}^{n}u_j(t)\ket{j}.   
\end{equation}
Analogously, components $\ket{\yy_m}$ of $\yy$ are written as a quantum state as 
\begin{align}
\ket{\mathbf{y}_m(t)}& =\sum^n_{j_1,\dots,j_m  = 1}u_{j_1}(t)\cdots u_{j_m}(t)\ket{m,j_1 \cdots j_m , 0^{\otimes \log(n)(N-m)}}\nn
&=\sum^n_{j_1,\dots,j_m  = 1}u_{j_1}(t)\cdots u_{j_m}(t)\ket{y_m^{(j_1j_2\cdots j_m)}},
\end{align}
with
\begin{equation}
\ket{y_m^{(j_1j_2\cdots j_m)}} \coloneqq \ket{m,j_1 \cdots j_m , 0^{\otimes \log(n)(N-m)}}.    
\end{equation}
This follows the state encoding outlined in Appendix~3.C in Ref.~\cite{liu2020efficient}, where in each step up to the largest order $N$, extra dimensions are padded in the form of $\ket{0}$'s to avoid the structure of a superposition over components of different size. The first register is set to $m$ so we can distinguish the order by measurement of a subsystem.
Then, we can write the full vector $\ket{\yy(t)}$ as follows:
\begin{align}
\label{eq:structure-final-state}
\ket{\mathbf{y}(t)} &= \sum_{j_1=1}^{n} u_{j_1}(t)\ket{y_1^{(j_1)}} + \sum_{j_1,j_2=1}^{n}  u_{j_1}(t) u_{j_2}(t)\ket{y_2^{(j_1j_2)}} + \cdots +\sum_{j_1,\dots,j_N=1}^{n} u_{j_1}(t)\dots \bar{u}_{j_N}(t)\ket{y_N^{(j_1j_2\cdots j_N)}}.
\end{align}
For a normalised quantum state, the amplitudes $u_{j_l}(t)$ in \cref{eq:structure-final-state} need to be normalised so that $\braket{\yy}=1$.
We then have to consider the normalisation factor $1/\sqrt{V_N}$ where
\begin{equation}\label{eq:normalis}
    V_N=\norm{\yy}^2 = \sum_{\ell=1}^N \|\uu(t)\|^{2\ell} = \|\uu(t)\|^2 \frac{1-\|\uu(t)\|^{2N}}{1-\|\uu(t)\|^2} \, .
\end{equation}
Note that this formula does not work in the case that $\|\uu(t)\|=1$.
We therefore adopt the convention that wherever there appears a ratio of this form, for $\|\uu(t)\|=1$ it takes the value in the limit $\|\uu(t)\|\to 1$, so
\begin{equation}
    \|\uu(t)\|^2 \frac{1-\|\uu(t)\|^{2N}}{1-\|\uu(t)\|^2} \to N \, .
\end{equation}

The solution of the nonlinear ODE is given by the first component, where the probability is given by
\begin{equation}
 {P}(\yy_1(t)) = \sum_{j_1=1}^n\left|\braket{y_1^{(j_1)}}{\yy(t)}\right|^2 = \frac{1}{V_N} \sum^n_{j_1=1}|{u}_{j_1}(t)|^2=\frac{1-\norm{\uu(t)}^2}{1-\norm{\uu(t)}^{2N}} .
\end{equation}
From this equation, we see that as we increase the Carleman truncation order we also increase $V_N$, which suppresses the probability of extracting the desired component. This brings an exponential cost in $N$ for the algorithm due to the $\order{1/\sqrt{{P}(\yy_1(t))}}$ rounds of amplitude amplification needed at the end. To avoid this high cost in the algorithm, we propose the following rescaling, which can significantly reduce the cost of amplitude amplification.
\begin{definition}[Rescaled Carleman problem]\label{defn:rescaled-carleman}
     Consider a nonlinear ODE system of the form $\dt{\uu} = F_1\uu + F_M \uu^{\otimes M}$ as in \cref{defn:nonlin-ode-problem}. Then, using a variable transformation in the form of a rescaling $\widetilde{\uu}= \uu/\gamma$ with $\gamma>0$, we obtain another system in the rescaled variable 
    \begin{equation}
    \label{eq:resc_ODE}
        \dt{\widetilde{\uu}} = \widetilde{F}_1 \widetilde{\uu} + \widetilde{F}_M \widetilde{\uu}^{\otimes M} , \quad 
    \end{equation}
    with $\widetilde{F}_1 = F_1$ and $\widetilde{F}_M = \gamma^{M-1} F_M$.
\end{definition}
This allows us to improve the measurement probability in the following sense.
\begin{lemma}[Measurement probability of the rescaled Carleman problem]\label{lem:rescaled-carleman}
    Using the rescaling in \cref{defn:rescaled-carleman}, using a scaling factor $\gamma\ge\norm{\uu_\mathrm{in}}$ and assuming dissipativity of the ODE, the probability to measure $\widetilde{\uu}=\widetilde{\yy}_1$ is given by 
    \begin{equation}\label{eq:rescaled-probability-1}
        P(\widetilde{\yy}_1(t))  =
        \frac{1 - \frac{\norm{\uu(t)}^2}{\gamma^2}}{1 - \left(\frac{\norm{\uu(t)}}{\gamma}\right)^{2 N} } \ge \frac 1N. 
    \end{equation}
    \begin{proof}
    Using the rescaling $\gamma>0$, we obtain a new normalisation
    \begin{equation}
        \widetilde{V}_N = \sum_{l = 1}^N \left(\frac{\norm{\uu(t)}}{\gamma}\right)^{2l} = \norm{\widetilde{\uu}(t)}^2 \frac{1 - \norm{\widetilde{\uu}(t)}^{2 N}}
        {1 - \norm{\widetilde{\uu}(t)}^2} \, ,
    \end{equation}
    with $\norm{\widetilde{\uu}(t)} = {\norm{\uu(t)}}/{\gamma}$.
    Given dissipativity of the ODE, we have $\norm{\uu(t)}\le \norm{\uu_\mathrm{in}}$, so $\norm{\widetilde{\uu}(t)}\le 1$.
    In turn that implies
    \begin{equation}
        \frac{1 - \norm{\widetilde{\uu}(t)}^{2 N}}
        {1 - \norm{\widetilde{\uu}(t)}^2} \le N.
    \end{equation}
    The measurement probability to obtain $\widetilde{\yy}_1(t)$ is then
    \begin{equation}\label{eq:rescaled-probability}
        P(\widetilde{\yy}_1(t)) = \sum_{j_1=1}^n\left|\braket{y_1^{(j_1)}}{\widetilde{\yy}(t)}\right|^2 
        = \frac{\norm{\widetilde{\uu}(t)}^{2}}{\widetilde{V}_N} = \frac
        {1 - \norm{\widetilde{\uu}(t)}^2}{1 - \norm{\widetilde{\uu}(t)}^{2 N}} \ge \frac 1N \, .
    \end{equation}
\end{proof}
\end{lemma}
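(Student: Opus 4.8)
The plan is to reduce the claim to an elementary computation with a partial geometric series, where the only non-trivial input is the uniform control of $\norm{\uu(t)}$ supplied by dissipativity. First I would write out the rescaled Carleman state $\widetilde{\yy} = [\widetilde{\uu}, \widetilde{\uu}^{\otimes 2}, \ldots, \widetilde{\uu}^{\otimes N}]^T$ and compute its squared norm. Because $\norm{\widetilde{\uu}^{\otimes \ell}} = \norm{\widetilde{\uu}}^{\ell}$, the normalisation is the partial sum $\widetilde{V}_N = \sum_{\ell=1}^N \norm{\widetilde{\uu}(t)}^{2\ell}$, which I would close into $\norm{\widetilde{\uu}}^2 (1 - \norm{\widetilde{\uu}}^{2N})/(1 - \norm{\widetilde{\uu}}^2)$ via the standard geometric-series identity, using $\norm{\widetilde{\uu}(t)} = \norm{\uu(t)}/\gamma$ from \defn{rescaled-carleman}.

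Next I would identify the measurement probability. Since the first block $\widetilde{\yy}_1 = \widetilde{\uu}$ contributes squared norm $\norm{\widetilde{\uu}(t)}^2$, projecting onto the span of the states $\ket{y_1^{(j_1)}}$ and dividing by the full normalisation gives $P(\widetilde{\yy}_1(t)) = \norm{\widetilde{\uu}(t)}^2/\widetilde{V}_N$. Substituting the closed form cancels the leading $\norm{\widetilde{\uu}}^2$ and yields $(1 - \norm{\widetilde{\uu}}^2)/(1 - \norm{\widetilde{\uu}}^{2N})$; reinstating $\norm{\widetilde{\uu}} = \norm{\uu}/\gamma$ then reproduces the closed-form expression in \cref{eq:rescaled-probability-1} exactly.

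The lower bound is the one step where the hypotheses genuinely enter. I would invoke dissipativity to conclude $\norm{\uu(t)} \le \uin$ uniformly in $t$, which combined with the choice $\gamma \ge \uin$ forces $\norm{\widetilde{\uu}(t)} \le 1$. Setting $x = \norm{\widetilde{\uu}(t)}^2 \in [0,1]$, the reciprocal of the probability is $(1 - x^N)/(1 - x) = 1 + x + \cdots + x^{N-1}$, a sum of $N$ terms each at most $1$ and therefore bounded by $N$; inverting gives $P(\widetilde{\yy}_1(t)) \ge 1/N$. I do not expect a genuine obstacle here: the lemma is essentially a normalisation calculation, and the only point requiring care is recognising that dissipativity of the original nonlinear ODE yields the uniform 2-norm bound $\norm{\uu(t)} \le \uin$, which is precisely what makes $\norm{\widetilde{\uu}} \le 1$ and hence keeps the geometric factor below $N$.
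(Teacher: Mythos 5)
Your proposal is correct and follows essentially the same route as the paper's proof: compute $\widetilde{V}_N$ as a partial geometric series, identify $P(\widetilde{\yy}_1(t)) = \norm{\widetilde{\uu}(t)}^2/\widetilde{V}_N$, and use dissipativity together with $\gamma \ge \uin$ to get $\norm{\widetilde{\uu}(t)} \le 1$ and hence the bound $1/N$. Your explicit expansion $(1-x^N)/(1-x) = 1 + x + \cdots + x^{N-1} \le N$ is just a spelled-out version of the inequality the paper states directly.
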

Therefore, using the parameter $\gamma$, we can adjust the probability to obtain $\widetilde{\yy}_1$.
Here we have taken $\gamma\ge\norm{\uu_\mathrm{in}}$, though the first expression does not depend on this assumption.
The probability is equal to $1/N$ if $\gamma=\norm{\uu_\mathrm{in}}=\norm{\uu(t)}$, and otherwise for $\gamma>\norm{\uu(t)}$ the probability is even better.
Thus the rescaling avoids the exponential (in $N$) suppression of the probability of obtaining the component of interest of the ODE problem, which occurs for $\norm{\uu(t)}>1$ without rescaling.

When we apply the rescaling above into \cref{eq:carleman-assembly} we obtain a linearised system in the rescaled solution vector with $\widetilde{A}_j^{(1)} = A_j^{(1)}$ and $\widetilde{A}^{(M)}_{j+M-1} = \gamma^{M-1} A^{(M)}_{j+M-1}$, and as a result we can write the rescaled Carleman linearisation as
\begin{equation}
\dt{\widetilde{\yy}} = \widetilde{\mathcal{A}}_N\widetilde{\mathbf{y}}, 
\end{equation}
where 
\begin{equation}\label{eq:resccarlmatrix}
\widetilde{\mathcal{A}}_N=
\begin{bmatrix}
A_1^{(1)} & 0          & \cdots & 0        & \gamma^{M-1}A^{(M)}_{M}& 0             & \cdots       & 0          \\
0         & A_2^{(1)}  & \cdots & 0        & 0          & \gamma^{M-1}A^{(M)}_{M+1} & 0            & \vdots     \\
\vdots    & 0          & \ddots &          &            & 0             & \ddots       & 0          \\
          &            & \ddots & \ddots   &            &               & \ddots       & \gamma^{M-1}A_N^{(M)}  \\
          &            &        &          & \ddots     &               &              & 0          \\
          &            &        &          & \ddots     & \ddots        &              & \vdots     \\
\vdots    &            &        &          &            & 0             & A_{N-1}^{(1)}& 0          \\
0         & 0          & \cdots &          &            & \cdots        & 0            & A_N^{(1)}  \\
\end{bmatrix} .
\end{equation}
We discuss the cost of an implementation of the rescaled dynamics in \cref{sec:ODEs}.

\subsection{Error bounds on rescaled solution}

Next, we present error bounds on the global and component-wise errors due to Carleman linearisation in \cref{lem:error_Glob} and \cref{lem:error_Comp}, where we make use of the rescaling technique outlined in the previous section.
The first lemma provides a bound on the overall error in the Carleman vector. The error bounds we present here are based on the $2$-norm.

\begin{lemma}[Global rescaled Carleman error]
\label{lem:error_Glob}
Consider the ODE from \cref{eq:nonlin-ode-problem} with its Carleman linearisation in \cref{eq:Carl_line} truncated at order $N$. Let  $F_1$ be dissipative, so that for $\lambda_{0}<0$  with $|\lambda_{0}| > \uin^{M-1}\|F_M\|$ and therefore $\|\widetilde{\mathbf{u}}_{\mathrm{in}}\| \geq \|\widetilde{\uu}(t)\|$ for $t>0$.  Then, the error in the rescaled solution as defined in \cref{lem:rescaled-carleman} is given by $\eta_j=\widetilde{\uu}^{\otimes j}-\widetilde{\mathbf{y}}_j$ at order $j\in[N]$ due to Carleman truncation $N>M\geq 2$  and a scaling factor $\gamma=\|\uu_{\mathrm{in}}\|$; $\widetilde{\uu}$ denotes the exact solution to the underlying ODE whereas $\widetilde{\yy}$ is the approximation due to Carleman truncation. Then, this error for any $j\in[N]$ is upper bounded by the overall error vector,
\begin{equation}
\|\mathbf{\eta}_j(t)\| \leq \|\mathbf{\eta}(t)\| \leq (M-1)\|F_M\| \|\uu_{\mathrm{in}}\|^{M-1}\frac{1-e^{N(\lambda_0 + \gamma^{M-1}\|F_M\|)t}}{|\lambda_0+\gamma^{M-1}\|F_M\||}. 
\end{equation}
\end{lemma}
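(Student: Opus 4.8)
The plan is to turn the truncation error into the solution of a linear inhomogeneous ODE and solve it by Duhamel's principle, the key point being that truncation inserts a source confined to the top $M-1$ blocks. First I would write the exact Carleman relation $\frac{d}{dt}\widetilde{\uu}^{\otimes j} = A_j^{(1)}\widetilde{\uu}^{\otimes j}+\widetilde{A}^{(M)}_{j+M-1}\widetilde{\uu}^{\otimes(j+M-1)}$, which holds for every $j$ since it is merely the product rule applied to the tensor power and is independent of the truncation, and subtract the truncated equation $\frac{d}{dt}\widetilde{\yy}_j=A_j^{(1)}\widetilde{\yy}_j+\widetilde{A}^{(M)}_{j+M-1}\widetilde{\yy}_{j+M-1}$, valid for $j\le N-M+1$, whereas for $j>N-M+1$ the coupling block is dropped. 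This gives $\frac{d}{dt}\eta=\widetilde{\mathcal{A}}_N\eta+\mathbf{b}$ with $\eta(0)=0$, where $\mathbf{b}_j=\widetilde{A}^{(M)}_{j+M-1}\widetilde{\uu}^{\otimes(j+M-1)}$ for $j\in\{N-M+2,\dots,N\}$ and $\mathbf{b}_j=0$ otherwise, so $\eta(t)=\int_0^t e^{\widetilde{\mathcal{A}}_N(t-s)}\mathbf{b}(s)\,ds$. (Equivalently one may expand the propagator into nested integrals along the coupling chains, which is the ``repeated integral'' viewpoint; the single-step estimate below is cleaner.)

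Two ingredients then control this integral. The first is that $\widetilde{\mathcal{A}}_N$ is contractive, $\norm{e^{\widetilde{\mathcal{A}}_N\tau}}\le1$ for $\tau\ge0$, equivalently that its logarithmic norm is nonpositive. I would prove this from the quadratic form of the symmetric part: the diagonal block $A_k^{(1)}$ has symmetric part with top eigenvalue $k\lambda_0$, while each off-diagonal coupling contributes at most $\norm{\widetilde{A}^{(M)}_{j+M-1}}\le j\gamma^{M-1}\norm{F_M}$. Splitting every cross term by $\norm{\eta_j}\norm{\eta_{j+M-1}}\le\tfrac12(\norm{\eta_j}^2+\norm{\eta_{j+M-1}}^2)$ and reindexing, the coefficient of $\norm{\eta_k}^2$ is at most $k(\lambda_0+\gamma^{M-1}\norm{F_M})$, which is negative precisely because $R<1$, i.e.\ $|\lambda_0|>\uin^{M-1}\norm{F_M}=\gamma^{M-1}\norm{F_M}$. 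Hence $\langle\eta,\widetilde{\mathcal{A}}_N\eta\rangle\le0$ and $\tfrac{d}{dt}\norm{\eta}\le\norm{\mathbf{b}(t)}$. The second ingredient is the decay of the source, and here lies the origin of the factor $N$ in the exponent: $\mathbf{b}$ is built from the exact tensor power $\widetilde{\uu}^{\otimes(j+M-1)}$ with $j+M-1\ge N$, so $\norm{\mathbf{b}_j}\le j\gamma^{M-1}\norm{F_M}\norm{\widetilde{\uu}(s)}^{j+M-1}\le N\gamma^{M-1}\norm{F_M}\norm{\widetilde{\uu}(s)}^{N}$, and a Gr\"onwall argument on the rescaled ODE gives $\norm{\widetilde{\uu}(s)}\le e^{(\lambda_0+\gamma^{M-1}\norm{F_M})s}$, using $\norm{\widetilde{\uu}}\le1$ to replace $\norm{\widetilde{\uu}}^{M-1}$ by $1$. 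Summing the $M-1$ nonzero blocks through $\norm{\mathbf{b}}\le\sum_j\norm{\mathbf{b}_j}$ yields $\norm{\mathbf{b}(s)}\le(M-1)N\gamma^{M-1}\norm{F_M}\,e^{N(\lambda_0+\gamma^{M-1}\norm{F_M})s}$.

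Combining, $\norm{\eta(t)}\le\int_0^t\norm{\mathbf{b}(s)}\,ds$, and the elementary integral of the exponential produces exactly $(M-1)\norm{F_M}\uin^{M-1}(1-e^{N(\lambda_0+\gamma^{M-1}\norm{F_M})t})/|\lambda_0+\gamma^{M-1}\norm{F_M}|$ after using $\gamma=\uin$; the bound on $\norm{\eta_j}$ follows from $\norm{\eta_j}\le\norm{\eta}$. I expect the contractivity step to be the main obstacle: one must verify that the dissipation of the diagonal blocks dominates the nonlinear coupling uniformly in the block index, which is where $R<1$ enters and where the tensor structure of $A_j^{(1)}$ and $A^{(M)}_{j+M-1}$ must be handled with care. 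The secondary subtlety is recognising that the sharp $N$ in the exponent comes not from the propagator $e^{\widetilde{\mathcal{A}}_N\tau}$ (whose logarithmic norm is only of size $\lambda_0+\gamma^{M-1}\norm{F_M}$, with no factor of $N$) but from the $N$-fold tensor decay of the exact solution feeding the source; bounding the propagator merely by $1$ is what lands the estimate on the stated closed form.
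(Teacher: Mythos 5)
Your proposal is correct and reproduces the stated bound exactly, but the endgame differs from the paper's proof in an instructive way. The paper sets up the identical error system $\dt{\eta}=\widetilde{\mathcal{A}}_N\eta+\mathbf{b}$ with the source confined to blocks $j\in\{N-M+2,\dots,N\}$ (their Eqs.~\eqref{eq:elem_mat}--\eqref{eq:compb}), but then runs a differential inequality on $\norm{\eta}$: it bounds the source \emph{uniformly in time} by $N(M-1)\norm{F_M}\uin^{M-1}$, using only $\norm{\uu(t)}\le\uin$ and not its decay, and assigns the decay rate $N(\lambda_0+\gamma^{M-1}\norm{F_M})$ to the homogeneous part via Eq.~\eqref{eq:boundMat}, solving the scalar ODE in Eq.~\eqref{eq:pde_sol} to land on the same closed form. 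Your route — propagator bounded by $1$ through nonpositive logarithmic norm, with the exponent $N(\lambda_0+\gamma^{M-1}\norm{F_M})$ supplied instead by the tensor-power decay $\norm{\widetilde\uu(s)}^{N}\le e^{N(\lambda_0+\gamma^{M-1}\norm{F_M})s}$ of the source — is not only valid but arguably sounder. The paper's Eq.~\eqref{eq:boundMat} is obtained by evaluating the block-Gershgorin estimate $2j\lambda_0+(2j-M+1)\gamma^{M-1}\norm{F_M}$ at $j=N$; but since $\lambda_0+\gamma^{M-1}\norm{F_M}<0$, that per-block estimate is \emph{decreasing} in $j$, so the maximum eigenvalue of the symmetric part is governed by the $j=1$ block and the logarithmic norm of $\widetilde{\mathcal{A}}_N$ carries no factor of $N$ — precisely the observation in your closing remark. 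Your contractivity-plus-decaying-source argument reaches the stated bound without that step, and it is the global-norm analogue of the mechanism that the paper's component-wise refinement, \cref{lem:error_Comp}, makes rigorous via nested integrals along the coupling chain. When writing this up, make explicit the two hypotheses your Gr\"onwall step consumes: $\norm{\widetilde\uu(0)}=1$ (i.e., $\gamma=\uin$), and dissipativity guaranteeing $\norm{\widetilde\uu(s)}\le 1$ so that both $\norm{\widetilde\uu}^{M}\le\norm{\widetilde\uu}$ in the differential inequality and $\norm{\widetilde\uu}^{j+M-1}\le\norm{\widetilde\uu}^{N}$ for $j\ge N-M+2$ are legitimate.
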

The detailed proof is presented in \cref{app:proof_total_error}. 
Related results were given in Ref.~\cite{liu2020efficient} and Ref.~\cite{krovi2023improved}.
Neither included a general power for the nonlinearity, and were restricted to $M=2$.
Furthermore, we provide an exponential reduction in the Carleman order dependence due to the rescaling, i.e., $\|\eta\|\propto \|\mathbf{u}_{\mathrm{in}}\|^{M}$ in opposed to $\|\mathbf{u}_{\mathrm{in}}\|^N$.
Although Ref.~\cite{krovi2023improved} mentioned rescaling, it appears not to have been used in the error analysis.
If the rescaled form was being used in that work, then it would imply that $\|\mathbf{u}_{\rm in}\|$ would be equal to 1, so $\log(1/\|\mathbf{u}_{\rm in}\|)=0$ which results in $N$ being infinite in Eq.~(7.23) of Ref.~\cite{krovi2023improved}. 

A problem with using this form is that it does not go down with the Carleman order.
We aim to show that the error may be made arbitrarily small with higher-order Carleman approximations. We can provide tighter bounds when we consider the individual components of the Carleman vector, as in the following lemma.

\begin{lemma}[Component-wise Carleman error]
\label{lem:error_Comp}
Under the same setting as in \cref{lem:error_Glob}, and $j \in [N]$, the Carleman error for each individual component of $\mathbf{\eta}_j$ satisfies 
\begin{equation}
 \|\eta_j(t)\| \leq \left( \frac{\|\uu_{\mathrm{in}}\|}{\gamma} \right)^j R^k  f_{j,k,M}\left(|\lambda_0|t\right),\qquad j\in \Omega_k
\end{equation}
where
\begin{equation}\label{eq:function-from-nested-integrals}
    f_{j,k,M}(\tau) = 1-\frac{(M-1) \, \Gamma(k+j/(M-1))}{(k-1)! \, \Gamma(j/(M-1))}\sum_{\ell=0}^{k-1} (-1)^\ell \binom{k-1}{\ell} \frac{e^{-(\ell M-\ell+j) \tau}}{\ell M-\ell+j} \, ,
\end{equation}
for  $k\in \{1, 2,\cdots, \lceil\frac{N}{M-1} \rceil\}$ and $k$ is determined so that for any $j$, we have $k$ whenever $j$ falls into the index set $j\in \Omega_k$ with
\begin{equation}
 \Omega_k \coloneqq \{N-k(M-1)+1,\ldots, N+(k-1)(1-M)\}.   
\end{equation}
In particular, for $k=\lceil N/(M-1) \rceil$ we have
\begin{equation}
\label{eq:error1}
\|\eta_1(t)\| \leq \frac{\|\uu_{\mathrm{in}}\|}{\gamma}  R^{ \lceil\frac{N}{M-1} \rceil} f_{1,\lceil N/(M-1) \rceil,M}\left(|\lambda_0|t\right). 
\end{equation}
\end{lemma}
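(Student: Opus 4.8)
The plan is to set up a differential inequality for the norm of each error component $\eta_j = \widetilde\uu^{\otimes j} - \widetilde\yy_j$, then solve the resulting hierarchy of coupled integral inequalities by induction on the index $k$ that labels how far $j$ sits from the truncation boundary $N$. First I would write down the exact ODE satisfied by the true tensor power $\widetilde\uu^{\otimes j}$ versus the truncated variable $\widetilde\yy_j$. Subtracting the two, the diagonal (linear) block $\widetilde A_j^{(1)}$ acts on $\eta_j$ itself, while the off-diagonal block $\widetilde A_{j+M-1}^{(M)}$ couples $\eta_j$ to $\eta_{j+M-1}$. The crucial structural observation is that for $j > N-(M-1)$ the coupling term $\widetilde\yy_{j+M-1}$ is \emph{absent} (it lies beyond the truncation), so those top components are driven purely by the neglected true tensor power $\widetilde\uu^{\otimes(j+M-1)}$; this is the inhomogeneous source that seeds the whole error and explains the index sets $\Omega_k$.

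Next I would convert each differential inequality into an integral form using the dissipativity bound. Because $F_1$ is dissipative, the diagonal block $\widetilde A_j^{(1)} = A_j^{(1)}$ contributes a factor whose logarithmic norm is bounded by $j\lambda_0 < 0$, giving an exponential damping $e^{j\lambda_0 t}$ from the homogeneous part. Applying the variation-of-constants (Duhamel) formula and taking norms, I obtain
\begin{equation}
\|\eta_j(t)\| \le \int_0^t e^{j\lambda_0(t-s)}\,\bigl\|\widetilde A_{j+M-1}^{(M)}\bigr\|\,\|\eta_{j+M-1}(s)\|\,\dd s + (\text{boundary source}),
\end{equation}
where $\|\widetilde A_{j+M-1}^{(M)}\| \le (M-1)\gamma^{M-1}\|F_M\|$ roughly (from the sum of $M-1$ tensor-embedded copies of $F_M$ acting on the error, the leading term). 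The source term for the top layer $k=1$ comes from bounding the neglected $\widetilde\uu^{\otimes(j+M-1)}$ by $(\|\uu_{\mathrm{in}}\|/\gamma)^{j+M-1}$ using the dissipativity hypothesis $\|\widetilde\uu(t)\|\le\|\widetilde\uu_{\mathrm{in}}\|\le 1$. Each integration up one layer of the hierarchy (decreasing $j$ by $M-1$, increasing $k$ by one) multiplies the bound by a factor of $R = \|F_M\|\|\uu_{\mathrm{in}}\|^{M-1}/|\lambda_0|$ and produces one more nested time-integral. The function $f_{j,k,M}$ in \cref{eq:function-from-nested-integrals} is precisely the closed form of these $k$ nested integrals of products of exponentials $e^{-(\ell M-\ell+j)|\lambda_0|s}$; the binomial-with-Gamma-function structure is exactly what one gets from repeatedly integrating $\int_0^t e^{-a(t-s)} g(s)\,\dd s$ against the previous layer's exponential sum, so I would verify the formula by induction on $k$, checking the base case $k=1$ directly and confirming the recursion reproduces the $\Gamma(k+j/(M-1))/((k-1)!\,\Gamma(j/(M-1)))$ prefactor.

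The specialisation \cref{eq:error1} to $\eta_1$ then follows by iterating the hierarchy all the way down from the top layer to $j=1$, which requires $k=\lceil N/(M-1)\rceil$ integrations, yielding the factor $R^{\lceil N/(M-1)\rceil}$ and the prefactor $\|\uu_{\mathrm{in}}\|/\gamma$ from the single remaining copy. The main obstacle I anticipate is \emph{bookkeeping the nested integrals in closed form}: establishing that $k$-fold Duhamel integration of the exponential sources yields exactly the Gamma-function-weighted alternating binomial sum in \cref{eq:function-from-nested-integrals}, rather than merely a crude bound. Doing this cleanly means proving the integral recursion
\begin{equation}
\int_0^t e^{-a(t-s)} f_{j+M-1,\,k-1,M}(|\lambda_0|s)\,\dd s
\end{equation}
reproduces $f_{j,k,M}$ up to the prefactors, which is delicate because the exponents $\ell M - \ell + j$ shift under the index change $j \mapsto j+M-1$ and one must track how the binomial coefficients and the ratio of Gamma functions transform. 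A secondary subtlety is justifying that the homogeneous damping can legitimately use the scalar bound $j\lambda_0$ uniformly, which relies on the dissipativity of $A_j^{(1)}$ inherited additively from $F_1$; I would handle this via the logarithmic-norm (matrix measure) estimate applied blockwise, noting the rescaling leaves $\widetilde A_j^{(1)}=A_j^{(1)}$ unchanged so the damping is unaffected by $\gamma$.
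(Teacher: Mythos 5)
Your proposal is correct in substance and follows essentially the same route as the paper's proof: the error components satisfy $\frac{\dd}{\dd t}\eta_j = A_j^{(1)}\eta_j + \gamma^{M-1}A_{j+M-1}^{(M)}\eta_{j+M-1}$ for $j\le N-M+1$, while for $j>N-M+1$ the coupling is replaced by the inhomogeneous source $\gamma^{-j}A_{j+M-1}^{(M)}\mathbf{u}^{\otimes(j+M-1)}$, exactly the structural observation you make about the top layers seeding the error. The paper then applies Duhamel with the dissipativity bound $\|e^{A_j^{(1)}t}\|\le e^{j\lambda_0 t}$, establishes the base case $k=1$ on $\Omega_1$ with $f_{j,1,M}(\tau)=1-e^{-j\tau}$, and inducts downward through the layers $\Omega_k$ via the integral recurrence $f_{j,k,M}(\tau)=j\int_0^\tau e^{-j(\tau-\tau_1)}f_{j+M-1,k-1,M}(\tau_1)\,\dd\tau_1$; the closed form with the Gamma-function and alternating-binomial structure is proven separately by induction after exchanging the order of the nested integrals, precisely the verification strategy you outline.

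One slip to correct: the off-diagonal block $A_{j+M-1}^{(M)}$ is a sum of $j$ (not $M-1$) tensor-embedded copies of $F_M$ — one for each factor of the $j$-fold tensor product, since $A_{j+M-1}^{(M)} = F_M\otimes\identity^{\otimes(j-1)} + \cdots + \identity^{\otimes(j-1)}\otimes F_M$ — so the correct bound is $\|A_{j+M-1}^{(M)}\|\le j\|F_M\|$, not $(M-1)\gamma^{M-1}\|F_M\|$. This factor of $j$ is load-bearing: it cancels exactly against the $1/(j|\lambda_0|)$ arising from $\int_0^t e^{j\lambda_0(t-s)}\,\dd s$, which is what makes each layer of the hierarchy contribute precisely one factor of $R$ and yields the normalised recurrence above with $f_{j,k,M}\le 1$. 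With your prefactor the recursion would fail to reproduce the claimed Gamma prefactor (you would discover this in the consistency check you propose), so this is a correctable bookkeeping error rather than a structural flaw in the argument.
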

The proof of \cref{lem:error_Comp} can be found in \cref{app:proof_comp}.
The function $f_{j,k,M}(\tau)$ is monotonically decreasing with $k$, and in particular $f_{j,k,M}(\tau)\le f_{j,1,M}(\tau) = 1-e^{-j\tau}$ (see \cref{app:proof_comp}).
This result does not depend on the choice of rescaling $\gamma$.
There is a factor of $1/\gamma^j$ in the definition of $\eta_j$, so the result is effectively independent of the choice of rescaling.
Moreover, $\|\eta_1(t)\|$ gives the error in the desired component at the end, and shows that the error in $\uu$ is proportional to $\|\uu_{\mathrm{in}}\|$.

A similar result was provided in Ref.~\cite{an2022efficient} without using the rescaling, though that does not affect the result for the error.
We give a significant improvement over the result in Ref.~\cite{an2022efficient} by evaluating the nested integrals to give the function $f_{j,k,M}(\tau)$, whereas the result in Ref.~\cite{an2022efficient} just corresponds to replacing $f_{j,k,M}(\tau)$ with its upper bound of 1.

We can use \cref{lem:error_Comp} to solve for a lower bound on $N$ for a given allowable error.
In practice, we are interested in the error in the solution relative to $\uin$ rather than $\gamma$, so we aim to bound $\|\eta_{1}(t)\|\gamma/\uin$.
Given a maximum allowable error $\varepsilon$, we then require
\begin{align}
 \varepsilon &\geq R^{ \lceil\frac{N}{M-1} \rceil} f_{1,\lceil N/(M-1) \rceil,M}\left(|\lambda_0|t\right) \geq R^{ \lceil\frac{N}{M-1} \rceil}\, .
\end{align}
It is therefore sufficient to choose $N$ as
\begin{equation}\label{eq:N-for-error}
\Big\lceil\frac{N}{M-1} \Big\rceil \geq \frac{\log\left(1/\varepsilon\right)}{\log\left(1/R\right)},
\end{equation}
or
\begin{equation}\label{eq:N-for-error2}
    N= (M-1) \left \lceil     \frac{\log\left(1/\varepsilon\right)}{\log\left(1/R\right)}
    \right\rceil - (M-2) \, .
\end{equation}
We can also numerically solve for $N$, by using the exact expression for $f_{j,k,M}(\tau)$ given in \cref{eq:function-from-nested-integrals}.
That will give a tighter lower bound on $N$, but there is not a closed-form expression.

\section{Solution of the linearised system of ordinary differential equations using a truncated Taylor series}\label{sec:ODEs}
Next, we describe how to solve the system of ODEs that results from the Carleman mapping applied onto the nonlinear system. 
The most simple way to solve the system of ODEs is to apply the first-order method for time discretisation known as the explicit Euler method. Upon application of the Euler method, there is a linear system of equations that can be solved. Here, this is a quantum linear system problem~(QLSP), as the solution is encoded in a quantum state. 
In what follows, we aim to solve the linear system by a more sophisticated method than explicit Euler.
The main drawback of the forward Euler method is low accuracy since it is a first-order method, meaning finer time discretisation is required to achieve a required precision. 
As a result, the dependence of the complexity for solving the QLSP is quadratic in the solution time, and there is a near-linear factor in the inverse error \cite{liu2020efficient,an2022efficient}.

Here, we follow the procedure outlined in Ref.~\cite{berry2022quantum}, which allows us to obtain an algorithm that has complexity near-linear in time and logarithmic in the inverse error.
The solution of a time-independent ODE system
\begin{equation}
\dt{\mathbf{u}(t)} = A\mathbf{u}(t) , 
\end{equation}
may be approximated by $\uu_K(t) =  W_K(t,t_0) \uu(t_0)$, with
\begin{align}
    W_K(t,t_0) &:=\sum_{\ell=0}^K \frac{(A\Delta t)^\ell}{\ell !} .
\end{align}
This is a Taylor series truncated at order $K$.
The error in the solution due to time propagation can be bounded as
\begin{equation}
\| \mathbf{u}_K(t) - \mathbf{u}(t) \| \in  \mathcal{O}\left(
\frac{(\norm{A}\Delta t)^{K+1}}{(K+1)!}\norm{\uu(t_0)} 
\right).
\end{equation}
We aim to solve \cref{eq:CarlODE} where the vector $\mathbf{u}(t)$ is mapped to a rescaled vector $\widetilde\yy(t)$ and $A$ is the rescaled Carleman matrix $\widetilde{\mathcal{A}}_{N}$ truncated at order $N$.

Following Theorem~2 in \cite{berry2022quantum}, there exists a quantum algorithm that can provide an approximation $\ket{\hat{\mathbf{y}}}$ of the solution $\ket{\widetilde{\mathbf{y}}(T)}$ satisfying $\left\| \ket{\hat{\mathbf{y}}} - \ket{\widetilde{\mathbf{y}}(T)}\right\| \le \varepsilon y_{\max}$. To do so, we require that $\carlmatrix$ has non-positive logarithmic norm and we have the oracles 
$U_y$ to prepare the
initial state and block encoding of $\carlmatrix$ via
$U_{\widetilde{\mathcal{A}}_N}$ with $\bra{0}U_{\widetilde{\mathcal{A}}_N}\ket{0} =
\widetilde{\mathcal{A}}_N/\lambda_{\widetilde{\mathcal{A}}_N}$.
Then, to achieve the desired accuracy, the average number of calls to $U_y$ and $U_{\widetilde{\mathcal{A}}_N}$ needed are
\begin{align}
\label{eq:Avg_calls}
    U_y: & \quad \order{ \widetilde{\mathcal{R}} \lambda_{\widetilde{\mathcal{A}}_N} T \log(\frac{1}{\varepsilon})   } \\
    \label{eq:Avg_calls2}
    U_{\widetilde{\mathcal{A}}_N}: & \quad \order{ \widetilde{\mathcal{R}}\lambda_{\widetilde{\mathcal{A}}_N} T \log(\frac{1}{\varepsilon})  \log(\frac{\lambda_{\widetilde{\mathcal{A}}_N} T}{\varepsilon})} \, .  
\end{align}
Furthermore, the number of additional elementary gates scales as
 \begin{equation}
 \label{eq:ode-extra-gates}
     \order{ \widetilde{\mathcal{R}}\lambda_{\widetilde{\mathcal{A}}_N} T \log(\frac{1}{\varepsilon})  \log^2\left(\frac{\lambda_{\widetilde{\mathcal{A}}_N} T}{\varepsilon}\right)} \, .
\end{equation}
In these expressions
\begin{align}\label{eq:complexity-R-factor}
\widetilde{\mathcal{R}} &\ge \frac{y_{\max}}{\left\| \widetilde{\mathbf{y}}(T) \right\|}\\
 y_{\max} &\ge \max_{t\in[0,T]}\left\| \widetilde{\mathbf{y}}(t) \right\|.
\end{align}

The stability requirement on the ODE to use the solver as in Ref.~\cite{berry2022quantum} is that the logarithmic norm of the matrix is non-positive (similar to Ref.~\cite{krovi2023improved}).
That norm is given by the eigenvalues of $(\widetilde{\mathcal{A}}_N+\widetilde{\mathcal{A}}_N^\dagger)/2$.
The eigenvalues of that matrix can be bounded via the block form of the Gershgorin circle theorem.
That is equivalent to the usual Gershgorin circle theorem, except using the spectral norms of the off-diagonal blocks.
For example, see Theorem 2 of Ref.~\cite{Feingold1962BlockDD}, or Ref.~\cite{VANDERSLUIS1979265}.

For $(\widetilde{\mathcal{A}}_N+\widetilde{\mathcal{A}}_N^\dagger)/2$ we obtain rows with $A_j^{(1)}$ and $\gamma^{M-1}A_j^{(M)}/2$ (for $j\ge M$) and $\gamma^{M-1}A_{j+M-1}^{(M)}/2$ (for $j+M-1\le N$).
Now $\|A_{j+M-1}^{(M)}\|\le j\|F_M\|$, so the sum of the norms of the off-diagonal blocks is at most, for $j\ge M$ and $j+M-1\le N$,
\begin{equation}
    \|A_{j+M-1}^{(M)}\| + \|A_{j}^{(M)}\| \le j\|F_M\| + (j-M+1)\|F_M\| = (2j-M+1)\|F_M\| \, .
\end{equation}
In the case $j< M$ but $j+M-1\le N$ then we get $j\|F_M\|$.
If $j+M-1 > N$ but $j\ge M$ then we get $(j-M+1)\|F_M\|$.
Now the maximum eigenvalue of $[A_j^{(1)}+(A_j^{(1)})^\dagger]/2$ is $j\lambda_0$.
In that case the eigenvalues of $(\widetilde{\mathcal{A}}_N+\widetilde{\mathcal{A}}_N^\dagger)/2$ can be at most
\begin{equation}\label{eq:matcirc}
\begin{cases}
    j\lambda_0 + j\gamma^{M-1}\|F_M\|/2 \, , & 0<j< M \\
    j\lambda_0 + (2j-M+1)\gamma^{M-1}\|F_M\|/2\, , & j\ge M ~{\rm and} ~j\le N-M+1 \\
    j\lambda_0 + (j-M+1)\gamma^{M-1}\|F_M\|/2 \, , & N \ge j > N-M+1
\end{cases}
\end{equation}
We can then see that the eigenvalues will be non-positive given all three inequalities
\begin{align}
 \gamma^{M-1} &\le \frac{2|\lambda_0|}{\|F_M\|}\, , \\
    \gamma^{M-1} &\le \frac{|\lambda_0|}{[1-(M-1)/(2(N-M+1))]\|F_M\|}\, , \\
    \gamma^{M-1} &\le \frac{|\lambda_0|}{[1-(M-1)/N]\|F_M\|} \, .
\end{align}
Provided $N\ge 2(M-1)$ (as would normally be the case) the middle inequality would imply the other two.
In all cases we can satisfy these inequalities using
\begin{equation}
    \gamma^{M-1} \le \frac{|\lambda_0|}{\|F_M\|}  = \frac{\|\mathbf{u}_{\mathrm{in}}\|^{M-1}}{R} \, ,
\end{equation}
or
\begin{equation}
    \gamma \le \frac{\|\mathbf{u}_{\mathrm{in}}\|}{R^{1/(M-1)}}, \, 
\end{equation}
where we used the definition of $R$ from \cref{eq:R} in the equality above. Reference~\cite{berry2022quantum} argues that for cases where the solution does not decay significantly, $\mathcal{R}\in\order{1}$.
Here, we consider dissipative dynamics without driving, so $\mathcal{R}$ may be large.
That is less of a problem for driven equations.
We expect that our methods can be applied to driven equations as well, but the error analysis is considerably more complicated so we leave it as a problem for future work. 

We can construct the block encoding of the Carleman matrix $\widetilde{\mathcal{A}}_N$ in terms of the block encoding of $F_1$ and $F_M$, as discussed in \cref{app:block_enc}.
Denoting the values of $\lambda$ for $F_1$ and $F_M$ by $\lambda_{F_1}$ and $\lambda_{F_M}$ respectively, the value of $\lambda$ for $\widetilde{\mathcal{A}}_N$ is
\begin{align}
\label{eq:lambda-rescaled-specific}
\lambda_{\widetilde{\mathcal{A}}_N} \le N \lambda_{F_1} + (N-M+1)\gamma^{M-1}\lambda_{F_M}\, .
\end{align}
This expression easily follows from expressing $\widetilde{\mathcal{A}}_N$ as a sum, and the value of $\lambda$ being the sum of the values of $\lambda$ in the sum.
Since $\widetilde{\mathcal{A}}_N$ includes $A_j^{(1)}$ up to $A_N^{(1)}$, and $A_N^{(1)}$ is a sum of $N$ operators with identity tensored with $F_1$, we obtain the term $N\lambda_{F_1}$ above.
Similarly, we have $\gamma^{M-1}A_j^{(M)}$ up to $\gamma^{M-1}A_N^{(M)}$, and $A_N^{(M)}$ is a sum of $N-M+1$ operators with with $F_M$, giving the $(N-M+1)\gamma^{M-1}\lambda_{F_M}$ term.

If we choose $\gamma^{M-1} = {|\lambda_0|}/{\|F_M\|}$ as above, then
\begin{align}
N \norm{F_1} > (N-M+1)\gamma^{M-1}\norm{F_M} \, .
\end{align}
In typical cases we would expect that $\lambda_{F_1}\propto \norm{F_1}$ and $\lambda_{F_M}\propto \norm{F_M}$.
That would imply
\begin{align}\label{eq:lambdabound}
    \lambda_{\widetilde{\mathcal{A}}_N} \lesssim 2N \lambda_{F_1} \, .
\end{align}
Note that the scaling has not increased the value of $\lambda$ by more than a constant factor.
Note that this is assuming that the $\lambda$-values and norms in the block encoding are comparable, so it is possible it could be violated if the block encoding of $F_M$ is inefficient, so $\lambda_{F_M}$ is much larger than $\norm{F_M}$.

Now for $\mathcal{R}$ we have $y_{\max}$ which considers the maximum norm that the vector can assume along the entire time evolution. Since we are working with a dissipative problem the maximum occurs at $t=0$.
First we consider the case without the scaling for comparison.
To compute the norm $\norm{\yy(0)}$, note that it is the vector resulting from the Carleman mapping, i.e., $\yy(0) = [\uu_{\mathrm{in}}, \uu_{\mathrm{in}}^{\otimes2}, \ldots, \uu_{\mathrm{in}}^{\otimes N}]^T$, so
\begin{align}
    \norm{\yy(0)}^2 
    = \|\mathbf{u}_{\mathrm{in}}\|^2 \frac{1-\|\mathbf{u}_{\mathrm{in}}\|^{2N}}{1-\|\mathbf{u}_{\mathrm{in}}\|^2} ,
\end{align}
as in Eq.~\eqref{eq:normalis}.
Similarly for the value of the norm at time $T$,
\begin{equation}
\norm{\yy(T)}^2 =  \|\uu(T)\|^2{\frac{1-\|\uu(T)\|^{2N}}{1-\|\uu(T)\|^{2}}} . 
\end{equation}
Therefore
\begin{align}
\mathcal{R}&\geq\frac{y_{\max}}{\norm{\yy(T)}}\nn 
&= \left[\frac{\left(1-\|\mathbf{u}_{\mathrm{in}}\|^{2N}\right)\left(1-\|\uu(T)\|^{2}\right)}{\left(1-\|\mathbf{u}_{\mathrm{in}}\|^2\right)\left(1-\|\mathbf{u}(T)\|^{2N}\right)}\right]^{1/2}\frac{\uin}{\norm{\uu(T)}}.
\end{align}

Moreover, the above complexity is in order to obtain the full Carleman vector.
The quantity $\mathcal{R}$ corresponds to an inverse amplitude for obtaining the state at the final time, so tells us how many steps of amplitude amplification are needed in the algorithm.
In practice, we want only $\uu(T)$ rather than the full vector.
That implies a further factor in the complexity of $\norm{\yy(T)}/\norm{\uu(T)}$, corresponding to the inverse amplitude for obtaining the component of the Carleman vector containing the solution.
That gives a factor in the complexity of
\begin{align}\label{eq:Rcost}
\frac{\norm{\yy(T)}}{\norm{\uu(T)}}\mathcal{R}&\geq
 \left[\frac{\left(1-\|\mathbf{u}_{\mathrm{in}}\|^{2N}\right)}{\left(1-\|\mathbf{u}_{\mathrm{in}}\|^2\right)}\right]^{1/2}\frac{\uin}{\norm{\uu(T)}}.
\end{align}
From the equation above we can see how $\mathcal{R}$ grows exponentially in $N$ for $\uin>1$.

Now with the rescaling, we simply divide each $\mathbf{u}_{\mathrm{in}}$ or $\uu(T)$ by $\gamma$.
That gives us
\begin{align}\label{eq:Rcost2}
\frac{\norm{\widetilde\yy(T)}}{\norm{\widetilde\uu(T)}}\widetilde{\mathcal{R}}&\geq
 \left[\frac{\left(1-\|\mathbf{u}_{\mathrm{in}}\|^{2N}/\gamma^{2N}\right)}{\left(1-\|\mathbf{u}_{\mathrm{in}}\|^2/\gamma^2\right)}\right]^{1/2}\frac{\uin}{\norm{\uu(T)}}.
\end{align}
With the choice $\gamma^{M-1} = {|\lambda_0|}/{\|F_M\|}$, we obtain
\begin{align}\label{eq:Rcost3}
\frac{\norm{\widetilde\yy(T)}}{\norm{\widetilde\uu(T)}}\widetilde{\mathcal{R}}&
\geq
 \frac{1}{\sqrt{1-R^{2/(M-1)}}}
 \frac{\uin}{\norm{\uu(T)}}.
\end{align}
We then can see that the amplitude amplification cost can be exponentially reduced when $\uin > 1$.
We could also use $\gamma=\uin$ to give
\begin{align}
\label{eq:resc_R}
\frac{\norm{\widetilde\yy(T)}}{\norm{\widetilde\uu(T)}}\widetilde{\mathcal{R}} \geq \sqrt{N}\frac{\uin}{\norm{\uu(T)}} \, ,
\end{align}
but that bound is looser for realistic parameters.
 
A further consideration is the relation between the relative error in the solution for $\widetilde{\yy}(T)$ and that for $\uu(T)$.
The complexity of the solution for the ODE solver is in terms of the former, whereas we need to bound the relative error in $\uu(T)$.
We have the error upper bounded by (with hats used to indicate results given by the linear equation solver)
\begin{align}\label{eq:relerr1}
    \norm{\hat{\uu}(T)-\uu(T)} &\le \gamma \norm{\hat{\yy}(T)-\yy(T)} \nn
    &\le \gamma \varepsilon y_{\max} \nn
    &\le \gamma \left[\frac{\left(1-\|\mathbf{u}_{\mathrm{in}}\|^{2N}/\gamma^{2N}\right)}{\left(1-\|\mathbf{u}_{\mathrm{in}}\|^2/\gamma^2\right)}\right]^{1/2}\frac{\uin}{\gamma} \nn
    &\le \varepsilon \uin \frac{1}{\sqrt{1-R^{2/(M-1)}}} \, .
\end{align}
In the second line we have assumed that the ODE solver has given the solution for $\yy(T)$ to within error $\varepsilon y_{\max}$.
This shows that the relative error in $\uu(T)$ is the same as that for $\widetilde{\yy}(T)$, up to a factor of $1/\sqrt{1-R^{2/(M-1)}}$ which should be close to 1.
We can also use the simpler but looser upper bound
\begin{align}\label{eq:relerr2}
    \norm{\hat{\uu}(T)-\uu(T)} &\le \varepsilon \uin \sqrt{N} \, ,
\end{align}
which is obtained by noting that the expression in the square brackets in the third line of Eq.~\eqref{eq:relerr1} is upper bounded by $N$.

We can now use the ODE solver given in Ref.~\cite{berry2022quantum} in combination with our rescaling technique to provide our quantum algorithm for \cref{defn:nonlin-ode-problem}.
\begin{lemma}[Complexity of solving ODE]\label{lem:ode-complexity} There is a algorithm to solve the nonlinear ODE from \cref{eq:nonlin-ode-problem} i.e., to produce a quantum state $\ket{\hat{\uu}(T)}$ encoding the solution such that
$\norm{\hat{\uu}(T)-\uu(T)}\le \varepsilon \uin$,
using an average number
\begin{equation}
    \order{
    \frac{1}{\sqrt{1-R^{2/(M-1)}}}\frac{\uin}{\norm{\uu(T)}} \lambda_{F_1} T N\log({\frac{N}{\varepsilon}}) \log(\frac{ N\lambda_{F_1}T}{\varepsilon})},
\end{equation}
of calls to oracles for $F_1$ and $F_M$, 
\begin{equation}
\label{eq:nonODE_comp}
    \order{
    \frac{1}{\sqrt{1-R^{2/(M-1)}}}\frac{\uin}{\norm{\uu(T)}} \lambda_{F_1} T N^{2}\log({\frac{N}{\varepsilon}}) },
\end{equation}
calls to oracles for preparation of $\mathbf{u}_{\mathrm{in}}$, and 
\begin{equation}
    \order{
    \frac{1}{\sqrt{1-R^{2/(M-1)}}}\frac{\uin}{\norm{\uu(T)}} \lambda_{F_1} T N^{2}M \log({\frac{N}{\varepsilon}}) \log(\frac{ N\lambda_{F_1}T}{\varepsilon})^2\log n},
\end{equation}
additional gates for dimension $n$, with
\begin{equation}
    N = \order{(M-1) \frac{\log\left(1/\varepsilon\right)}{\log\left(1/R\right)}}.
\end{equation}
We require that $R<1$ and assume that $\lambda_{F_M}/\|F_M\| = \mathcal{O}(\lambda_{F_1}/\|F_1\|)$ for the block encodings of $F_1$ and $F_M$.
\end{lemma}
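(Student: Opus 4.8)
The plan is to assemble the result by specialising Theorem~2 of Ref.~\cite{berry2022quantum} to the rescaled Carleman system $\dt{\widetilde{\yy}} = \widetilde{\mathcal{A}}_N\widetilde{\yy}$ and then substituting the bounds already established for its three governing quantities: the stability/amplitude factor $\widetilde{\mathcal{R}}$, the block-encoding value $\lambda_{\widetilde{\mathcal{A}}_N}$, and the truncation order $N$. First I would verify the hypothesis of the solver, namely that $\widetilde{\mathcal{A}}_N$ has non-positive logarithmic norm; this is exactly what the block Gershgorin analysis culminating in \eqref{eq:matcirc} guarantees once we fix $\gamma^{M-1} = |\lambda_0|/\|F_M\|$, so the solver applies verbatim and yields the abstract counts \eqref{eq:Avg_calls}--\eqref{eq:ode-extra-gates}.

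Second, I would replace each abstract quantity by its concrete bound. For the value of $\lambda$ I use \eqref{eq:lambdabound}, $\lambda_{\widetilde{\mathcal{A}}_N}\lesssim 2N\lambda_{F_1}$, which is legitimate precisely under the stated assumption $\lambda_{F_M}/\|F_M\| = \order{\lambda_{F_1}/\|F_1\|}$. For the amplitude factor I do not use $\widetilde{\mathcal{R}}$ alone but the combination $(\norm{\widetilde{\yy}(T)}/\norm{\widetilde{\uu}(T)})\widetilde{\mathcal{R}}$, since we want the solution component $\widetilde{\yy}_1$ rather than the whole Carleman vector; by \eqref{eq:Rcost3} this equals $\frac{1}{\sqrt{1-R^{2/(M-1)}}}\frac{\uin}{\norm{\uu(T)}}$. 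Substituting $N=\order{(M-1)\log(1/\varepsilon)/\log(1/R)}$ from \eqref{eq:N-for-error2} then turns the $\log(1/\varepsilon)$ and $\log(\lambda_{\widetilde{\mathcal{A}}_N}T/\varepsilon)$ factors into $\log(N/\varepsilon)$ and $\log(N\lambda_{F_1}T/\varepsilon)$, producing the stated $F_1$/$F_M$-oracle complexity.

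Third comes the error accounting, which I expect to be the most delicate step. The target is $\norm{\hat{\uu}(T)-\uu(T)}\le\varepsilon\uin$, but this error splits into a Carleman truncation part and an ODE-solver part. I would bound the truncation part using the component-wise estimate in \lem{error_Comp}: the choice of $N$ in \eqref{eq:N-for-error2} makes $\gamma\norm{\eta_1(T)}\le\varepsilon\uin$. For the solver part I would invoke \eqref{eq:relerr1}, which shows that running the solver of Ref.~\cite{berry2022quantum} to precision $\varepsilon$ in $\widetilde{\yy}(T)$ yields relative error $\varepsilon\uin/\sqrt{1-R^{2/(M-1)}}$ in $\uu(T)$; rescaling the solver precision by a constant (and possibly a $\mathrm{poly}(N)$ factor, which only affects the logarithms) keeps the combined error within $\varepsilon\uin$. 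The care here is to route the two error sources through the correct norms and to confirm that the $1/\gamma^j$ weighting in the definition of $\eta_j$ makes the bound independent of the rescaling, as noted after \lem{error_Comp}.

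Finally, I would translate the abstract oracle calls into the three concrete resource counts using the block-encoding construction of \app{block_enc}. The single extra factor of $N$ in the state-preparation count \eqref{eq:nonODE_comp} arises because the Carleman initial state $\ket{\yy(0)}$ is a superposition over tensor powers $\uin^{\otimes\ell}$ for $\ell\le N$, so each abstract call to $U_y$ unfolds into $\order{N}$ invocations of the $\uin$-preparation oracle; the absence of the final logarithm there reflects that the $U_y$ count \eqref{eq:Avg_calls} already lacks the $\log(\lambda_{\widetilde{\mathcal{A}}_N}T/\varepsilon)$ factor present in \eqref{eq:Avg_calls2}. For the additional gates, the block encoding of $\widetilde{\mathcal{A}}_N$ in terms of $F_1$ and $F_M$ introduces an overhead of $\order{NM\log n}$ for the control logic over the $N$ blocks, the $M$ tensor-factor positions, and the $\log n$ qubits per register, which multiplies the base count \eqref{eq:ode-extra-gates} to give the stated gate complexity. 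I expect the main obstacle to be keeping the error budget consistent across these translations, ensuring that the approximations incurred in the block encodings and in the Taylor-series truncation remain subdominant to $\varepsilon\uin$, rather than any single hard inequality.
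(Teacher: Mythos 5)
Your proposal is correct and follows essentially the same route as the paper's own proof: you apply the solver of Ref.~\cite{berry2022quantum} to the rescaled Carleman system after checking the non-positive logarithmic norm via the block-Gershgorin analysis, substitute $\lambda_{\widetilde{\mathcal{A}}_N}=\order{N\lambda_{F_1}}$ from Eq.~\eqref{eq:lambdabound} and the combined amplitude factor from Eq.~\eqref{eq:Rcost3}, choose $N$ per Eq.~\eqref{eq:N-for-error2}, tighten the solver tolerance by a $\mathrm{poly}(N)$ factor absorbed into the logarithms (the paper does exactly this with tolerance $\varepsilon/\sqrt{N}$ via Eq.~\eqref{eq:relerr2}), and account for the $\order{N}$ state-preparation overhead and $\order{NM\log n}$ gate overhead from the block encoding of \app{block_enc}. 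No substantive difference from the paper's argument; nothing to correct.
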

\begin{proof}
The main step to derive our quantum algorithm is first to apply the Carleman linearisation in the rescaled nonlinear ODE problem, which is given in \cref{eq:resc_ODE}. We then have a linear ODE problem with the Carleman matrix of order $N$, denoted $\widetilde{\mathcal{A}}_N$. We can then apply the ODE solver given in Ref.~\cite{berry2022quantum} to this equation.

There are then a number of considerations needed to give the overall complexity.
\begin{itemize}
    \item We need to multiply by a further factor of ${\norm{\widetilde\yy(T)}}/{\norm{\widetilde\uu(T)}}$ to obtain the correct component of the solution containing the approximation of $\uu(T)$.
    The product of that with $\widetilde{\mathcal{R}}$ is given above in Eq.~\eqref{eq:Rcost3}.
    \item The value of $\lambda_{\widetilde{\mathcal{A}}_N}$ is given above in Eq.~\eqref{eq:lambdabound} under the assumption $\lambda_{F_M}/\|F_M\| = \mathcal{O}(\lambda_{F_1}/\|F_1\|)$, which gives $\lambda_{\widetilde{\mathcal{A}}_N}=\mathcal{O}(N \lambda_{F_1})$.
    \item The matrix $\widetilde{\mathcal{A}}_N$ can be block encoded with  $\order{1}$ calls to the oracles for $F_1$, $F_M$.
    There is an extra $\order{N}$ factor for the number of calls to $\mathbf{u}_{\mathrm{in}}$.
    The implementation of the oracles is explained in Appendix \ref{app:block_enc}.
    \item The choice of the Carleman order $N$ in order to obtain a sufficiently accurate solution is given in Eq.~\eqref{eq:N-for-error2}.
    The error from the Carleman truncation can be chosen to be a fraction of the total allowable relative error $\varepsilon$ here, which is accounted for using the order notation for $N$.
    \item The solution for the ODE can be given to relative error $\varepsilon/\sqrt{N}$.
    According to Eq.~\eqref{eq:relerr2} that will ensure that the relative error in $\uu(T)$ obtained is $\varepsilon$ as required.
    It is for this reason that we have replaced the $1/\varepsilon$ in the complexity for the ODE solver with $N/\varepsilon$.
\end{itemize}
For the additional elementary gates, the block encoding as in Appendix \ref{app:block_enc} requires a factor of $\order{NM\log n}$ for swapping target registers into the appropriate location.
That is a factor on the number of block encodings of $\widetilde{\mathcal{A}}_N$.
Moreover, Ref.~\cite{berry2022quantum} gives a log factor to account for the complexity of correctly giving the weighting in the Taylor series.
For simplicity we give the product of these factors, but these factors are for different contributions to the complexity and we could instead give a more complicated expression with the maximum of $NM\log n$ and the logarithm.
\end{proof}

We can compare our quantum algorithm performance with what is given in Theorem 8 of Ref.~\cite{krovi2023improved} for the case $M=2$.
The complexity given in that theorem can be simplified to the situation we consider by removing the driving term and replacing $\norm{A}$ with $\lambda_{\mathcal{A}_N}$.
Then the complexity from \cite{krovi2023improved} is
\begin{equation}
\label{eq:krovi}
\order{
    \frac{\uin}{\norm{\uu(T)}}  \lambda_{F_1} TN \text{~poly}\left(N,\log{\left(\frac{1}{\varepsilon}\right)},\log{(TN\lambda_{F_1})}\right)} \, .
\end{equation}
The speedup is unclear because the complexity in that work is given in terms of poly factors.
That work appears to be assuming a rescaling in order to avoid complexity exponential in $N$, but by assuming $\uin=1$.
The problem is that they give a formula for $N$ as
\begin{equation}
\label{eq:N_krovi}
N = \left \lceil \frac{2\log{\left(T\|F_2\|/\delta \|\uu(T)\|\right)}}{\log\left(1/\uin\right)}
    \right\rceil \, .
\end{equation}
Using $\uin=1$ in that formula gives infinite $N$.
In contrast, here we have given the rescaling explicitly and given a working formula for $N$.

\section{Application to the quantum nonlinear PDE problem}\label{sec:nonlin-pde-problem}

\subsection{Complexity of the quantum algorithm}
We now demonstrate our techniques applied to the nonlinear PDE \cite{an2022efficient}
\begin{equation}
\label{eq:pde-example}
\partial_t u (\xx,t) = D \Delta  u(\xx,t) + c  u(\xx,t)+ b  u^{M} (\xx,t), 
\end{equation}
for some diffusion coefficient $D\ge0$ and constants $c,b\in\mathbb{R}$.
As a simple means of discretisation we consider finite differences with periodic boundary conditions, which leads to a vector-valued ODE that approximates the dynamics in \cref{eq:pde-example}. We go beyond the two-point stencil demonstrated in Ref.~\cite{an2022efficient} and apply higher-order finite differences similar to Ref.~\cite{childs2021high} for the linear case.

We discretise a $d$-dimensional space in each direction with uniformly equidistant grid points. As a result, we obtain a nonlinear system of ODEs as in \cref{eq:nonlin-ode-problem} with $n$ grid points in total, or $n^{1/d}$ in each direction. Moreover, we consider the width of the simulation region to be 1 in each direction, so $x_j\in [0,1]$, for simplicity.

The linear operator $F_{1}$ resulting from the spatial discretisation of our PDE is given by
\begin{equation}
\label{eq:F1}
F_{1}=DL_{k,d}+c\identity^{\otimes d},    
\end{equation}
where $\identity$ is the $n^{1/d}\times n^{1/d}$ identity matrix, and 
\begin{equation}
\label{eq:d_Laplac}
L_{k,d} =  \sum_{\mu=1}^d \identity^{\otimes(\mu-1)} \otimes L_k \otimes \identity^{\otimes(d-\mu)}.
\end{equation}
The operator $L_{k,d}$ above for the discretised Laplacian in dimension $d$ is constructed from the sum of the discretised Laplacians in one dimension, $L_k$.
Here, $k$ is the order, so the truncation error scales as the inverse grid spacing to the power of $2k-1$, and $2k+1$ stencil points are used.

A Laplacian in one dimension with a $k$th order approximation and periodic boundary conditions can be expressed in terms of weights $a_j$ as (see Ref.~\cite{childs2021high})
\begin{equation}
\label{eq:ho-laplacian-def}
    L_k = n^{2/d}\left(a_0I +  \sum_{j=1}^k a_j (S^{j} + S^{-j})\right), 
\end{equation}
where $S$ is a $n^{1/d} \times n^{1/d}$ matrix, where the entries are $S_{i,j} = \delta_{i,j+1\mod{n^{1/d}}}$; $S$ is also known as a circulant matrix.
Note that for a total of $n^{1/d}$ grid points and a region size of $1$ in each direction, the grid spacing is $1/n^{1/d}$. The method to obtain the coefficients $a_j$ for the Laplacian operator given in \cref{app:fd-coefficients} guarantees that 
\begin{equation}
\label{eq:period}
 a_0 + 2\sum_{j=1}^ka_j=0 \, ,   
\end{equation}
and we provide the coefficients for $1\le k\le 5$ in~\cref{tab:tableL} (these are from \cite{costa2019quantum}). Moreover, this procedure leads to a truncation error in the representation of the Laplacian operator which scales as (for the 2-norm) \cite{kivlichan2017bounding,childs2021high}
\begin{equation}\label{eq:trunc-error-laplacian}
    \order{C(u,k)\sqrt{n} \left(\frac{e}{2}\right)^{2k}n^{(-2k+1)/d}} \, ,
\end{equation}
where $C(u,k)$ is a constant depending on the $(2k+1)$st spatial derivative in each direction
\begin{equation}
\label{eq:Coef_C}
C(u,k) = \sum_{j=1}^d\left|\frac{d^{2k+1}u}{dx_j^{2k+1}}\right| \, .
\end{equation}
This expression is obtained from that in Refs.~\cite{kivlichan2017bounding,childs2021high} by adding the errors for derivatives in each direction.

\begin{table}[ht]
\begin{center}
\begin{tabular}{c|c}
Order $k$ & ~ Coefficients $a_{0}$ to $a_k$\\
\hline
1 & -2, 1 \\
2 & -5/2, 4/3, -1/12  \\
3 & -49/18, 3/2, -3/20, 1/90\\
4 & -205/72, 8/5, -1/5, 8/315, -1/560\\
5 &  -5269/1800, 5/3, -5/21, 5/126, -5/1008, 1/3150
\end{tabular}
\end{center}
\caption{\label{tab:tableL} Central finite difference coefficients for approximating a second derivative in one dimension \cite{costa2019quantum}.}
\end{table}

We also have the matrix $F_{M}$ resulting from the spatial discretisation of the nonlinear part $b u^{M} (\xx,t)$ that is a rectangular matrix $F_{M}$,
\begin{equation}
F_{M}: \mathbb{R}^{n^{M}} \rightarrow  \mathbb{R}^{n},  
\end{equation}
operating on the vector $\mathbf{u}^{\otimes M}$, as given in \cref{eq:tensM}. Since we are only interested in the components $u_i^{M}$ from $\mathbf{u}^{\otimes M}$, where $i=1,2,\cdots n$, $F_M$ is a one sparse matrix with the non-zero components given by $b$.
Hence $\|F_{M}\|=\|F_{M}\|_{\max}=|b|$. In the case $M=2$, where
\begin{equation}
\mathbf{u}^{\otimes2}= (u_1^2,u_1u_2,\dots,u_1u_{n},u_2u_1,u_2^2,\dots,u_{n}u_{n -1},u_{n^2})^T \in \mathbb{R}^{n^{2}},
\end{equation} 
we can express $F_2$ as 
\begin{equation}\label{eq:FM-definition}
    [F_{2}]_{pq} = \begin{cases}
        b, &  q = p + (p-1)n \\ 
        0, & \mathrm{otherwise}.
    \end{cases}
\end{equation}

Returning to the Laplacian operator with periodic boundary conditions, we see that \cref{eq:ho-laplacian-def} is a circulant matrix, so its eigenvalues are given by
\begin{align}
\label{eq:eig_Lk}
 \lambda_\ell(L_k) &= n^{2/d}\left[a_0 +\sum_{j=1}^{k}a_j\left(\omega^{\ell j} + \omega^{-\ell j} \right) \right] \nn
 &=n^{2/d}\left[a_0 +2\sum_{j=1}^{k}a_j\cos\left(\frac{2\pi \ell j}{n^{1/d}}\right)\right], \quad \ell\in [n^{1/d}-1]
\end{align}
where $\omega=e^{i 2\pi/n^{1/d}}$. Since the $a_j$, with $j=0,1,\cdots,k,$ satisfy the condition in \cref{eq:period}, we see that for $\ell=0$, $\lambda_0=0$ gives the maximum eigenvalue  and $\lambda_\ell<0$ for $\ell\neq0$. Moreover, using the triangle inequality in \cref{eq:ho-laplacian-def} we see that
\begin{equation}
 \|L_k\| \leq n^{2/d}\left(|a_0| + 2\sum_{j=1}^k|a_j|\right).   
\end{equation}
By a simple application of Gershgorin's circle theorem, one may obtain the asymptotic bound (Lemma 2 in Ref.~\cite{childs2021high}, Lemma 6 in Ref.~\cite{kivlichan2017bounding}), 
\begin{equation}
    \norm{L_k} \le n^{2/d}\frac{4\pi^2}{3}.
\end{equation}

From the eigenvalues of $L_k$ we can then determine the eigenvalues of $F_1$ (which is symmetric so equal to $(F_1+F_1^\dagger)/2$) as defined in \cref{eq:F1} as
\begin{equation}
\label{eq:evalues_F1}
\lambda_{\vec\ell}(F_1) = c + Dn^{2/d}\sum_{p=1}^d\left[a_0 +2\sum_{j=1}^{k}a_j\cos\left(\frac{2\pi \ell_p j}{n^{1/d}}\right)\right].
\end{equation}
As discussed above the maximum eigenvalue of $L_k$ is 0 with periodic boundary conditions (it can be negative for non-periodic boundary conditions).
For Carleman linearisation to be successful, we require $R<1$ and, in particular, $\lambda_0 <0$; see \cref{sec:carleman}.
Since the maximum eigenvalue of $F_1$ is $c$, we choose negative $c$ such that the overall dynamics becomes dissipative and satisfies the stability condition $R<1$.
Moreover, we obtain the following bounds
\begin{equation}
\label{eq:norm_F1}
 \|F_{1}\| \le \abs{c} +  dD n^{2/d} \left( |a_0| + \sum_{j=1}^k |a_j| \right) \le \abs{c} + dDn^{2/d} \frac{4\pi^2}{3}.
\end{equation}

The value of $\lambda_{F_1}$ for the block encoding of $F_1$ can be determined in a similar way.
The block encoding can be implemented by a linear combination of unitaries of the identity and powers of the circulant matrices $S$.
The value of $\lambda_{F_1}$ is then exactly equal to the sum
\begin{equation}
\label{eq:norm_F1b}
\lambda_{F_1} = \abs{c} +  dD n^{2/d} \left( |a_0| + \sum_{j=1}^k |a_j| \right) \le \abs{c} + dDn^{2/d} \frac{4\pi^2}{3}.
\end{equation}
Similarly, since $F_M$ is one-sparse it can be easily block encoded with a value of $\lambda_{F_M}$ equal to its norm of $|b|$.
We then obtain the value of $\lambda$ for the complete block encoding as
\begin{align}
\label{eq:lam_pde}
 \lambda_{\widetilde{\mathcal{A}}_N} &= N \lambda_{F_1} + (N-M+1)\gamma^{M-1}\lambda_{F_M}\nn 
 &\le N \left(\abs{c} + dDn^{2/d} \frac{4\pi^2}{3}\right) + (N-M+1)\gamma^{M-1}|b| \, .  
\end{align}
The condition on the dissipativity of the ODE $R<1$ implies that
\begin{equation}
N \left(\abs{c} + dDn^{2/d} \frac{4\pi^2}{3}\right) > (N-M+1)\gamma^{M-1}|b| \, . 
\end{equation}

Given these results for the discretisation of the PDE, we can use
\cref{lem:ode-complexity} to provide the following corollary.

\begin{corollary}[Complexity of solving a dissipative reaction-diffusion PDE]\label{cor:pde-complexity}
 There is a quantum algorithm to solve the nonlinear PDE in \cref{eq:pde-example} i.e., to produce a quantum state $\ket{\hat{\uu}(T)}$ encoding the solution such that
$\norm{\hat{\uu}(T)-\uu(T)}\le \varepsilon \uin$, using an average number
\begin{equation}
    \order{
    \frac{1}{\sqrt{1-R^{2/(M-1)}}}\frac{\uin}{\norm{\uu(T)}} (dDn^{2/d} + \abs{c}) T N\log({\frac{N}{\varepsilon}}) \log(\frac{ N(dDn^{2/d} + \abs{c})T}{\varepsilon})},
\end{equation}
of calls to oracles for $F_1$ and $F_M$, as defined in \cref{eq:F1} and \cref{eq:FM-definition} respectively, 
\begin{equation}
\order{
    \frac{1}{\sqrt{1-R^{2/(M-1)}}}\frac{\uin}{\norm{\uu(T)}} (dDn^{2/d} + \abs{c}) T N^{2}\log({\frac{N}{\varepsilon}}) },
\end{equation}
calls to oracles for preparation of $\mathbf{u}_{\mathrm{in}}$, and 
\begin{equation}
    \order{
    \frac{1}{\sqrt{1-R^{2/(M-1)}}}\frac{\uin}{\norm{\uu(T)}} (dDn^{2/d} + \abs{c}) T N^{2}M \log({\frac{N}{\varepsilon}}) \log(\frac{ N(dDn^{2/d} + \abs{c})T}{\varepsilon})^2\log n},
\end{equation}
additional gates, with
\begin{equation}
    N = \order{(M-1) \frac{\log\left(1/\varepsilon\right)}{\log\left(1/R\right)}}.
\end{equation}
We require that $R<1$, where $R$ is computed from the discretised input vector $\mathbf{u}_{\mathrm{in}}$.
\end{corollary}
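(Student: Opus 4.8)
The plan is to apply \cref{lem:ode-complexity} directly, since the discretised reaction-diffusion equation \cref{eq:pde-example} is precisely a nonlinear ODE of the form \cref{eq:nonlin-ode-problem} with the specific operators $F_1 = DL_{k,d} + c\identity^{\otimes d}$ from \cref{eq:F1} and the one-sparse $F_M$ from \cref{eq:FM-definition}. The entire task then reduces to substituting the PDE-specific value of $\lambda_{F_1}$ into the three generic complexity expressions of that lemma, and verifying that its two hypotheses ($R<1$ and the block-encoding efficiency assumption) are met in this setting.

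First I would confirm the block-encoding assumption $\lambda_{F_M}/\|F_M\| = \order{\lambda_{F_1}/\|F_1\|}$. Because $F_M$ is one-sparse with entries $b$, it is block encoded with $\lambda_{F_M} = |b| = \|F_M\|$, so the left-hand ratio equals $1$. On the right, \cref{eq:norm_F1b} gives $\lambda_{F_1} = \abs{c} + dDn^{2/d}\bigl(|a_0| + \sum_{j=1}^k |a_j|\bigr)$, which matches the upper bound on $\|F_1\|$ in \cref{eq:norm_F1} up to the $\order{1}$ stencil coefficients, so $\lambda_{F_1}/\|F_1\| = \order{1}$ and the hypothesis holds. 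Next I would read off $\lambda_{F_1} = \order{dDn^{2/d} + \abs{c}}$, absorbing the fixed-order finite-difference weights and the constant $4\pi^2/3$ into the order notation.

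The remaining step is purely substitution: replacing $\lambda_{F_1}$ by $dDn^{2/d} + \abs{c}$ throughout the three complexity bounds of \cref{lem:ode-complexity} (e.g.\ \cref{eq:nonODE_comp} for the initial-state oracle) yields exactly the three bounds claimed in the corollary, with the prefactor $\tfrac{1}{\sqrt{1-R^{2/(M-1)}}}\tfrac{\uin}{\norm{\uu(T)}}$, the time dependence $TN\log(N/\varepsilon)\log(\cdot/\varepsilon)$, and the Carleman order $N = \order{(M-1)\log(1/\varepsilon)/\log(1/R)}$ carried over unchanged. The value of $\lambda_{\widetilde{\mathcal{A}}_N}$ in \cref{eq:lam_pde} confirms that the Carleman matrix inherits the same $\order{N\lambda_{F_1}}$ scaling relied upon in the lemma.

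The only genuinely substantive point — and the one I would flag as the main obstacle — is not a calculation but a modelling issue already emphasised in the surrounding text: the stability condition $R<1$ must be imposed on the \emph{discretised} input vector $\uin$, whose $2$-norm grows with the number of grid points $n$, rather than on the max-norm natural to the continuous PDE. I would therefore state $R<1$ explicitly as a hypothesis of the corollary (computed from the discretised $\uin$), and defer the question of when this discretised criterion can actually be satisfied — and why higher-order stencils help by limiting $n$ — to the separate stability analysis, rather than attempting to derive it from the max-norm stability of \cref{eq:pde-example}.
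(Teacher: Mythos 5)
Your proposal is correct and follows essentially the same route as the paper's own proof: discretise the PDE, verify the block-encoding hypothesis via $\lambda_{F_M}=\|F_M\|=|b|$, and substitute the explicit bound $\lambda_{F_1}$ from \cref{eq:norm_F1b} into the complexity expressions of \cref{lem:ode-complexity}, with the $R<1$ condition stated as a hypothesis on the discretised $\mathbf{u}_{\mathrm{in}}$. Your additional remarks (the $\order{1}$ ratio check and the deferral of the max-norm versus $2$-norm stability issue to the separate analysis) are consistent with, and slightly more explicit than, the paper's treatment.
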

\begin{proof}
We first discretise the reaction-diffusion problem in \cref{eq:pde-example} to the nonlinear ODE system with $n$ discretisation points.
We consider just the error in the solution of this ODE here, with the choice of $n$ to accurately approximate the solution of the PDE described below.
For this ODE we have an explicit bound for $\lambda_{F_1}$ given in Eq.~\eqref{eq:norm_F1b}, and can use it in the expressions in \cref{lem:ode-complexity}.
For this simple $F_M$ the values of $\lambda_{F_M}$ and $\|F_M\|$ are equal,
so the condition $\lambda_{F_M}/\|F_M\| = \mathcal{O}(\lambda_{F_1}/\|F_1\|)$ is satisfied.
\end{proof}

Note that for this result the oracles for $F_1$ and $F_M$ can be easily implemented in terms of calls to elementary gates, with logarithmic complexity in $n$ and linear complexity in $M$.
Powers of the circulant matrices can be implemented with modular addition, and $F_M$ can be implemented via equality tests between the copies it acts upon.
Note also that, apart from the $R<1$ condition, this complexity scales as $n^{2/d}$ up to logarithmic factors.
For $d\ge 3$ this complexity is sublinear in $n$.
This factor comes from the size of the discretised Laplacian, and is similar to that for quantum algorithms for linear PDEs.
If we had the factor of $\uin^{2N}$ as in Ref.~\cite{an2022efficient}, then because $\uin^2\propto n$ with the discretisation there would be a further factor of $n^N$ for the scaling with $n$, making the complexity far worse than that for a simple classical solver.

\subsection{Stability and discretisation}
\label{sec:stab}
Here we discuss conditions on nonlinear differential equations of the type in \cref{eq:pde-example}  so that numerical schemes based on Carleman linearisation are stable.
Recall that for the ODE we have the stability condition $R<1$ with
\begin{equation}
{R = \frac{\norm{F_M}\cdot \uin^{M-1}}{\abs{\lambda_0}} } .
\end{equation}
That condition is not ideal here, because the 2-norm of the solution increases with the number of discretisation points.
Thus this condition for the stability depends not only on the underlying PDE and initial state but on its discretisation.

Ideally we would aim for a condition on the max-norm of the solution.
That can then be used in order to guarantee stability of the solution as well as to bound error.
For example, Ref.~\cite{an2022efficient} considers stability in their Lemma 2.1 and bounds error in their Theorem 3.3.
A simple stability criterion can be given as
\begin{equation}\label{eq:stabPDE}
    \uin_{\max}^{M-1}\frac{b}{|c|} < 1\, .
\end{equation}
Before discretisation, the stability can be shown simply by considering the infinitesimal time interval $\mathrm{d}t$ and using
\begin{align}\label{eq:ustab}
    \| u( \xx, t) + \mathrm{d}t [D \Delta  u(\xx,t) + c  u(\xx,t)+ b  u^{M} (\xx,t)] \|_{\max} &= \| (\identity + \mathrm{d}t \, D \Delta) \{ u( \xx, t) + \mathrm{d}t [c  u(\xx,t)+ b  u^{M} (\xx,t)]\} \|_{\max}  \nn
    &\le \| \identity + \mathrm{d}t \, D \Delta \|_{\infty}
    \| u( \xx, t) + \mathrm{d}t [c  u(\xx,t)+ b  u^{M} (\xx,t)] \|_{\max} .
\end{align}
Now using the triangle inequality
\begin{align}
    \| u( \xx, t) + \mathrm{d}t [c  u(\xx,t)+ b  u^{M} (\xx,t)] \|_{\max} &\le
    \| u( \xx, t) + \mathrm{d}t \, c  u(\xx,t)\|_{\max}  + \mathrm{d}t \| b u^{M} (\xx,t) \|_{\max}\nn
    &= (1+\mathrm{d}t\, c) \| u( \xx, t) \|_{\max} + \mathrm{d}t\, b  \| u (\xx,t) \|_{\max}^{M} \, .
\end{align}
Then if $(b/c) \, \| u (\xx,t) \|_{\max}^{M-1} \le 1$ this expression is upper bounded by $\| u (\xx,t) \|_{\max}$.

Moreover, it is a standard result that $\| \identity + \mathrm{d}t \, D \Delta \|_{\infty}=1$.
That is, the diffusion equation smooths out any peaks in the distribution.
That expression also holds if we consider the discretised form, but only using the first-order discretisation.
Then for spatial grid spacing $h$, the discretised form in one dimension has $1-2\mathrm{d}t\, D /h^2$ on the diagonal, and $\mathrm{d}t\, D /h^2$ on the two off-diagonals.
The sum of the absolute values along a row for this matrix is then exactly 1, giving an $\infty$-norm of 1.
That means Eq.~\eqref{eq:ustab} implies $\| u (\xx,t) \|_{\max}$ is non-increasing for the PDE given the stability criterion in Eq.~\eqref{eq:stabPDE}.

This result for the $\infty$-norm no longer holds for the discretised PDE when using higher-order discretisations.
For example, for the second-order discretisation, the $-1/12$ on the off-diagonals means that 
\begin{equation}
    \| \identity + \mathrm{d}t \, D L_2 \|_{\infty} = 1 + \mathrm{d}t \, D/(3 h^2) \, .
\end{equation}
That means that the max-norm is only upper bounded by the initial max-norm multiplied by a factor of $\exp(D/(3 h^2))$.
In practice it is found that the max-norm is far better behaved.
If we calculate the $\infty$-norm of $\exp(t D L_k)$, then we obtain the results shown in Fig.~\ref{fig:maxnorm}.
For the second-order discretisation the initial slope is $1/3$ as predicted using infinitesimal $t$, but the peak value is less than 1\% above 1.
For the higher-order discretisations this maximum increases, but it is still small for these orders.
Therefore we find that if we consider the max-norm for just evolution under the discretised Laplacian then it is well-behaved, but that does not imply the result for the nonlinear discretised PDE.

\begin{figure}
    \centering
    \includegraphics[width=0.5\linewidth]{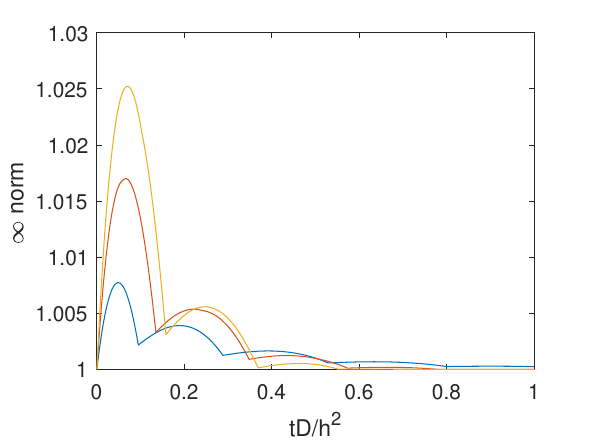}
    \caption{The (induced) $\infty$-norm of the exponential of $tD\Delta_h$ when using a discretisation of the Laplacian of second order (blue), third order (red), and fourth order (orange).}
    \label{fig:maxnorm}
\end{figure}

To determine stability for the 2-norm, the equivalent of Eq.~\eqref{eq:ustab} gives
\begin{align}\label{eq:ustab2}
    \| u( \xx, t+\mathrm{d}t) \| 
    &\le \left\| \identity + \mathrm{d}t \, D \Delta \right\|\times
    \left\| u( \xx, t) + \mathrm{d}t \left[c  u(\xx,t)+ b  u^{M} (\xx,t)\right] \right\| \nn
    &\le \left\| u( \xx, t) + \mathrm{d}t \left[c  u(\xx,t)+ b  u^{M} (\xx,t)\right] \right\|\nn
    &\le \| u( \xx, t)\| + \mathrm{d}t \left[c  \|u(\xx,t)\| + b  \|u^{M} (\xx,t)\|\right]  \nn
    & \le \| u( \xx, t)\| + \mathrm{d}t \left[c  \|u(\xx,t)\| + b  \|u (\xx,t)\|_{\max}^{M-1} \|u (\xx,t)\|\right] \, .
\end{align}
Therefore the 2-norm is non-increasing provided $(b/c) \, \| u (\xx,t) \|_{\max}^{M-1} \le 1$.
In the discretised case the non-positive eigenvalues of the discretised Laplacian mean that the 2-norm is still stable given this condition, though as noted above it is possible for $\| \uu \|_{\max}$ to increase above its initial value.

However, for the purpose of solving the ODE using Carleman linearisation, what matters is not the stability of the nonlinear equation, but that of the linearised equation.
That is because the stability of the linearised equation governs the condition number of the linear equations to solve, and in turn that is proportional to the complexity.
For example, in Ref.~\cite{an2022efficient} their Problem 1 assumes that $\|F_M\|\le |\lambda_0|$ ($\lambda_1$ in the notation of that work) after some possible rescaling of the equation.
Then Eq.~(4.15) of that work gives $\|\identity+Ah\|\le 1$ (with $h$ the time discretisation), using that condition from Problem 1.
That is then used to provide the bound on the norm of $\|L^{-1}\|$ in Eq.~(4.28) of that work, which is then used to give the bound on the condition number proportional to the number of time steps in Eq.~(4.29) in Ref.~\cite{an2022efficient}.

According to our analysis above, the stability of the linearised system will be satisfied provided $\gamma^{M-1}\le |\lambda_0|/\|F_M\|$.
For the discretised PDE here we have $\lambda_0=c$ and $\|F_M\|=b$.
That means if $\gamma\le \|\mathbf{u}_{\rm in}\|_{\max}$, then the stability condition in Eq.~\eqref{eq:stabPDE} implies the stability of the matrix after Carleman linearisation.
That condition is needed in order to be able to use the ODE solver of Ref.~\cite{berry2022quantum}, but it will mean that the rescaling gives a smaller probability of success for obtaining the correct component of the Carleman vector than if we had the stability condition $R<1$.

However, if we have sufficiently small $b/|c|$, then the condition $R<1$ would be satisfied, so
\begin{equation}
    \uin^{M-1}\frac{b}{|c|} < 1\, .
\end{equation}
Because $\uin$ increases with the number of discretisation points as $\sqrt{n}$, this inequality can only be satisfied if the number of discretisation points is made as small as possible.
This gives a strong motivation for using the higher-order spatial discretisation of the PDE.
See \cref{app:trad_off} for discussion of the number of points needed.

\subsection{Error Analysis}

The overall error $\varepsilon$ comes from three different parts,
\begin{itemize}
    \item the spatial discretisation error of the semi-discrete dynamics $\varepsilon_{\mathrm{disc}}$, 
\item the error $\varepsilon_{\mathrm{Carl}}$ contributed by truncation in the Carleman linearisation as bounded in~\cref{lem:error_Comp}, and
\item the error in the time evolution $\varepsilon_{\mathrm{time}}$ due to the Taylor series, as described in~\cref{sec:ODEs}.
\end{itemize}
As usual in this type of analysis, we can simplify the discussion by taking the error to be $\varepsilon$ for each of these contributions.
In reality, the contribution to the error from each source would need to be taken to be a fraction of $\varepsilon$ (e.g.~$\varepsilon/3$), but because that fraction would at most give a constant factor to the complexity, it would not affect the complexities quoted using $\mathcal{O}$.

We have already considered $\varepsilon_{\mathrm{time}}$ and $\varepsilon_{\mathrm{Carl}}$ above in Corollary \ref{cor:pde-complexity}.
The time discretisation error will not be further considered here, but we will discuss how the Carleman error can be alternatively bounded in situations where the PDE is stable but $R\ge 1$.
Above we show that the ODE needs $R<1$ for the quantum solution to be efficient, but this bound on the Carleman error will be useful if that limitation can be circumvented.

In the case of higher-order discretised Laplacians we obtain a somewhat worse bound as derived in \cref{app:carl_errMax},
\begin{equation}
    \|\mathbb{\eta}_{j}(t)\|_{\max} \lesssim G_\kappa^{dj} \left(\frac{\|F_M\|_{\infty}}{|c|}\|\mathbf{u}_{\mathrm{in}}\|_{\max}^{M-1}G_\kappa^{dM}\right)^{k} 
f_{j,k,M}\left( |c|t \right) \, ,
\end{equation}
where
\begin{equation}
    G_\kappa := \max_{\tau\ge 0} \| e^{ L_{\kappa}\tau} \|_{\infty} \, .
\end{equation}
The $\lesssim$ is because it is assuming that the max-norm of the solution is not increasing.
We can use $\le$ if $\|\mathbf{u}_{\mathrm{in}}\|_{\max}$ is replaced with the maximum of $\|\mathbf{u}\|_{\max}$ over time.
The quantity $G_\kappa$ is greater than 1 for higher-order discretised Laplacians, so this is a slightly larger upper bound than in the case of first-order discretised Laplacians.
Nevertheless, the Carleman error may be made arbitrarily small with order provided
\begin{equation}
    \|\mathbf{u}_{\mathrm{in}}\|_{\max}^{M-1}\frac{b}{|c|}G_\kappa^{dM} < 1 \, .
\end{equation}
This condition is slightly stronger than the condition for stability of the PDE by the factor of $G_\kappa^{dM}$, but that will typically be close to 1.
Typically this will be a much weaker requirement than the stability condition $R<1$.

Next, we consider the bound on the error due to the spatial discretisation, which can be used to derive the appropriate number of discretisation points $n$ to use.
Using that in \cref{cor:pde-complexity} then gives the complexity entirely in terms of the parameters of the problem instead of the value chosen for $n$.
Our bound on the error is as given in the following Lemma, with
the proof given in \cref{app:spat-discretisation-semidiscrete}.

\begin{lemma}[Nonlinear PDE solution error  when discretising the Laplacian with higher-order finite differences]\label{lem:spat-discretisation-semidiscrete}
Using a higher-order finite difference discretisation with $2k+1$ stencil points in each direction, the solution of the PDE
\begin{equation}
\frac{\mathrm{d} \uu}{\mathrm{d}t}= (D L_{k,d}  + c) \uu + b \uu^{\otimes M},
\end{equation}
at time $T>0$ has error due to spatial discretisation when $c<0$ and $|c|> M|b|\uin^{M-1}$ bounded as 
\begin{equation}
\label{eq:ineq_error}
\norm{\varepsilon_{\mathrm{disc}}(T)} = \order{   
C(u,k)\sqrt{n}\left(\frac{e}{2}\right)^{2k} n^{- (2k - 1)/d} \frac{1 - \exp{\left( c + M|b|{\uin}_{\max}^{M-1}\right)t} }{ \left|c + M|b|{\uin}_{\max}^{M-1}\right|}},
\end{equation}
where $C(u,k)$ given in \cref{eq:Coef_C} is a constant depending on the $(2k+1)$st spatial derivative of the solution assuming sufficient regularity, $n$ is the number of grid points used, and $d$ is the number of dimensions. 
\end{lemma}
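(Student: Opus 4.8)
The plan is to bound the discretisation error by a Duhamel (variation-of-parameters) argument applied to the difference between the exact PDE solution sampled on the grid and the solution of the semi-discrete ODE. Write $\uu^\ast(t)$ for the exact solution restricted to the $n$ grid points and $\uu(t)$ for the solution of the discretised system, and set $\varepsilon_{\mathrm{disc}}(t)=\uu^\ast(t)-\uu(t)$. Since the nonlinearity $bu^M$ is evaluated pointwise, it contributes no spatial-discretisation error, so the only source term is the Laplacian truncation: $\uu^\ast$ satisfies $\dd\uu^\ast/\dd t = (DL_{k,d}+c)\uu^\ast + b(\uu^\ast)^{\otimes M} + \boldsymbol{\tau}(t)$, where $\boldsymbol{\tau}(t)=D(\Delta\uu^\ast - L_{k,d}\uu^\ast)$ is the local truncation error whose $2$-norm is bounded uniformly by the estimate in \eq{trunc-error-laplacian}, namely $\order{C(u,k)\sqrt{n}\,(e/2)^{2k}\,n^{(-2k+1)/d}}$.

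Subtracting the discretised ODE from the equation for $\uu^\ast$ yields $\dd\varepsilon_{\mathrm{disc}}/\dd t = (DL_{k,d}+c)\varepsilon_{\mathrm{disc}} + b[(\uu^\ast)^{\otimes M}-\uu^{\otimes M}] + \boldsymbol{\tau}$. The next step is to linearise the nonlinear difference: componentwise $(u_i^\ast)^M - u_i^M = (u_i^\ast-u_i)\sum_{l=0}^{M-1}(u_i^\ast)^l u_i^{M-1-l}$, so the difference equals $B(t)\varepsilon_{\mathrm{disc}}$ for a diagonal matrix $B(t)$ whose entries are bounded in absolute value by $M\|\uu\|_{\max}^{M-1}$. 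Under the stated stability hypothesis this max-norm stays at most $\uin_{\max}$, so $\|bB(t)\|\le M|b|\uin_{\max}^{M-1}$, and the effective linear generator is $DL_{k,d}+c+bB(t)$.

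The crucial estimate is on the logarithmic norm (in the $2$-norm) of this generator. Here I would use that $L_{k,d}$ is real symmetric with all eigenvalues $\le 0$ (for periodic boundaries the largest eigenvalue is $\lambda_0=0$; see \eq{eig_Lk}), that $c\identity$ contributes $c<0$, and that the symmetric diagonal perturbation $bB(t)$ contributes at most $M|b|\uin_{\max}^{M-1}$. Hence the largest eigenvalue of the symmetrised generator is at most $\alpha \coloneqq c + M|b|\uin_{\max}^{M-1}$; since $\uin_{\max}\le\uin$, the hypothesis $|c|>M|b|\uin^{M-1}$ forces $\alpha<0$. A standard logarithmic-norm argument then gives the propagator bound $\|\Phi(T,s)\|\le e^{\alpha(T-s)}$ for the homogeneous error dynamics. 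It is essential that this be carried out in the $2$-norm via eigenvalues rather than in the max-norm, because for $k\ge 2$ the exponential $e^{L_{k,d}t}$ is no longer max-norm contractive (the $G_\kappa>1$ phenomenon), whereas the eigenvalues of $L_{k,d}$ remain nonpositive.

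Finally, with $\varepsilon_{\mathrm{disc}}(0)=0$ (the initial data is sampled exactly), Duhamel's formula gives $\varepsilon_{\mathrm{disc}}(T)=\int_0^T \Phi(T,s)\boldsymbol{\tau}(s)\,\dd s$, whence $\|\varepsilon_{\mathrm{disc}}(T)\|\le \tau_{\max}\int_0^T e^{\alpha(T-s)}\dd s = \tau_{\max}(1-e^{\alpha T})/|\alpha|$, where $\tau_{\max}$ is the uniform truncation bound from \eq{trunc-error-laplacian}; substituting it reproduces exactly \eq{ineq_error}. The main obstacle is the uniform-in-time control of the nonlinear term: rigorously one needs both $\uu^\ast$ and the numerical $\uu$ to remain bounded by $\uin_{\max}$ in max-norm, which for the exact solution follows from PDE stability but for the discrete solution requires a bootstrap/continuity argument and is complicated by the fact that higher-order discretisations do not exactly contract the max-norm. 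This is why the bound is naturally stated with $\lesssim$ (assuming $\|\uu\|_{\max}$ is non-increasing) and can be upgraded to $\le$ only if $\uin_{\max}$ is replaced by the time-maximum of $\|\uu\|_{\max}$.
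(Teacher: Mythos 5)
Your proposal is correct and takes essentially the same route as the paper's proof in \cref{app:spat-discretisation-semidiscrete}: the same truncation-error source term bounded via \cref{eq:trunc-error-laplacian}, the same telescoping factorisation of $(u_i^\ast)^M - u_i^M$ giving the $M|b|\uin_{\max}^{M-1}$ contribution, and the same spectral bound of $c$ on the symmetrised $F_1$, all combining into the identical scalar estimate $(1-e^{\alpha t})/|\alpha|$ with $\alpha = c + M|b|\uin_{\max}^{M-1}$. The only difference is packaging --- you integrate via Duhamel with a logarithmic-norm propagator bound while the paper differentiates $\|\bm{\varepsilon}\|^2$ and solves the resulting Gr\"onwall-type differential inequality, which are equivalent --- and your closing caveat about controlling the discrete solution's max-norm (the $G_\kappa>1$ issue) honestly flags the same soft step the paper handles with its $\norm{\uu}_{\max}\approx\norm{\uu^h}_{\max}$ approximation.
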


In this result, we are considering the continuous time evolution.
Note that for the stability of the discretisation error we use the condition $|c|> M|b|\uin^{M-1}_{\max}$, which is stronger than the condition $|c|> |b|\uin^{M-1}_{\max}$ for PDE stability.
This appears to be a fundamental condition due to the nonlinearity, because the derivative of the order-$M$ nonlinearity produces a factor of $M$.

Next, if we take $\varepsilon_{\text{disc}}\propto\varepsilon$, then solving Eq.~\eqref{eq:ineq_error} for $n$ gives
\begin{equation}\label{eq:nchoice}
n = \Omega \left(\left[\frac{C(u,k)\left(\frac{e}{2}\right)^{2k}}{\left|c +  M|b|\uin^{M-1}_{\max}\right|}\frac{1}{\varepsilon}\right]^{\frac{2d}{2(2k-1)+d}}\right) .   
\end{equation}
That is, this choice of $n$ is sufficient to give $\varepsilon_{\text{disc}}$ as some set fraction of $\varepsilon$.
In practice we would choose the minimum $n$ needed to give the desired accuracy, so we would choose $n$ proportional to the expression in Eq.~\eqref{eq:nchoice}.
For simplicity of the solution for $n$, we have used
\begin{equation}
 1 - \exp{\left( c + M|b|\uin^{M-1}_{\max}\right)t}\leq 1 \, .  
\end{equation}
In \cref{app:trad_off} we show how the discretisation error is reduced with the number of grid points with different orders of discretisation and a simple toy model.

The benefit of having fewer grid points comes with the drawback of having a less sparse operator.
The block encoding of $F_1$ is performed by a linear combination of unitaries over $(2k+1)$ basis states with amplitudes given by a table, and so has complexity in terms of elementary gates proportional to $k$.
That is not immediately obvious from \cref{cor:pde-complexity}, because it gives complexity in terms of block encodings of $F_1$ and $F_M$.
It is also possible for $C(u,k)$ to increase with $k$.
In a real implementation it would therefore be desirable to choose an optimal $k$ to minimise the complexity, instead of taking $k$ as large as possible.

\section{Conclusion}\label{sec:conclusion}

In this study, we proposed a set of improvements to quantum algorithms for nonlinear differential equations via Carleman linearisation, eliminating some of the exponential scalings seen in prior work.
We have examined both ODEs, and a class of nonlinear PDEs corresponding to reaction-diffusion equations \cite{an2022efficient}.
These improvements include
\begin{itemize}
    \item rescaling the original dynamics,
    \item a truncated Taylor series for the time evolution,
    \item higher-order spatial discretisation of the PDEs, and
    \item tighter bounds on the error of Carleman linearisation.
\end{itemize}
The rescaling boosts the success probability, needing exponentially fewer steps for the amplitude amplification to obtain the solution component of interest in the linearised ODE system.
That is vital to enable the complexity of the PDE solver to be sublinear in the number of discretisation points.
The solution approximation via the truncated Taylor method gives a near-linear dependence on $T$, the total evolution time.
The higher-order spatial discretisation greatly improves the complexity of the quantum solution of PDEs, because it reduces the number of discretisation points needed, which is needed to avoid stability problems.

We show that the stability criterion for PDEs, rescaling, Carleman linearisation, and stability criterion for ODE solvers all interact in a way that makes the solution of PDEs more challenging than was appreciated in prior work. In particular, the stability criterion for PDEs is in terms of a max-norm, but rescaling by the 2-norm is required to obtain a reasonable probability for the correct component of the Carleman vector.
But, rescaling by the 2-norm can make the resulting system of linearised equations unstable,
which causes the ODE solver to have exponential complexity.
If the discretised PDE is still stable in terms of the 2-norm, then the resulting quantum algorithm will still be efficient.
Because the 2-norm increases as $\sqrt n$ in the number of discretisation points, the number of those points should be made as small as possible, which is why it is crucial to use the higher-order discretisation.

In future work, one could devise a less restricted quantum algorithm for solving nonlinear PDEs via some other approach. The feature that the linearised equations can be unstable even though the nonlinear equation is stable suggests that an alternative linear equation solver may be efficient.
The reason why the condition number is large (causing the inefficiency) is that the solution can grow exponentially over time, but for an initial vector that is not of the Carleman form.
A solver that is able to take advantage of the restricted form of states could potentially be efficient.

Furthermore, there are a number of important generalisations that can be made to the type of differential equations. Instead of just including a nonlinear term of order $M$, one could include all nonlinear orders up to $M$.
That could also be used to analyse the effect of driving because the method used for quadratic nonlinearities would produce nonlinearities at a range of orders.
A further generalisation that could be considered is time-dependent differential equations.
These generalisations can be made in a simple way in the quantum algorithm, but the analysis to bound the error would be considerably more complicated.

\section*{Acknowledgements}
MESM and PS were supported by the Sydney Quantum Academy, Sydney, NSW, Australia.
DWB worked on this project under a sponsored research agreement with Google Quantum AI. DWB is also supported by Australian Research Council Discovery Projects DP190102633, DP210101367, and DP220101602.

%


\appendix

\section{Variable names and conventions}
We generally denote scalars using lower case, e.g.\ $c$, vectors using lower-case bold, e.g.\ $\uu$, and matrices/operators using upper-case letters, such as $A$. Unless noted otherwise, norms correspond to the spectral or Euclidean 2-norm. The notation $[N]$ describes the set $\{1, 2, \ldots, N\}$.  
\begin{itemize}
    \item $\xx$ -- Coordinate in space. 
    \item $t$ -- Coordinate in time.
    \item $T$ -- Final time.
    \item $\uu$ -- Solution vector for nonlinear ODE.
    \item $\yy$ -- Solution vector of system arising from Carleman linearisation, $\yy = [\uu, \uu^{\otimes 2}, \ldots, \uu^{\otimes N}, \ldots]^N$.
    \item $F_1, F_M$ -- Terms appearing in nonlinear ODE in \cref{defn:nonlin-ode-problem}, where $F_1$ is an operator representing a linear term and $F_M$ corresponds to a nonlinearity of the order of the subscript $M$. 
    \item $N$ -- Truncation number of Carleman linearisation.
    \item $\mathcal{A}, \mathcal{A}_N$ -- System matrix after Carleman linearisation, subscript $N$ denotes truncated system; see \cref{eq:carlmatrix}.
    \item $A_j^{(i)}$ -- Components of block-structure of Carleman matrix ($j$th row, $(i-1)$th diagonal). Defined in \cref{eq:carleman-assembly}.
    \item $\varepsilon$ -- Solution error. 
    \item $\eta$ -- Error in Carleman vector, defined in \cref{lem:error_Glob,lem:error_Comp}.
    \item $\Delta$ -- Laplacian operator.
    \item $L$, $L_k$ -- Finite difference approximation of the Laplacian, up to ``order'' $k$, so that there is a central finite difference stencil with $2k+1$ points.
    \item $D, c, b$ -- Coefficients from PDE in \cref{eq:pde-example}; $D$ - diffusion, $c$ decay, $b$ nonlinearity. 
    \item $n$ -- Total number of gridpoints in the discretisation of the PDE example in \cref{eq:pde-example}.
    \item $d$ -- Dimensionality of the PDE problem in \cref{eq:pde-example}.
    \item $\identity$ -- Identity matrix, indexing may clarify the dimension rather than number of qubits.
    \item $\lambda_j(\cdot)$ -- Eigenvalue $j$ of matrix $(\cdot)$, whereas $\lambda_{(\cdot)}$ is the subnormalization factor of the block-encoding of $(\cdot)$.
    \item $\gamma$ -- Rescaling factor introduced in \cref{defn:rescaled-carleman}, $\gamma>0$. 
    \item $\widetilde{(\cdot)}$ -- Rescaled quantities as defined in \cref{defn:rescaled-carleman}.
    \item $R$ -- Ratio of strength of nonlinearity over decay, \cref{eq:R}.
    \item $f_{j,k,M}$ -- Function to express tighter bound for Carlemann errors, defined in \cref{lem:error_Comp}.
    \item $\Omega_k$ -- Carleman error intervals, used in \cref{lem:error_Comp}.
    \item $\eta$ -- Error due to Carleman linearisation, defined in \cref{lem:error_Glob}. 
    \item $s(\cdot)$ -- Sparsity of quantity $(\cdot)$. 
\end{itemize}

\section{Proofs of bounds on error due to Carleman linearisation}\label{app:proofs_bounds}
\subsection{Proof of bound on the full vector of errors due to Carleman linearisation}
\label{app:proof_total_error}
Here, we present the proof of \cref{lem:error_Glob}.
\begin{proof}
    We define the error due to Carleman truncation of order $N$, for any $j\in [N]$, including a rescaling by $\gamma>0$, as
    \begin{align}
    \label{eq:error_car_def}
    \mathbf{\eta}_j(t) &\coloneqq \widetilde{\uu}^{\otimes j} - \widetilde{\mathbf{y}}_j\nn 
    &=\frac{\mathbf{u}^{\otimes j} - \mathbf{y}_j}{\gamma^j} .    
    \end{align}
    Here, $\mathbf{u}$ is the exact solution of the nonlinear ODE, and $\mathbf{y}_j$ are the components of the solution vector using the truncated Carleman approximation.
    Therefore, $\mathbf{\eta}_j$ corresponds to the error arising from the Carlemann linearisation at finite order, but with the rescaling $\gamma$.
    This definition is similar to that in \cite{liu2020efficient} (see the proof of Lemma 1 in the Supplementary Information of that work), except we are including the rescaling.
    By writing $\mathbf{U}$ as a vector that has the components $\widetilde{\mathbf{U}}_j=\widetilde{\uu}^{\otimes j}$,
    we have from \cref{eq:error_car_def} that
    \begin{align}
    \label{eq:two_comp}
    \dt{\eta_j(t)} &=  \left(\widetilde{\mathcal{A}} \widetilde{\mathbf{U}}\right)_j -  \left(\widetilde{\mathcal{A}}_N\widetilde{\mathbf{y}}\right)_j    \nn
     &=  A_j^{(1)}\widetilde{\uu}^{\otimes j} +  \widetilde{A}_{j+M-1}^{(M)}\widetilde{\uu}^{\otimes j +M -1} - A_j^{(1)}\widetilde{\yy}_{ j} -  \widetilde{A}_{j+M-1}^{(M)}\widetilde{\yy}_{ j +M -1}.
    \end{align}
    Notice that in the equation above we have $\widetilde{\mathcal{A}}$ as the rescaled form of $\mathcal{A}$, the infinite-dimensional matrix generated from the Carleman linearisation, as explained in \cref{sec:backgound}, and $\widetilde{\mathcal{A}}_N = \gamma^{M-1} \mathcal{A}_N$ as the truncated matrix as given in \cref{eq:resccarlmatrix}. 
    Because we are taking the difference between the action of an infinite-dimensional matrix $\widetilde{\mathcal{A}}$ and a truncated one $\widetilde{\mathcal{A}}_N$, we need different treatment for values of $j$ below and above that truncation.
    For $j\le N-M+1$ we are below the truncation, and in \cref{eq:two_comp} we can use $\mathbf{\eta}_j(t) = \widetilde{\uu}^{\otimes j} - \widetilde{\mathbf{y}}_j$ to give
    \begin{equation}
    \dt{\eta_j} =  A_j^{(1)}\eta_j +  \widetilde{A}_{j+M-1}^{(M)}\eta_{j+M-1} .
    \end{equation}
    Then, using $\widetilde{\mathcal{A}}_j = \gamma^{M-1} \mathcal{A}_j$ gives
    \begin{equation}
    \dt{\eta_j} =  A_j^{(1)}\eta_j +  \gamma^{M-1}A_{j+M-1}^{(M)}\eta_{j+M-1}.
    \end{equation}
 Now, for $j>N-M+1$, we have $j+M-1>N$, and so $\widetilde{\yy}_{ j +M -1}$ is past the end of the vector and should be taken to be zero.
This implies that we obtain
\begin{equation}
    \dt{\eta_j} =  A_j^{(1)}\eta_j +  \widetilde{A}_{j+M-1}^{(M)}\widetilde{\uu}_{j+M-1} .
\end{equation}
Now, using $\widetilde{\mathcal{A}}_j = \gamma^{M-1} \mathcal{A}_j$ and $\widetilde{\uu}_{j+M-1}=\gamma^{-(j+M-1)}\uu_{j+M-1}$ we get for $j>N-M+1$
\begin{align}
    \dt{ \eta_{j}} &=A_{j}^{(1)}\eta_{j} +   \gamma^{M-1} A_{j+M-1}^{(M)}\gamma^{-(j+M-1)}\mathbf{u}^{\otimes(j+M-1)} \nn
    &= A_{j}^{(1)}\eta_{j} +   \gamma^{-j}A_{j+M-1}^{(M)}\mathbf{u}^{\otimes(j+M-1)} .
\end{align}
Therefore, we end up with
    
    \begin{align}
    \label{eq:elem_mat}
        \dt{\eta_j} &=  A_j^{(1)}\eta_j +  \gamma^{M-1}A_{j+M-1}^{(M)}\eta_{j+M-1},\quad j\in[N-M+1] \nn
        \dt{ \eta_{j}} &=  A_{j}^{(1)}\eta_{j} +   \gamma^{-j}A_{j+M-1}^{(M)}\mathbf{u}^{\otimes(j+M-1)}, \quad j\in\{N-M+2, \ldots, N\}, 
    \end{align}
   
    Notice that in the final form above, we left the equations without rescaling notation to show the explicit dependence of $\gamma$ in the error equation. In a more compact form, the equation above can be written as
    \begin{equation}\label{eq:carleman_ODE}
    \dt{ \mathbf{\eta}} = \widetilde{\mathcal{A}}_N \mathbf{\eta} + \mathbf{b},  
    \end{equation}
    which has the following block matrix structure
    \begin{equation}
    \label{eq:Matrix_Carl_rescaled}
    \dt{}\begin{bmatrix}\mathbf{\eta}_1\\ \mathbf{\eta}_2 \\ \vdots\\ \mathbf{\eta}_{N-1} \\ \mathbf{\eta}_{N} \end{bmatrix} 
    =
    \underbrace{
   \begin{bmatrix}
    A_1^{(1)} & 0  & \cdots &  0 & 0 &  \gamma^{M-1}A^{(M)}_{M} & 0  & 0  \\
    0 &  A_2^{(1)}  & \cdots& 0 & 0 & 0 &  \gamma^{M-1}A^{(M)}_{M+1}  & 0 \\
    \vdots &   &   & \vdots & \vdots & &  \cdots  &  \vdots\\
    0 & 0 &  \cdots & 0 & 0 & 0 & \ A_{N-1}^{(1)}  & 0\\
    0 & 0 & \cdots & 0 & 0 & 0 & 0 &  A_N^{(1)} \\
    \end{bmatrix}.
    }_{\widetilde{\mathcal{A}}_N}
    \begin{bmatrix} \mathbf{\eta}_1\\ \mathbf{\eta}_2  \\ \vdots\\ \mathbf{\eta}_{N-1} \\ \mathbf{\eta}_{N} \end{bmatrix} 
    +
    \underbrace{
    \begin{bmatrix} \mathbf{b}_1\\ \mathbf{b}_2  \\ \vdots\\ \mathbf{b}_{N-1} \\ \mathbf{b}_{N} \end{bmatrix}   
    }_{\mathbf b},
    \end{equation} 
    where  
    \begin{align}
    \label{eq:compb}
    \mathbf{b}_j &= \mathbf{0}^{\otimes j},\quad j\in[N-M+1] \nn 
    \mathbf{b}_j &=  \gamma^{-j}A_{j+M-1}^{(M)}\mathbf{u}^{\otimes(j+M-1)} \quad j\in\{N-M+2, \ldots, N\}.
     \end{align}
    and $ \mathbf{0}$ represents a vector with all entries equal to zero with the same dimension as $\mathbf{u}$. The blocks $A_j^{(1)}$ and  $A^{(M)}_{M+j-1}$ are given in \cref{eq:carleman-assembly}.
    Similarly to Eq.\ (35) in the Supplementary Information of 
     \cite{liu2020efficient}, we obtain by considering the  derivative of $\|\mathbf{\eta}\|$ that
    \begin{align}
    \label{eq:deriv_norm}
    \dt{}(\mathbf{\eta}^{\dagger} \mathbf{\eta}) &= \dt{\mathbf{\eta}^{\dagger}}\mathbf{\eta} + \mathbf{\eta}^{\dagger}\dt{\mathbf{\eta}} \nn
    & = \mathbf{\eta}^{\dagger}\left( \widetilde{\mathcal{A}}_N +\widetilde{\mathcal{A}}^{\dagger}_N\right)\mathbf{\eta} + \mathbf{\eta}^{\dagger}\mathbf{b} +\mathbf{b}^{\dagger}\mathbf{\eta}.
    \end{align}

Using Gershgorin's circle theorem~\cite{Horn2012,Feingold1962BlockDD,VANDERSLUIS1979265}, the maximum eigenvalue of $\widetilde{\mathcal{A}}_N +\widetilde{\mathcal{A}}^{\dagger}_N$ is at most (see the discussion above for Eq.~\eqref{eq:matcirc})
    \begin{equation}
        2j\lambda_0 + (2j-M+1)\gamma^{M-1}\|F_M\| \, .
    \end{equation}
     We then have,
    \begin{equation}
    \label{eq:boundMat}
    \mathbf{\eta}^{\dagger}\left( \widetilde{\mathcal{A}}_N +\widetilde{\mathcal{A}}^{\dagger}_N\right)\mathbf{\eta}\leq \left(2N\lambda_0+(2N-M+1) \gamma^{M-1}\|F_M\|\right)\|\mathbf{\eta}\|^2.
    \end{equation}
    The remaining term to bound in Eq.~\eqref{eq:deriv_norm} is $\mathbf{\eta}^{\dagger}\mathbf{b} +\mathbf{b}^{\dagger}\mathbf{\eta}$, which has the upper bound
    \begin{align}
    \eta^{\dagger}\bb +\bb^{\dagger}\eta &\leq 2\|\bb\|\|\eta\|\nn
    &= 2\sum_{j=N-M+2}^{N} \gamma^{-j}\|A^{(M)}_{j+M-1}\| \cdot \|\mathbf{u}^{\otimes(j+M-1)}\| \cdot \|\eta\| \nn
    &= 2\|F_M\|\sum_{j=N-M+2}^{N} j \gamma^{-j}\|\uu^{\otimes(j+M-1)}\| \cdot \|\mathbf{\eta}\| \, ,
    \end{align}
    where we used \cref{eq:compb} for the vector norm of $\bb$, followed by using \cref{eq:carleman-assembly}.
    Now by taking the maximum value that $j$ can assume in the summation above gives
    \begin{align}
    \label{eq:vec_part}
    \eta^{\dagger}\bb +\bb^{\dagger}\eta &\leq 2N\|F_M\|\sum_{j=N-M+2}^{N}  \gamma^{-j} \|\uu^{\otimes(j+M-1)}\| \cdot \|\eta\| .
    \end{align}
    As stated in the theorem, we are dealing with dissipative problems so $\|\mathbf{u}(t)\| \leq \|\uu_{\mathrm{in}}\|$ for $t>0$, and we also take $\gamma = \|\uu_{\mathrm{in}}\|$.
    Therefore
    \begin{equation}\label{eq:vec_part2}
        \eta^{\dagger}\bb +\bb^{\dagger}\eta \leq 2N(M-1)\|F_M\|\|\uu_{\mathrm{in}}\|^{M-1} \|\mathbf{\eta}\| .
    \end{equation}
    
    Returning to \cref{eq:deriv_norm} together with the fact that $\dt{ \|\mathbf{\eta}\|} = (2\|\mathbf{\eta}\|)^{-1} \dt{\|\mathbf{\eta}\|^2}$, we have
    \begin{equation}
    \label{eq:dterror}
    \dt{\|\mathbf{\eta}\|} = (2\|\mathbf{\eta}\|)^{-1}\left[\mathbf{\eta}^{\dagger}\left( \mathcal{A}_N +\mathcal{A}^{\dagger}_N\right)\mathbf{\eta} + \mathbf{\eta}^{\dagger}\mathbf{b} +\mathbf{b}^{\dagger}\mathbf{\eta}\right]. 
    \end{equation}
    We can now combine the results from \cref{eq:boundMat} and \cref{eq:vec_part2} in the equation above to get 
    \begin{equation}
    \dt{\|\mathbf{\eta}\|} \leq N\left(\lambda_0 + \gamma^{M-1}\|F_M\|\right)\|\mathbf{\eta}\| + N(M-1)\|F_M\| \|\uu_{\mathrm{in}}\|^{M-1}.
    \end{equation}
    
    We know that if we have a differential equation given by
    \begin{equation}
     \dt{\|\mathbf{\eta}\|} = A\|\mathbf{\eta}\| + B,   
    \end{equation}
    where  $\|\mathbf{\eta}(0)\|= 0$, its solution is given by  
    \begin{equation}
    \label{eq:pde_sol}
    \|\mathbf{\eta}(t)\| = \frac{B}{A}\left(1 - e^{At}\right).
    \end{equation}
    Since in our case we have 
    \begin{align}
    A&=N\left(\lambda_0+\gamma^{M-1}\|F_M\|\right) \\
    B&=N(M-1)\|F_M\| \|\uu_{\mathrm{in}}\|^{M-1},\nonumber
    \end{align}
    we get
    \begin{equation}
    \|\mathbf{\eta}(t)\| \leq (M-1)\|F_M\|\ \|\uu_{\mathrm{in}}\|^{M-1}\frac{1-e^{N(\lambda_0 + \gamma^{M-1}\|F_M\|)t}}{|\lambda_0+\gamma^{M-1}\|F_M\||} \, .
    \end{equation}
    This concludes the proof as $\|\mathbf{\eta}(t)\|\geq \|\mathbf{\eta}_j(t)\|$ for all $t$.
\end{proof}

\subsection{Proof of bound on components of the vector of errors due to Carleman linearisation}
\label{app:proof_comp}

In this section we present the proof to \cref{lem:error_Comp}.
For this proof we use the definition of $f_{j,k,M}(|\lambda_0|t)$ as
\begin{align}\label{eq:fdef}
    f_{j,k,M}(\tau) &:= \left[\prod_{\ell=0}^{k-1}(\ell M-\ell+j)\right] \int_0^\tau d\tau_1 \, e^{-j(\tau-\tau_1)}\int_0^{\tau_1} d\tau_2 \, e^{-[(M-1)+j](\tau_1-\tau_2)}
    \int_0^{\tau_2} d\tau_3 \, e^{-[2(M-1)+j](\tau_2-\tau_3)} \cdots \nn & \qquad
    \cdots \int_0^{\tau_{k-1}} d\tau_k \, e^{-[(k-1)(M-1)+j](\tau_{k-1}-\tau_k)} \, ,
\end{align}
This function has the closed-form expression given in \cref{lem:error_Comp},
which is proven in \cref{thm:fvals} below.
We plot these formulae for the example $M=2$ in Fig.~\ref{fig:factor}.
The blue curve on the left is the simple upper bound, so we can see that our expression gives significantly tighter bounds.

\begin{figure}
\begin{center}
	\includegraphics[scale=0.5]{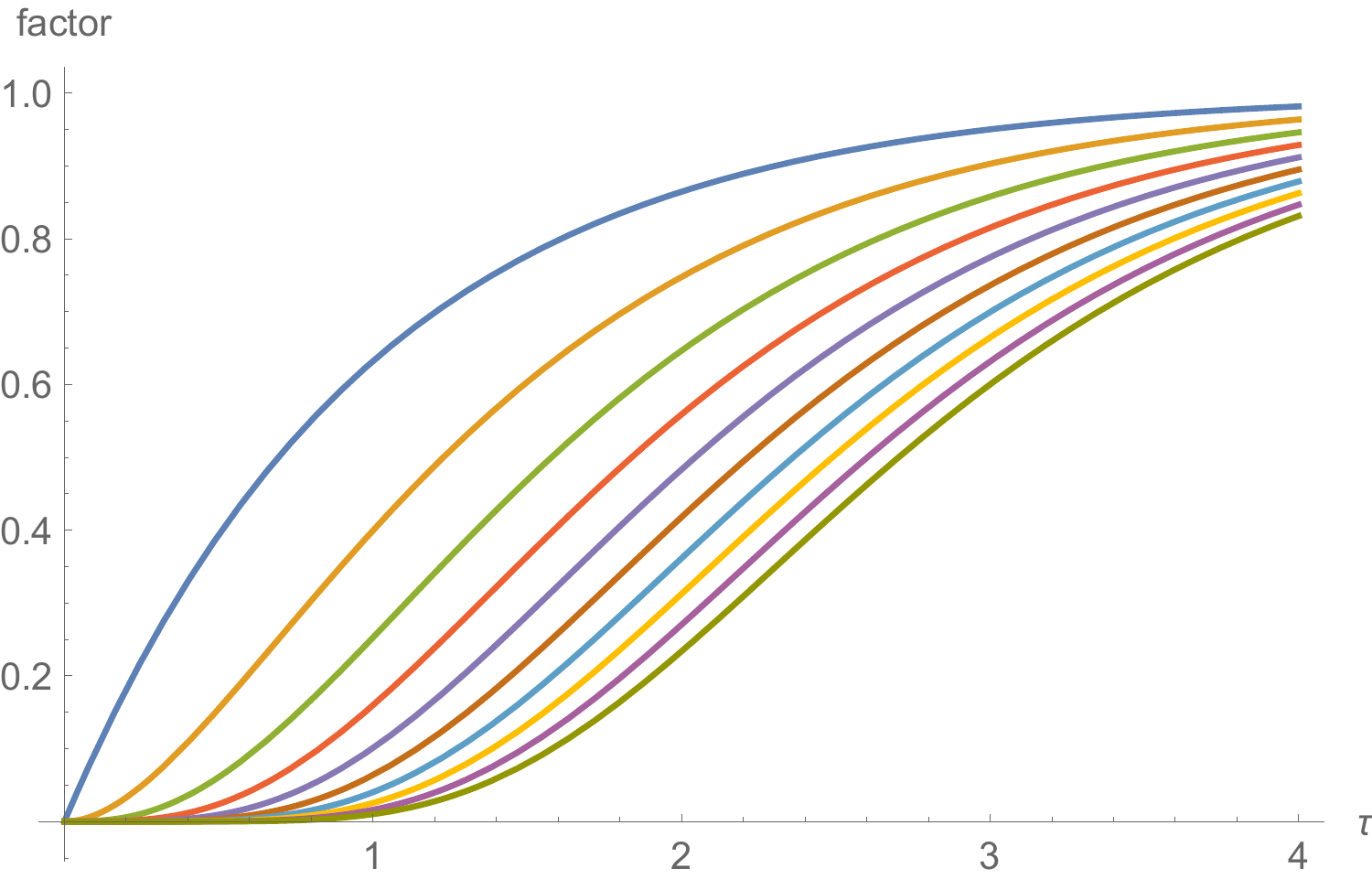}
	\caption{The factors in the upper bound $f_{j,k,M}(\tau)$ with $j=1$ and $M=2$.
The lines are for $k$ equal to $1$ to $10$ ordered from left to right.
As we show, $f_{j,k,M}(\tau)$ is monotonically decreasing with $k$.
}\label{fig:factor}
\end{center}
\end{figure}

\begin{theorem}\label{thm:fvals}
    The function $f_{j,k,M}(\tau)$ defined by Eq.~\eqref{eq:fdef}
for real $\tau\ge 0$ and natural numbers $j\ge 1$, $k\ge 1$, and $M\ge 2$, satisfies
\begin{equation}
    f_{j,k,M}(\tau) = 1-\frac{(M-1) \, \Gamma(k+j/(M-1))}{(k-1)! \, \Gamma(j/(M-1))}\sum_{\ell=0}^{k-1} (-1)^\ell \binom{k-1}{\ell} \frac{e^{-(\ell M-\ell+j) \tau}}{\ell M-\ell+j} \, .
\end{equation}
\end{theorem}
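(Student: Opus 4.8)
The plan is to pass to Laplace transforms, where the iterated-integral structure of $f_{j,k,M}$ collapses into a product and the closed form falls out of a partial-fraction expansion. First I would introduce the shorthand $\alpha_\ell := \ell(M-1)+j = \ell M-\ell+j$ for the exponential rates appearing in \cref{eq:fdef}, so that the prefactor is $\prod_{\ell=0}^{k-1}\alpha_\ell$ and the rates $\alpha_0,\dots,\alpha_{k-1}$ are distinct (their common difference is $M-1\ge 1$, and all are positive since $j\ge 1$). The key preprocessing step is the linear change of variables $s_i := \tau_{i-1}-\tau_i$ (with $\tau_0:=\tau$), which is unipotent and hence has unit Jacobian; it turns the nested domain $\tau\ge\tau_1\ge\cdots\ge\tau_k\ge 0$ into the simplex $\{s_i\ge 0,\ \sum_i s_i\le\tau\}$ and the integrand into $\exp(-\sum_{i=1}^k \alpha_{i-1}s_i)$, giving
\[
f_{j,k,M}(\tau) = \Big(\prod_{\ell=0}^{k-1}\alpha_\ell\Big)\int_{\substack{s_i\ge 0\\ \sum_i s_i\le \tau}} e^{-\sum_{i=1}^k \alpha_{i-1}s_i}\,ds_1\cdots ds_k .
\]

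Next I would take the Laplace transform in $\tau$. Writing the simplex constraint as an indicator and using $\int_0^\infty e^{-p\tau}\mathds{1}[\sum_i s_i\le\tau]\,d\tau = e^{-p\sum_i s_i}/p$ (valid for $\Re p$ large, where Fubini applies), the $s$-integral factorizes completely and yields
\[
\widehat{f_{j,k,M}}(p) = \frac{1}{p}\prod_{\ell=0}^{k-1}\frac{\alpha_\ell}{p+\alpha_\ell}.
\]
Since all poles (at $p=0$ and $p=-\alpha_m$) are simple, a partial-fraction expansion gives $\widehat{f}=1/p+\sum_{m=0}^{k-1}A_m/(p+\alpha_m)$ with residue $1$ at the origin and $A_m=\big(\prod_\ell\alpha_\ell\big)/\big(-\alpha_m\prod_{\ell\ne m}(\alpha_\ell-\alpha_m)\big)$. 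Inverting term by term then produces $f_{j,k,M}(\tau)=1+\sum_{m=0}^{k-1}A_m e^{-\alpha_m\tau}$, which already has the claimed shape; the constant $1$ matches the expected $\tau\to\infty$ limit, and the $k=1$ case reduces to $1-e^{-j\tau}$ immediately.

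The remaining work is to massage $A_m$ into the stated Gamma/binomial coefficient. Here I would use $\alpha_\ell-\alpha_m=(\ell-m)(M-1)$, so that $\prod_{\ell\ne m}(\alpha_\ell-\alpha_m)=(M-1)^{k-1}(-1)^m m!\,(k-1-m)!$, together with the Pochhammer identity $\prod_{\ell=0}^{k-1}\alpha_\ell=(M-1)^k\,\Gamma(k+j/(M-1))/\Gamma(j/(M-1))$. Substituting these and rewriting $m!\,(k-1-m)!=(k-1)!/\binom{k-1}{m}$ collapses $A_m$ to exactly $-\frac{(M-1)\Gamma(k+j/(M-1))}{(k-1)!\,\Gamma(j/(M-1))}\frac{(-1)^m\binom{k-1}{m}}{\alpha_m}$, which is the coefficient in the theorem.

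I expect the only real obstacle to be the sign and factorial bookkeeping in this last step — in particular evaluating $\prod_{\ell\ne m}(\ell-m)$ as $(-1)^m m!\,(k-1-m)!$ and keeping the powers of $(M-1)$ straight — rather than anything conceptually deep; the Laplace/partial-fraction route makes the rest essentially mechanical. An alternative I would keep in reserve is induction on $k$ via the first-order relation $\tfrac{d}{d\tau}f_{j,k,M}=-j f_{j,k,M}+j f_{j+M-1,k-1,M}$ with $f_{j,k,M}(0)=0$, obtained by differentiating the outermost integral, but this leads to a messier binomial-sum identity for the $e^{-j\tau}$ coefficient, so I would prefer the transform approach.
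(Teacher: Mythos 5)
Your proof is correct, but it takes a genuinely different route from the paper's. The paper evaluates the nested integrals directly: it exchanges the order of integration, proves by induction the closed form Eq.~\eqref{eq:gval} for the inner $(k-1)$-fold integral $g_{k,M}(\tau_k,\tau)=[e^{-(M-1)\tau_k}-e^{-(M-1)\tau}]^{k-1}/\big((k-1)!\,(M-1)^{k-1}\big)$, binomially expands that power and integrates the outermost variable, proves the Pochhammer identity Eq.~\eqref{eq:gammas} by a second induction, and finally fixes the constant term by a separate $\tau\to\infty$ limiting argument on the shifted-variable form Eq.~\eqref{eq:fshifted}. Your simplex substitution plus Laplace transform compresses all of this into the single factorisation $\widehat{f}(p)=p^{-1}\prod_{\ell=0}^{k-1}\alpha_\ell/(p+\alpha_\ell)$, and the partial-fraction expansion over $k+1$ distinct simple poles (distinct because the $\alpha_\ell$ form an arithmetic progression of step $M-1\ge 1$, all positive) delivers every coefficient at once — in particular the constant $1$ arrives as the residue at $p=0$, so you never need the paper's limit argument. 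The residue bookkeeping you flag as the main hazard checks out: $\prod_{\ell\ne m}(\alpha_\ell-\alpha_m)=(M-1)^{k-1}(-1)^m m!\,(k-1-m)!$ together with $m!\,(k-1-m)!=(k-1)!/\binom{k-1}{m}$ and the Pochhammer identity reproduces the stated coefficient exactly; this step carries the same combinatorial content as the paper's binomial expansion of $g_{k,M}$. Two things you should state explicitly to make the argument airtight: Tonelli (positive integrand) justifies the interchange defining $\widehat{f}$, and since $f_{j,k,M}$ and the claimed closed form are both continuous and of exponential order, uniqueness of the Laplace transform (Lerch's theorem) is what converts equality of transforms on a half-plane into equality of functions. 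The trade-off: your route is shorter and unifies the constant with the exponential coefficients, at the price of importing transform machinery; the paper's elementary inductions are longer but self-contained, and its intermediate objects — the recurrence Eq.~\eqref{eq:fiter}, which is precisely the ODE you mention as a fallback, and the shifted form Eq.~\eqref{eq:fshifted} — are reused immediately afterwards to prove the monotonicity $f_{j,k,M}(\tau)\le f_{j,k-1,M}(\tau)$ that the main text relies on, a by-product your transform argument does not hand over directly.
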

\begin{proof}
We first regroup the exponentials as
\begin{align}
    f_{j,k,M}(\tau) &= \left[\prod_{\ell=0}^{k-1}(\ell M-\ell+j)\right] e^{-j\tau} \int_0^\tau d\tau_1 \, e^{-(M-1)\tau_1}\int_0^{\tau_1} d\tau_2 \, e^{-(M-1)\tau_2}
    \int_0^{\tau_2} d\tau_3 \, e^{-(M-1)\tau_3} \cdots \nn & \qquad
    \cdots \int_0^{\tau_{k-1}} d\tau_k \, e^{[(k-1)(M-1)+j]\tau_n} \, .
\end{align}
The integral is over values of $\tau_\ell$ satisfying $\tau\ge \tau_1\ge\tau_2 \ge \cdots \ge \tau_k$.
We can exchange the order of the integrals in the standard way so they are given as
\begin{align}
    f_{j,k,M}(\tau) &= \left[\prod_{\ell=0}^{k-1}(\ell M-\ell+j)\right] e^{-j\tau} \int_{0}^\tau d\tau_k \, e^{[(k-1)(M-1)+j]\tau_k}
    \int_{\tau_k}^{\tau} d\tau_{k-1} \, e^{-(M-1)\tau_{k-1}} \cdots \nn & \qquad
    \cdots 
    \int_{\tau_3}^{\tau} d\tau_2 \, e^{-(M-1)\tau_2} \int_{\tau_2}^\tau d\tau_1 \, e^{-(M-1)\tau_1} \, .
\end{align}
The advantage of this form is that we have a sequence of integrals of the same form, so we can give a general expression that can be proven by induction.
In particular, let us define the repeated integral
\begin{align}
    g_{k,M}(\tau_k,\tau) &:= 
    \int_{\tau_k}^{\tau} d\tau_{k-1} \, e^{-(M-1)\tau_{k-1}} 
    \cdots 
    \int_{\tau_3}^{\tau} d\tau_2 \, e^{-(M-1)\tau_2} \int_{\tau_2}^\tau d\tau_1 \, e^{-(M-1)\tau_1} \, .
\end{align}
The closed-form expression will be shown to be, for $k\ge 1$,
\begin{equation}\label{eq:gval}
    g_{k,M}(\tau_k,\tau) = \frac{ [e^{-(M-1)\tau_k} - e^{-(M-1)\tau}]^{k-1}}{
(k-1)! (M-1)^{k-1}} \, .
\end{equation}
In the case of $k=1$ the definition of $g_{k,M}(\tau_k,\tau)$ can be just taken to be $1$ (no integrals), and Eq.~\eqref{eq:gval} holds.
In the simple case of $k=2$ we have, from the definition
\begin{align}
    g_{2,M}(\tau_2,\tau) &= \int_{\tau_2}^\tau d\tau_1 \, e^{-(M-1)\tau_1} \nn
    &= \frac 1{M-1} \left[ e^{-(M-1)\tau_2} - e^{-(M-1)\tau} \right] \, .
\end{align}
This also agrees with the claimed expression.
More generally, if Eq.~\eqref{eq:gval} holds for $k$, then we obtain
\begin{align}
    g_{k+1,M}(\tau_{k+1},\tau) & = \int_{\tau_{k+1}}^\tau d\tau_k \, e^{-(M-1)\tau_k} g_{k,M}(\tau_k,\tau) \nn
    &= \int_{\tau_{k+1}}^\tau d\tau_k \, e^{-(M-1)\tau_k}
    \frac{ [e^{-(M-1)\tau_k} - e^{-(M-1)\tau}]^{k-1}}{
(k-1)! (M-1)^{k-1}} \nn
&= \left[ -\frac{ [e^{-(M-1)\tau_k} - e^{-(M-1)\tau}]^{k}}{
k! (M-1)^{k}} \right]_{\tau_{k+1}}^\tau \nn
&= \frac{ [e^{-(M-1)\tau_{k+1}} - e^{-(M-1)\tau}]^{k}}{
k! (M-1)^{k}} \, .
\end{align}
Thus Eq.~\eqref{eq:gval} holds for $k+1$, and must hold for all $k\ge 1$ by induction.

Now to obtain the expression for $f_{j,k,M}(\tau)$, we use
\begin{equation}
    f_{j,k,M}(\tau) = \left[\prod_{\ell=0}^{k-1}(\ell M-\ell+j)\right] e^{-j\tau} \int_0^\tau 
    d\tau_k \, e^{[(k-1)(M-1)+j]\tau_n}
    g_{k,M}(\tau_k,\tau) \, .
\end{equation}
Expanding the expression for $g_{k,M}(\tau_k,\tau)$ involving a power into a sum gives
\begin{align}
    f_{n,M}(\tau) &= \frac{\prod_{\ell=0}^{k-1}(\ell M-\ell+j) }{(k-1)! (M-1)^{k-1}}e^{-j\tau} \int_0^\tau 
    d\tau_k \, e^{[(k-1)(M-1)+j]\tau_k}
    \sum_{\ell=0}^{k-1}(-1)^\ell\binom{k-1}{\ell} 
    e^{-(k-1-\ell)(M-1)\tau_{k}} e^{-\ell(M-1)\tau} \nn
    &=\frac{\prod_{\ell=0}^{k-1}(\ell M-\ell+j) }{(k-1)! (M-1)^{k-1}} \int_0^\tau 
    d\tau_n \, 
    \sum_{\ell=0}^{k-1}(-1)^\ell\binom{k-1}{\ell} 
    e^{-(\ell M-\ell+j)(\tau-\tau_k)} \nn
    &=\frac{\prod_{\ell=0}^{k-1}(\ell M-\ell+j) }{(k-1)! (M-1)^{k-1}} \, 
    \sum_{\ell=0}^{k-1}(-1)^\ell\binom{k-1}{\ell} 
    \left[ -\frac{e^{-(\ell M-\ell+j)(\tau-\tau_k)}}{(\ell M-\ell+j)}\right]_{0}^\tau \nn
    &=\frac{\prod_{\ell=0}^{k-1}(\ell M-\ell+j) }{(k-1)! (M-1)^{k-1}} \, 
    \sum_{\ell=0}^{k-1}(-1)^\ell\binom{k-1}{\ell} 
     \frac{1-e^{-(\ell M-\ell+j)\tau}}{\ell M-\ell+j}
    \, .
\end{align}
Next we aim to show that
\begin{equation}\label{eq:gammas}
    \frac{\prod_{\ell=0}^{k-1}(\ell M-\ell+j) }{(M-1)^{k}} = \frac{\Gamma(k+j/(M-1))}{\Gamma(j/(M-1))} \, .
\end{equation}
First, it is obvious that 
\begin{equation}
    \frac{\prod_{\ell=0}^{k-1}(\ell M-\ell+j) }{(M-1)^{k}} = \prod_{\ell=0}^{k-1}(\ell+j/(M-1)) \, .
\end{equation}
Now for $k=0$ both sides of Eq.~\eqref{eq:gammas} are equal to $1$, because there are no factors in the product, and $k+j/(M-1)=j/(M-1)$ so the result is trivial.
If Eq.~\eqref{eq:gammas} is correct for some $k$, we obtain
\begin{align}
    \frac{\Gamma(k+1+j/(M-1))}{\Gamma(j/(M-1))} &= \frac{\Gamma(k+1+j/(M-1))}{\Gamma(k+j/(M-1))}
    \frac{\Gamma(k+j/(M-1))}{\Gamma(j/(M-1))} \nn
    &= (k+j/(M-1)) \prod_{\ell=0}^{k-1}(\ell+j/(M-1)) \nn
    &= \prod_{\ell=0}^{k}(\ell+j/(M-1)) \, .
\end{align}
In the second line we have used $\Gamma(x+1)=x\Gamma(x)$ and assumed Eq.~\eqref{eq:gammas} holds for $k$.
That means Eq.~\eqref{eq:gammas} holds for all $k\ge 0$ by induction.
Thus we can replace the product with the expression in terms of gamma functions to show that
\begin{equation}
    f_{j,k,M}(\tau) = \frac{(M-1) \, \Gamma(k+j/(M-1))}{(k-1)! \, \Gamma(j/(M-1))}\sum_{\ell=0}^{k-1} (-1)^\ell \binom{k-1}{\ell} \frac{1-e^{-(\ell M-\ell+j) \tau}}{\ell M-\ell+j} \, .
\end{equation}

Next we show that $f_{j,k,M}(\tau)$ is equal to $1$ in the limit $\tau\to\infty$.
If we use shifted variables of integration $\tilde\tau_\ell=\tau_\ell-\tau$,
then the definition of $f_{j,k,M}(\tau)$ gives
\begin{align}\label{eq:fshifted}
    f_{j,k,M}(\tau) &= \left[\prod_{\ell=0}^{k-1}(\ell M-\ell+j)\right] \int_{-\tau}^0 d\tilde\tau_1 \, e^{j\tilde\tau_1}\int_{-\tau}^{\tilde\tau_1} d\tilde\tau_2 \, e^{-(M-1+j)(\tilde\tau_1-\tilde\tau_2)}
    \int_{-\tau}^{\tilde\tau_2} d\tilde\tau_3 \, e^{-[2(M-1)+j](\tilde\tau_2-\tilde\tau_3)} \cdots \nn & \qquad
    \cdots \int_{-\tau}^{\tilde\tau_{n-1}} d\tilde\tau_k \, e^{-[(k-1)(M-1)+j](\tilde\tau_{k-1}-\tilde\tau_k)} \, .
\end{align}
The limit $\tau\to\infty$ then gives
\begin{align}
    \lim_{\tau\to\infty}f_{j,k,M}(\tau) &= \left[\prod_{\ell=0}^{k-1}(\ell M-\ell+j)\right] \int_{-\infty}^0 d\tilde\tau_1 \, e^{j\tilde\tau_1}\int_{-\infty}^{\tilde\tau_1} d\tilde\tau_2 \, e^{-(M-1+j)(\tilde\tau_1-\tilde\tau_2)}
    \int_{-\infty}^{\tilde\tau_2} d\tilde\tau_3 \, e^{-[2(M-1)+j](\tilde\tau_2-\tilde\tau_3)} \cdots \nn & \qquad
    \cdots \int_{-\infty}^{\tilde\tau_{k-1}} d\tilde\tau_k \, e^{-[(k-1)(M-1)+1](\tilde\tau_{k-1}-\tilde\tau_k)} \, .
\end{align}
Then it is easy to see that each successive integral gives division by one of the factors in the product, so the overall expression is equal to 1.

Therefore, since the negative exponentials in our expression for $f_{j,k,M}(\tau)$ approach zero in the limit $\tau\to\infty$, the constant term must be equal to $1$, and we can give $f_{j,k,M}(\tau)$ as
\begin{align}
    f_{j,k,M}(\tau) &= 1-\frac{(M-1)\Gamma(k+j/(M-1))}{(k-1)! \, \Gamma(j/(M-1))}\sum_{\ell=0}^{k-1} (-1)^\ell \binom{k-1}{\ell} \frac{e^{-(\ell M-\ell+j) \tau}}{\ell M-\ell+j}
    \, .
\end{align}
\end{proof}

For the use in our derivations, note that $f_{j,k,M}(\tau)$ satisfies the recurrence relation (which is obvious from its definition)
\begin{equation}\label{eq:fiter}
    f_{j,k,M}(\tau) = j \int_0^\tau d\tau_1 \, e^{-j(\tau-\tau_1)}f_{M-1+j,k-1,M}(\tau_1) \, .
\end{equation}
Another useful property is that $f_{j,k,M}(\tau)\le f_{j,k-1,M}(\tau)$, so $f_{j,k,M}(\tau)$ is monotonically decreasing with $k$.
This can be seen using Eq.~\eqref{eq:fshifted}, and upper bounding it by taking the limit as the last integral goes to infinity as
\begin{align}
    f_{j,k,M}(\tau) &\le \left[\prod_{\ell=0}^{k-1}(\ell M-\ell+j)\right] \int_{-\tau}^0 d\tilde\tau_1 \, e^{j\tilde\tau_1}\int_{-\tau}^{\tilde\tau_1} d\tilde\tau_2 \, e^{-(M-1+j)(\tilde\tau_1-\tilde\tau_2)}
    \int_{-\tau}^{\tilde\tau_2} d\tilde\tau_3 \, e^{-[2(M-1)+j](\tilde\tau_2-\tilde\tau_3)} \cdots \nn & \qquad
    \cdots \int_{-\infty}^{\tilde\tau_{k-1}} d\tilde\tau_k \, e^{-[(k-1)(M-1)+j](\tilde\tau_{k-1}-\tilde\tau_k)}  \nn
    &\le \left[\prod_{\ell=0}^{k-2}(\ell M-\ell+j)\right] \int_{-\tau}^0 d\tilde\tau_1 \, e^{j\tilde\tau_1}\int_{-\tau}^{\tilde\tau_1} d\tilde\tau_2 \, e^{-(M-1+j)(\tilde\tau_1-\tilde\tau_2)}
    \int_{-\tau}^{\tilde\tau_2} d\tilde\tau_3 \, e^{-[2(M-1)+j](\tilde\tau_2-\tilde\tau_3)} \cdots \nn & \qquad
    \cdots \int_{-\tau}^{\tilde\tau_{k-2}} d\tilde\tau_{k-1} \, e^{-[(k-2)(M-1)+j](\tilde\tau_{k-2}-\tilde\tau_{k-1})} \nn
    &= f_{j,k-1,M}(\tau)
    \, .
\end{align}

Now we proceed to the proof of \cref{lem:error_Comp}.
\begin{proof}
 We recall from  \cref{eq:elem_mat}  
\begin{align}\label{eq:line1}
\dt{\eta_j} &=  A_j^{(1)}\eta_j +  \gamma^{M-1}A_{j+M-1}^{(M)}\eta_{j+M-1},\quad j\in[N-M+1] \\ 
\dt{ \eta_{j}} &=  A_{j}^{(1)}\eta_{j} +   \gamma^{-j}A_{j+M-1}^{(M)}\mathbf{u}^{\otimes(j+M-1)}, \quad j\in\{N-M+2, \ldots, N\}. \label{eq:line2}
\end{align}

The method to bound the norms of $\eta_{j}$ is to first use the second line \eqref{eq:line2} for $j\in\Omega_1$ with
\begin{equation}
\Omega_1 \coloneqq \{N-M+2,\dots,N\}.
\end{equation}
These bounds can be derived just using that value of $j$, since the equation does not depend on $\eta_{j}$ for any other values of $j$.
For smaller values of $j$ we need to use the first line \eqref{eq:line1}, but that depends on $\eta_{j+M-1}$.
If we have $j\in \Omega_2$ for
\begin{equation}
\Omega_2 \coloneqq \{N-2M+3,\cdots, N-M+1\},    
\end{equation}
then $j+M-1\in \Omega_1$.
We can therefore use the bound derived for $j\in\Omega_1$ to bound the second term in Eq.~\eqref{eq:line1} and thereby derive the bound on $\eta_{j}$ for $j\in\Omega_1$.

Then for $j$ less than $N-2M+3$ we can use the bounds on $\eta_{j}$ for $j\in\Omega_2$, and so forth, to eventually derive a bound on $\eta_{1}$.
In particular we define
\begin{equation}
 \Omega_k \coloneqq \{N-k(M-1)+1,\ldots, N-(k-1)(M-1)\},    
\end{equation}
for $k =1, 2,\cdots, \lceil\frac{N}{M-1}\rceil$.
Then we work backwards in steps of $M-1$ to derive bounds on $\eta_{j}$ for $j\in\Omega_k$ using the bound on $\eta_{j+M-1}$ for $j+M-1\in\Omega_{k-1}$.
In particular, $1\in\Omega_k$ for $k=\lceil\frac{N}{M-1}\rceil$.
This can be seen from the extremal cases; first, if $N$ is a multiple of $M-1$, then $k=N/(M-1)$, so
\begin{equation}
    N-k(M-1)+1 = N - [N/(M-1)](M-1)+1 = 1 \, .
\end{equation}
The other extremal case is where $N-1$ is a multiple of $M-1$, so the ceiling function gives the maximal rounding up of $N/(M-1)$.
Then $k-1=(N-1)/(M-1)$, and
\begin{equation}
    N-(k-1)(M-1) = N - [(N-1)/(M-1)](M-1) = 1 \, .
\end{equation}
By deriving bounds on $\eta_{j}$ in the sequence of $\Omega_k$ sets for $k$ from 1 to $\lceil\frac{N}{M-1}\rceil$ we are therefore able to provide the bound on $\eta_{1}$.

More specifically, let us start with $j\in\Omega_1$, in which case
\begin{equation}\label{eq:errbnd1}
 \mathbb{\eta}_{j}(t) = \gamma^{-j}\int_0^t e^{ A_{j}^{(1)}(t-s_0)}A_{j+M-1}^{(M)}\mathbf{u}^{\otimes(j+M-1)}\dd s_0 \, .
\end{equation}
Now for purely dissipative problems, as displayed in \cref{eq:R}, we have $\|e^{A_{j}^{(1)}t}\| \leq e^{j \lambda_0t}$.
In addition, $\|A_{j+M-1}^{(M)}\|=j\|F_M\|$ and $\|\mathbf{u}(t)\|\leq\|\mathbf{u}_{\mathrm{in}}\|$, so
\begin{align}\label{eq:errbnd2}
 \|\mathbb{\eta}_{j}(t)\| &\leq j\gamma^{-j}\|F_M\|\|\mathbf{u}_{\mathrm{in}}\|^{j+M-1} \int_0^t e^{j   \lambda_0(t-s_0)}\dd s_0\nn
&= \gamma^{-j} \frac{\|F_M\|\|\uu_{\mathrm{in}}\|^{j+M-1}}{|\lambda_0|} \left( 1-e^{j   \lambda_0t} \right) .
\end{align}
For the case $k=1$, our formula for $f_{j,k,M}$ gives
\begin{equation}
    f_{j,1,M}(\tau) = 1-e^{-j\tau},
\end{equation}
and so we find that for $k=1$ we obtain
\begin{align}
 \|\mathbb{\eta}_{j}(t)\| &\leq \left( \frac{\|\uu_{\mathrm{in}}\|}{\gamma} \right)^j
 \frac{\|F_M\|\|\uu_{\mathrm{in}}\|^{M-1}}{|\lambda_0|} f_{j,1,M}\left( |\lambda_0|t \right) .\label{eq:first_inter}
\end{align}

Now for more general $j\in \Omega_k$, we will show
\begin{equation}
\label{eq:k1_int}
\|\eta_{j}(t)\| \leq \left( \frac{\|\uu_{\mathrm{in}}\|}{\gamma} \right)^j \left(\frac{\|F_M\|\|\mathbf{u}_{\mathrm{in}}\|^{M-1}}{|\lambda_0|}\right)^{k} 
f_{j,k,M}\left( |\lambda_0|t \right)
\end{equation}
We have already shown this for $k=1$ where $j\in\Omega_1$, so we just need to show the iteration step to show the general result by induction.
For $j$ smaller than $N-M+2$, we use \cref{eq:line1}, and \cref{eq:k1_int} for $\eta_{j+M-1}$.
That is,
\begin{equation}
\dt{\eta_{j}} = A_{j}^{(1)}\eta_{j} + \gamma^{M-1}A_{j+M-1}^{(M)}\eta_{j+M-1} \, ,
\end{equation}
leads to
\begin{equation}
 \mathbb{\eta}_{j}(t) = \gamma^{M-1}\int_0^t e^{A_{j}^{(1)}(t-s_1)}A_{j+M-1}^{(M)}\mathbf{\eta}_{j+M-1}(s_1)\dd s_1.
\end{equation}
To upper bound the component $ \mathbb{\eta}_{j}(t)$ with $j\in\Omega_k$, we use the bound in \cref{eq:k1_int} for $j+M-1\in\Omega_{k-1}$.
(That is, we are assuming the expression holds for $k-1$ in order to derive it for $k$.)
That yields the upper bound
\begin{align}
\label{eq:bound_Om2}
 \|\mathbb{\eta}_{j}(t)\| &\leq \gamma^{M-1}\int_0^t \|e^{A_{j}^{(1)}(t-s_1)}\|\cdot \|A_{j+M-1}^{(M)}\mathbf{\eta}_{j+M-1}(s_1)\|\dd s_1 \nonumber\\
 &\leq \gamma^{M-1}j\|F_M\|\int_0^t e^{j\lambda_0(t-s_1)}\|\mathbf{\eta}_{j+M-1}(s_1)\|\dd s_1\nonumber\\
 &\leq \gamma^{M-1}j\|F_M\|
 \left( \frac{\|\uu_{\mathrm{in}}\|}{\gamma} \right)^{j+M-1}\int_0^t e^{j\lambda_0(t-s_1)} 
 \left(\frac{\|F_M\|\|\mathbf{u}_{\mathrm{in}}\|^{M-1}}{|\lambda_0|}\right)^{k-1} 
f_{j+M-1,k-1,M}\left( |\lambda_0|s_1 \right) \nn
&= \frac{\|F_M\| \|\mathbf{u}_{\mathrm{in}}\|^{M-1}}{|\lambda_0|}\left( \frac{\|\uu_{\mathrm{in}}\|}{\gamma} \right)^j
 \left(\frac{\|F_M\| \|\mathbf{u}_{\mathrm{in}}\|^{M-1}}{|\lambda_0|}\right)^{k-1} 
f_{j,k,M}\left( |\lambda_0|t \right) \nn
&= \left( \frac{\|\uu_{\mathrm{in}}\|}{\gamma} \right)^j\left(\frac{\|F_M\| \|\mathbf{u}_{\mathrm{in}}\|^{M-1}}{|\lambda_0|}\right)^{k} 
f_{j,k,M}\left( |\lambda_0|t \right),
\end{align}
where in the second-last line we used Eq.~\eqref{eq:fiter}.
This shows the result \eqref{eq:k1_int} for $k-1$ implies the result for $k$, and therefore for all $k$ by induction.

Since
\begin{equation}
    R = \frac{\|F_M\|}{|\lambda_0|}\|\mathbf{u}_{\mathrm{in}}\|^{M-1} ,
\end{equation}
we obtain the result given in the theorem.
Moreover, since $j=1$ is an element of $\Omega_k$ for $k=\lceil\frac{N}{M-1}\rceil$, we obtain the claimed upper bound on $\|\eta_1\|$.
\end{proof}

\subsection{Bound on the error in terms of max-norm}
\label{app:carl_errMax}
For the case where the ODE is obtained from discretising the ODE, then the above bound on the error for the ODE can be used, but $\|\mathbf{u}_{\mathrm{in}}\|$ will increase with the number of discretisation points.
Instead we can derive a bound using the max-norm, which will be (approximately) independent of the number of discretisation points.
However, we will find that the possibly increasing max-norm with the higher-order discrete Laplacians can cause the bound to be larger than for the first-order discrete Laplacian.

We can start from Eq.~\eqref{eq:errbnd1} above, and use the fact that for the discretised PDE $A_{j+M-1}^{(M)}$ acts only on the components of $\mathbf{u}^{\otimes(j+M-1)}$ with powers of the entries of $\mathbf{u}$.
That means that
\begin{equation}
 \|A_{j+M-1}^{(M)}\mathbf{u}^{\otimes(j+M-1)}\|_{\max} \le
\| A_{j+M-1}^{(M)}\|_{\infty}
\| \mathbf{u}\|_{\max}^{j+M-1} \, .
\end{equation}
Hence we can obtain the equivalent of Eq.~\eqref{eq:errbnd2} as
\begin{equation}
    \|\mathbb{\eta}_{j}(t)\|_{\max}  \lesssim j\gamma^{-j}\|F_M\|_{\infty}\|\mathbf{u}_{\mathrm{in}}\|_{\max}^{j+M-1} \int_0^t 
    \|e^{ A_{j}^{(1)}(t-s_0)}\|_{\infty}
    \dd s_0 \, .
\end{equation}
This expression is obtained in much the same way as before, but using the $\infty$-norm as the induced matrix norm for the max-norm on the vectors.
Here it has been assumed that $\| \mathbf{u}\|_{\max}\le \|\mathbf{u}_{\mathrm{in}}\|_{\max}$, which is correct for the exact PDE, but is not strictly correct for the higher-order discretised PDE.
But, provided the error to the discretisation is appropriately bounded the max-norms should be sufficiently close to that for the continuous PDE, and so $\| \mathbf{u}\|_{\max}\lesssim \|\mathbf{u}_{\mathrm{in}}\|_{\max}$ (which is why we use $\lesssim$ in the expression above).

A further difficulty now is that we need a bound on the $\infty$-norm of $e^{ A_{j}^{(1)}(t-s_0)}$.
For the discretised PDE we consider, we have $F_{1}=DL_{k,d}+c\identity^{\otimes d}$.
First, since $A_{j}^{(1)}$ is constructed from $j$ commuting terms with $F_{1}$ acting on different subsystems, we have
\begin{align}
    \| e^{ A_{j}^{(1)}(t-s_0)} \|_{\infty} &= 
    \| e^{ F_{1}(t-s_0)} \|_{\infty}^j \, .
\end{align}
Next, since the identity commutes with the discretised Laplacian, we have
\begin{align}
    \| e^{ F_{1}(t-s_0)} \|_{\infty}^j &=
    \| e^{ DL_{\kappa,d}(t-s_0)} \|_{\infty}^j e^{jc(t-s_0)} \, .
\end{align}
There we are using $\kappa$ for the order of the discretisation to avoid confusion with $k$ used for iteration in the derivation.
Since the derivatives in the different directions commute, we have
\begin{align}
    \| e^{ DL_{\kappa,d}(t-s_0)} \|_{\infty}^j e^{jc(t-s_0)}
    &= \| e^{ DL_{\kappa}(t-s_0)} \|_{\infty}^{jd} e^{jc(t-s_0)}
    \, .
\end{align}
Ideally we should have that the $\infty$-norm of the discretised Laplacian is equal to 1, but it can be larger by a moderate amount as discussed in Sec.~\ref{sec:stab}.
If we define the maximum of this norm as
\begin{equation}
    G_\kappa := \max_{\tau\ge 0} \| e^{ L_{\kappa}\tau} \|_{\infty} \, ,
\end{equation}
then we have
\begin{align}
    \| e^{ A_{j}^{(1)}(t-s_0)} \|_{\infty} &\le  
    G_\kappa^{jd} e^{jc(t-s_0)} \, .
\end{align}
That then gives us
\begin{align}
    \|\mathbb{\eta}_{j}(t)\|_{\max}  &\lesssim j\gamma^{-j}\|F_M\|_{\infty}\|\mathbf{u}_{\mathrm{in}}\|_{\max}^{j+M-1} G_\kappa^{jd} \int_0^t 
    e^{jc(t-s_0)}
    \dd s_0 \nn
&= \frac{\|F_M\|_{\infty}\|\uu_{\mathrm{in}}\|_{\max}^{M-1}G_\kappa^{jd}}{|c|} \left( 1-e^{j c t} \right) .    
\end{align}

The general form of the bound is then
\begin{equation}
    \|\mathbb{\eta}_{j}(t)\|_{\max} \lesssim\left(\frac{\|F_M\|_{\infty}}{|c|}\|\mathbf{u}_{\mathrm{in}}\|_{\max}^{(M-1)}G_\kappa^{d[j + (k-1) (M - 1)/2]}\right)^{k} 
f_{j,k,M}\left( |c|t \right) .
\end{equation}
This can be shown using iteration as per Eq.~\eqref{eq:bound_Om2} to give
\begin{align}
 \|\mathbb{\eta}_{j}(t)\|_{\max} &\leq \gamma^{M-1}\int_0^t \|e^{A_{j}^{(1)}(t-s_1)}\|_{\infty} \cdot \|A_{j+M-1}^{(M)}\mathbf{\eta}_{j+M-1}(s_1)\|_{\max} \dd s_1 \nonumber\\
 &\leq \gamma^{M-1}j\|F_M\|_{\infty} G_\kappa^{jd} \int_0^t 
    e^{jc(t-s_1)}\|\mathbf{\eta}_{j+M-1}(s_1)\|_{\max} \dd s_1\nonumber\\
 &\lesssim \gamma^{M-1}j\|F_M\|_{\infty} G_\kappa^{jd}
 \int_0^t e^{jc(t-s_1)} 
 \left(\frac{\|F_M\|_{\infty}}{|c|}\|\mathbf{u}_{\mathrm{in}}\|_{\max}^{M-1}G_\kappa^{d[j + k (M - 1)/2]}\right)^{k-1} 
f_{j+M-1,k-1,M}\left( |c|s_1 \right) \nn
&= \frac{\gamma^{M-1}\|F_M\|_{\infty} G_\kappa^{jd}}{|c|}
 \left(\frac{\|F_M\|_{\infty}}{|c|}\|\mathbf{u}_{\mathrm{in}}\|_{\max}^{M-1}G_\kappa^{d[j + k (M - 1)/2]}\right)^{k-1} 
f_{j,k,M}\left( |c|t \right) \nn
&= \left(\frac{\|F_M\|_{\infty}}{|c|}\|\mathbf{u}_{\mathrm{in}}\|_{\max}^{M-1}G_\kappa^{d[j + (k-1) (M - 1)/2]}\right)^{k} 
f_{j,k,M}\left( |c|t \right) .
\end{align}
This upper bound is problematic, because it has $G_\kappa$ to a power increasing with $k$ inside the power of $k$.
This means it is not possible to show convergence with increasing order of the Carleman linearisation.
As the order is increased, $k$ will increase so eventually the expression inside the power of $k$ here will be larger than 1.

A further improvement can be made if we note where the factor of $G_\kappa^{jd}$ is coming from in the second line.
That is coming from $\|e^{A_{j}^{(1)}(t-s_1)}\|_{\infty}$, but  $A_{j}^{(1)}$ is a sum of $j$ operators each acting on a subsystem.
That means $e^{A_{j}^{(1)}(t-s_1)}$ can be written as a tensor product of exponentials of $F_1$ acting on each of the $j$ subsystems.
In turn, $A_{j+M-1}^{(M)}\mathbf{\eta}_{j+M-1}(s_1)$ is a sum of $j$ terms, and each of those $A_{j+M-1}^{(M)}$ has affected only one of the $j$ subsystems.
On the remainder of the subsystems, $\mathbf{\eta}_{j+M-1}(s_1)$ has been obtained by an exponential of the form $e^{A_{j}^{(1)}(s_1-s_2)}$.
That means, if we consider the operators before using the product formulae for norms, we have subsystems where $e^{F_1(t-s_1)}e^{F_1(s_1-s_2)}=e^{F_1(t-s_2)}$.
The product of these exponentials can be bounded by $G_\kappa^{d}$ instead of $G_\kappa^{2d}$.

What this means is that for the $j$ exponentials above, $j-1$ of them can be combined with the previous exponentials, where the norm has already been taken into account as $G_\kappa^{d}$.
That means that the factor need only be $G_\kappa^{d}$ rather than $G_\kappa^{jd}$, and we can obtain the form
\begin{equation}
    \|\mathbb{\eta}_{j}(t)\|_{\max} \lesssim G_\kappa^{dj}\left(\frac{\|F_M\|_{\infty}}{|c|}\|\mathbf{u}_{\mathrm{in}}\|_{\max}^{(M-1)}G_\kappa^{dM}\right)^{k} 
f_{j,k,M}\left( |c|t \right) .
\end{equation}
The iteration to show this is
\begin{align}
 \|\mathbb{\eta}_{j}(t)\|_{\max}
 &\leq \gamma^{M-1}j\|F_M\|_{\infty} G_\kappa^{d} \int_0^t 
    e^{jc(t-s_1)}\|\mathbf{\eta}_{j+M-1}(s_1)\|_{\max} \dd s_1\nonumber\\
 &\lesssim \gamma^{M-1}j\|F_M\|_{\infty} G_\kappa^{d} G_\kappa^{d(j+M-1)}
 \int_0^t e^{jc(t-s_1)} 
 \left(\frac{\|F_M\|_{\infty}}{|c|}\|\mathbf{u}_{\mathrm{in}}\|_{\max}^{M-1}G_\kappa^{dM}\right)^{k-1} 
f_{j+M-1,k-1,M}\left( |c|s_1 \right) \nn
&= \frac{\gamma^{M-1}\|F_M\|_{\infty} G_\kappa^{d(j+M)}}{|c|}
 \left(\frac{\|F_M\|_{\infty}}{|c|}\|\mathbf{u}_{\mathrm{in}}\|_{\max}^{M-1}G_\kappa^{dM}\right)^{k-1} 
f_{j,k,M}\left( |c|t \right) \nn
&= G_\kappa^{dj}\left(\frac{\|F_M\|_{\infty}}{|c|}\|\mathbf{u}_{\mathrm{in}}\|_{\max}^{M-1}G_\kappa^{dM}\right)^{k} 
f_{j,k,M}\left( |c|t \right) .
\end{align}
The advantage of this form is now that the expression inside the exponential is now independent of $k$, and so increasing the order of the Carleman linearisation can improve the accuracy provided the expression inside the exponential is below 1.

\section{Proof of semi-discrete error bound due to finite difference discretisation}\label{app:spat-discretisation-semidiscrete}

The proof to \cref{lem:spat-discretisation-semidiscrete} follows.
\begin{proof}
We want to take account of the error derived from the spatial discretisation of the following rescaled $d$-dimensional nonlinear PDE
\begin{equation}
\label{eq:pde_resc}
    \partial_t u(\xx, t) - \left(D\Delta u(\xx, t) + c u(\xx, t)\right) - b u^M(\xx, t)  = 0.
\end{equation}
When we discretise the equation above, taking $n$ grid points in total, we use the $k$th $d$-dimensional Laplacian approximation $L_{k,d}$, as given in \cref{eq:d_Laplac}, such that
\begin{equation}
L\uu(t) =  L_{k,d}\uu(t) +\mathbf{r}(t),
\end{equation}
where $\mathbf{r}(t)$ is the remainder vector. Using the results given in \cite{kivlichan2017bounding,childs2021high} for the Laplacian approximation, we can write the max-norm of the remainder vector as
\begin{equation}
\label{eq:cons-error}
\|\mathbf{r}\|_{\max}=\norm{ \left(L - L_{k,d}\right)\widetilde{\uu} }_{\max} = \order{   
C(u,k)\left(\frac{e}{2}\right)^{2k} n^{- (2k - 1)/d} },
\end{equation}
where $C(u,k)$ is a constant depending on the $(2k+1)$st spatial derivative on each direction of the Laplacian operator,
\begin{equation}
C(u,k) = \sum_{j=1}^d\left|\frac{d^{2k+1}u}{dx_j^{2k+1}}\right|.    
\end{equation} 
The error in terms of the 2-norm can be obtained immediately from that for the max-norm by multiplying by the factor $\sqrt{n}$
\begin{equation}
\label{eq:consistency-error}
\|\mathbf{r}\|=\norm{ \left(L - L_{k,d}\right)\uu } = \order{   
C(u,k)\sqrt{n}\left(\frac{e}{2}\right)^{2k} n^{-(2k - 1)/d} }.
\end{equation}

In the equation above we also introduce notions of the exact quantities evaluated on the grid points
as $\uu(t) = [u_1, u_2, \dots, u_{n}]^T$ where the components are time dependent. The grid representation of the Laplacian $\Delta$ is denoted $L$, in the sense that $L\uu$ is used to represent the Laplacian of $u$ evaluated at grid points.

To quantify the error due to spatial discretisation we write 
\begin{equation}
\label{eq:approx_disc}
\bm\varepsilon(t) =  \uu(t) -  \uu^h(t),
\end{equation}
where $\uu^h(t)$ are the approximate values of the function resulting from the spatial discretisation. From the definition above we then have
\begin{align}
 \dt{\bm\varepsilon}  &=  D(L\uu - L_{k,d} \uu^h)  +  c(\uu - \uu^h) +  F_{M} (\uu^{\otimes M} - (\uu^h)^{\otimes M})  \nn
  &=  \left(DL_{k,d}+c\identity^{\otimes d}\right)\bm{\varepsilon} +  F_{M}(\uu^{\otimes M} - (\uu^h)^{\otimes M}) +\mathbf{r} \, ,
\end{align}
where we have used \cref{eq:approx_disc} followed by using the operator definitions as given in  \cref{eq:F1}, i.e., $F_{1}=DL_{k,d}+c\identity^{\otimes d}$.
Note that $F_M$ is one-sparse and has only entries equal to $b$ resulting from the spatial discretisation of \cref{eq:pde_resc}.

We can then obtain the derivative of $\|\bm{\varepsilon}\|$ by considering first
\begin{align}
\label{eq:deriv_norm_eps}
    \dt{}(\bm{\varepsilon}^{\dagger}\bm{\varepsilon}) &= \dt{\bm{\varepsilon}^{\dagger}}\bm{\varepsilon} + \bm{\varepsilon}^{\dagger}\dt{\bm{\varepsilon}} \nn
    & = \bm{\varepsilon}^{\dagger}\left( F_{1} +F_{1}^{\dagger}\right)\bm{\varepsilon} + \left(\uu^{\otimes M} - (\uu^{h})^{\otimes M}\right)^{\dagger}F_{M}^{\dagger}\bm{\varepsilon}\nn 
&\quad+  \bm{\varepsilon}^{\dagger}F_{M}\left(\uu^{\otimes M} - (\uu^{h})^{\otimes M}\right) + \mathbf{r}^{\dagger}\bm{\varepsilon} + \bm{\varepsilon}^{\dagger}\mathbf{r} \, .
\end{align}

From the quantity above we have for the first term 
\begin{equation}
\label{eq:first_in}
\bm{\varepsilon}^{\dagger}\left( F_{1} +F_{1}^{\dagger}\right)\bm{\varepsilon} \leq 2 c \|\bm{\varepsilon}\|^2,
\end{equation}
where we have used the results for the eigenvalues of $F_{1}$ from \cref{eq:evalues_F1}, i.e.,
\begin{equation}
\lambda_\ell(F_1) = D n^{2/d} a_0 + c + D n^{2/d} \sum_{j=1}^{k}\left[a_j\left(\omega^{(\ell-1)j} + \omega^{(\ell-1)(n-j)} \right) \right],
\end{equation}
by considering a $d$-dimensional system. The expression for the eigenvalues above $c<0$ is chosen such that it has the maximum real part among all the eigenvalues for $F_1$ and is smaller than zero to guarantee the dissipativity of the PDE. Now for the two other terms, we can first derive the following bound
\begin{align}
   \norm{F_M({\uu}^{\otimes M} - ({\uu}^h)^{\otimes M})}^2 &= \sum_{\ell} b^2 (\uu_\ell^M - ({\uu}_\ell^h)^M)^2
   \nn
&\leq b^2\sum_{\ell}\left(\sum_{j=0}^{M-1}\left(\uu_\ell^h\right)^{j}\left( {\uu}_\ell - {\uu_\ell}^h \right) {\uu_\ell}^{(M-j-1)}\right)^2  \nn 
&\leq b^2\sum_{\ell}\left(\sum_{j=0}^{M-1}\left( {\uu}_\ell - \uu_\ell^h \right) \right)^2 \max\left(\norm{\widetilde{\uu}}_{\max}^{2(M-1)},\norm{\widetilde{\uu}^h}_{\max}^{2(M-1)}\right) \nn 
&\leq b^2M^2\left\|{\bm\varepsilon}\right\|^2\max\left(\norm{\widetilde{\uu}}_{\max}^{2(M-1)},\norm{\widetilde{\uu}^h}_{\max}^{2(M-1)}\right) \, .
\end{align}
Provided $|c|> \uin_{\max}^{M-1}|b|$, the PDE is stable, and we have $\norm{\uu}_{\max} \le \uin_{\max}$.
Then using the approximation that $\norm{\uu}_{\max} \approx \norm{\uu^h}_{\max}$ (the solution is accurate) we obtain
\begin{equation}
    \norm{F_M({\uu}^{\otimes M} - ({\uu}^h)^{\otimes M})} \le bM\left\|{\bm\varepsilon}\right\|\max\left(\norm{{\uu}}_{\max}^{M-1},\norm{{\uu}^h}_{\max}^{M-1}\right)\lesssim bM\left\|{\bm\varepsilon}\right\| {\uin}_{\max}^{M-1} \, .
\end{equation}
Using that expression for the two other terms in \cref{eq:deriv_norm_eps}
\begin{align}
\label{eq:second_in}
\left(\uu^{\otimes M} - (\uu^{h})^{\otimes M}\right)^{\dagger}F_{M}^{\dagger}\bm{\varepsilon} + \bm{\varepsilon}^{\dagger}F_{M}\left(\uu^{\otimes M} - (\uu^{h})^{\otimes M}\right)&\leq 2\|F_M(\uu^{\otimes M} - (\uu^{h})^{\otimes M})\| \|\bm{\varepsilon}\|\nn 
&\le 2 bM\left\|{\bm\varepsilon}\right\|^2 {\uin}_{\max}^{M-1} \, .
\end{align}

For the last two components in \cref{eq:deriv_norm_eps} we have
\begin{equation}
\label{eq:third_in}
\mathbf{r}^{\dagger}\bm{\varepsilon} + \bm{\varepsilon}^{\dagger}\mathbf{r} \leq 2\|\mathbf{r}\|\|\bm{\varepsilon}\|.
\end{equation}
 Returning to \cref{eq:deriv_norm_eps} together with the fact that $\dt{ \|\mathbf{\bm\varepsilon}\|} = (2\|\mathbf{\bm\varepsilon}\|)^{-1} \dt{\|\mathbf{\bm\varepsilon}\|^2}$, we have
\begin{align}
\dt{}\|\bm{\varepsilon\|} &= (2\|\mathbf{\bm\varepsilon}\|)^{-1}\left[\bm{\varepsilon}^{\dagger}\left( F_{1} +F_{1}^{\dagger}\right)\bm{\varepsilon} + \left(\uu^{\otimes M} - (\uu^{h})^{\otimes M}\right)^{\dagger}F_{M}^{\dagger}\bm{\varepsilon}\right.\nn 
&\quad+ \left. \bm{\varepsilon}^{\dagger}F_{M}\left(\uu^{\otimes M} - (\uu^{h})^{\otimes M}\right) + \mathbf{r}^{\dagger}\bm{\varepsilon} + \bm{\varepsilon}^{\dagger}\mathbf{r}\right].
\end{align}
We can now combine \cref{eq:first_in}, \cref{eq:second_in} and \cref{eq:third_in} in the equation above to get
\begin{align}
\dt{}\|\bm{\varepsilon\|}  &\leq \left(c + M|b|{\uin}_{\max}^{M-1}\right)\|\bm{\varepsilon}\| + \|\mathbf{r}\|,
\end{align}
where we used the condition that $|c|>M|b|\uin^{M-1}$.
That is stronger than the condition $|c|> |b|\uin^{M-1}$ which ensures that $\uu(t)$ does not increase with time. We next apply the solution for the ODE as it is given in \cref{eq:pde_sol}, with $\|\bm{\varepsilon}(0)\|=0$ 
\begin{equation}
\|\bm{\varepsilon}(t)\| \leq \frac{1 - \exp{\left( c + M|b|{\uin}_{\max}^{M-1}\right)t} }{ \left|c + M|b|{\uin}_{\max}^{M-1}\right)} \|\mathbf{r}\|,   \end{equation}

We can finally conclude from \cref{eq:consistency-error} that
\begin{equation}
\label{eq:error-decomposition}
\|\bm{\varepsilon}(t)\| = \order{   
C(u,k)\sqrt{n}\left(\frac{e}{2}\right)^{2k} n^{- (2k - 1)/d} \frac{1 - \exp{\left( c + M|b|{\uin}_{\max}^{M-1}\right)t} }{ \left|c + M|b|{\uin}_{\max}^{M-1}\right|}}.
\end{equation}
\end{proof}

\section{Error when using high-order finite difference discretisations}
\label{app:trad_off}

From \cref{eq:cons-error} we have, for $d=1$, the spatial discretisation max-norm error
\begin{equation}
\order{   
C(u,k)\left(\frac{e}{2}\right)^{2k} n^{- (2k - 1)} } .
\end{equation}
Therefore, if we want equivalent error between first-order and order-$k$ discretisations, we would need
\begin{equation}
    C(u,k)\left(\frac{e}{2}\right)^{2k} n_k^{- (2k - 1)}
    \sim
    C(u,1)\left(\frac{e}{2}\right)^{2} n_1^{- 1} .
\end{equation}
That then gives
\begin{equation}
    n_k =\order{  \left( \frac{C(u,k) \, n_1}{C(u,1)} \right)^{1/(2k-1) }}\, .
\end{equation}
This shows that, apart from the factor of $C(u,k)$, increasing the order provides a $(2k-1)$-root improvement in the number of discretisation points needed to achieve a given error.
That is for equal max-norm error.
If we were to instead consider 2-norm error, then there would be a $(4k-3)$-root improvement in the number of points needed.
That is because the lower $n_k$ also reduces the 2-norm by reducing the length of the vector.

In \cref{fig:error_laplacian} we compare the scaling of the error for a first and second-order approximation of the Laplacian (i.e., $k=1,2$ in \cref{eq:ho-laplacian-def}) for a simple differential equation to illustrate the faster error convergence. 
The advantages of using a certain order of approximation $k$ over lower orders will depend on the specific PDE to be solved and the initial conditions.
The value of $C(u,k)$ can increase with $k$ if there is high-frequency variation of $\mathbf{u}$.
That is less of a problem with the Laplace equation, because that smooths out high-frequency variation, but could be more of a problem for other PDEs.
Moreover, the complexity of the block-encoding increases with $k$, so if $k$ is order $n$ then it would negate any quantum speedup in solving the PDE (unless there were a more efficient block-encoding).
Therefore, one should take into consideration the specifics of the PDE to be solved and choose $k$ accordingly.

\begin{figure}
\begin{center}
	\includegraphics[scale=0.6]{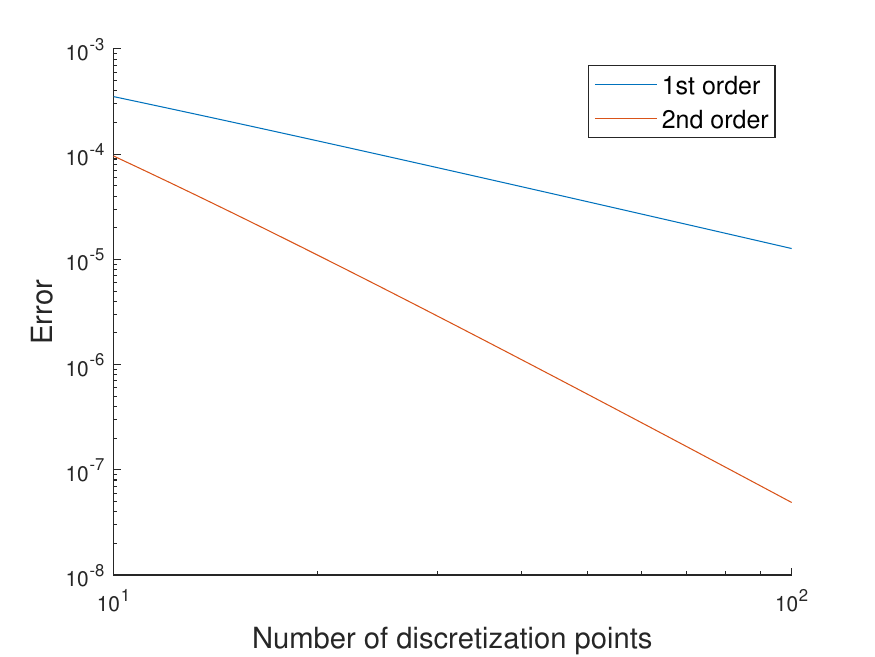}
	\caption{
        Comparison of error in approximating the Laplacian when considering first ($k=1$) and second ($k=2)$ order discretisations. To illustrate the scaling of the error we have solved the equation $u_{xx}=e^x$ in the interval $x\in[0,1]$ with Dirichlet boundary conditions $u(0)=0$ and $u(1)=1$. The error is computed as $\norm{L_k^{-1}\exp[\mathbf{x}]-\mathbf{u}_{\text{sol}}}$, where $\mathbf{x}$ is the vector of discrete points in space, $\exp[\mathbf{x}]$ is obtained by exponentiating each entry of $\mathbf{x}$ and $\mathbf{u}_{\text{sol}}$ is the vector containing the exact discretised solution evaluated at the grid points.}\label{fig:error_laplacian}
\end{center}
\end{figure}

\section{Construction of finite-difference coefficients}\label{app:fd-coefficients}
For the construction of arbitrary finite-difference coefficients, we generally refer to standard methods, such as based on Taylor expansions or Lagrange polynomials \cite{fdcc2016}. 
Given a grid discretisation of $x$ over some finite interval in $n$ points, a finite difference approximation or ``stencil'' to the $m$th derivative of a function $f(x)$ reads
\begin{equation}
    f^{(m)}(x) \approx \sum_{j=0}^{n-1} a_j f(x_j),
\end{equation}
where $\{a_j\}_{j=0}^{n-1}$ are the stencil points to the grid $\{x_j\}_{j=0}^{n-1}$. When finding these approximations, one may  express $f(x)$ via its Taylor expansion or as a Lagrange polynomial to find the coefficients $\{a_j\}$ given a certain grid. Generally, the number of grid points used in approximating a certain derivative will influence its accuracy; we refer to Ref.~\cite{li2005general} which provides a comprehensive analysis.

Within this study, we are mostly interested in the discretisation of second derivatives, where we focus on central stencils with periodic boundary conditions. To that end, one can make use of the formulae in Ref.~\cite{li2005general}, which as well as the stencil coefficients also provides an investigation of the discretisation error.
Whenever we talk about ``order of discretisation'' $k$ in this work, we have a central stencil with $2k+1$ points, corresponding to an error that goes roughly as $\order{h^{2k-1}}$ in a uniform grid spacing $h$.

The formulae in Ref.~\cite{li2005general} are also considered in other works that apply finite differences to quantum algorithms~\cite{kivlichan2017bounding,childs2021high}.
This way, an approximation to the second derivative for a function $f(x)$ can be written as
\begin{align} \label{eq:ho-stencil}
    f''(0) &\approx \frac{1}{h^2} \sum_{j=-k}^k a_j f(jh), \quad \text{where} \\ 
    a_j &= \begin{cases}
        \frac{2(-1)^{j+1}(k!)^2}{j^2 (k-j)! (k+j)!}, &\quad 1 \le j \le k \\
        -2\sum_{i=1}^k a_i , &\quad  j=0 \\ 
        a_{-j}, &\quad -k \ge j \ge -1.
    \end{cases} 
\end{align}
For $k=1$
the usual stencil with coefficients $1, -2, 1$ is retrieved.

Using this construction, the discrete Laplacian matrix $L_k$ with a uniform grid spacing $h$ has entries
\begin{equation}
    [L_{k}]_{pq} = \begin{cases}
        \frac{2(-1)^{\ell+1}(k!)^2}{\ell^2 (k-\ell)! (k+\ell)!}, &\quad q<p\le q+k~~\text{or}\\ &~~q-k\le p < q, ~~ \ell:=\abs{p-q} \\
        -2\sum_{\ell=q+1}^{q+k} [L_{k}]_{\ell q} , &\quad  p=q 
    \end{cases}
\end{equation}
To treat boundaries, one can use the method in Ref.~\cite{svard2006order}, which showns that if using a $k$th order finite difference scheme within the domain, it is sufficient to treat the boundary with a stencil of accuracy $k-2$, given that the solution is point-wise bounded. We need to make sure that only internal points are used for the creation of the difference stencils; thus, for rows with row-index smaller than $k$ or greater than $n-k$, we need to construct specific, non-centric stencils.
The procedure for boundary treatment in Ref.~\cite{costa2019quantum} may be used as well.

\section{Sparsity of Carleman matrix}\label{app:sparsity-of-carlmatrix}
Here we discuss the sparsity of the Carleman matrix applied to \cref{eq:pde-example}.
\begin{lemma}[Sparsity of the Carleman matrix for nonlinear PDE problem]
Let $\carlmatrix$ be the Carleman matrix with truncation number $N$ applied to discretised PDE problem $\partial_t u = D L_k u + cu + bu^M$ in \cref{eq:pde-example}, where $L_k$ is a central difference approximation of the Laplacian with $2k+1$ stencil points under periodic boundary conditions.  Then, considering only one dimension in space, the sparsity $s(\carlmatrix)$ of $\carlmatrix$ is 
\begin{equation}
    s(\carlmatrix) = \order{kN}.
\end{equation}
\end{lemma}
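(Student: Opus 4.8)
The plan is to reduce everything to two elementary properties of the sparsity $s(\cdot)$ (the maximal number of nonzeros in any row, the column count being entirely analogous): it is subadditive under addition, $s(A+B)\le s(A)+s(B)$, and multiplicative under Kronecker products, $s(A\otimes B)=s(A)\,s(B)$. Combined with the explicit block layout of $\mathcal{A}_N$ in \cref{eq:carlmatrix} and the assembly of the blocks in \cref{eq:carleman-assembly}, these let me propagate bounds on the elementary operators $F_1$ and $F_M$ up to the full matrix.

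First I would record the sparsities of the building blocks. In one dimension $F_1=DL_k+c\,\identity$, and by \cref{eq:ho-laplacian-def} the Laplacian $L_k$ is a circulant whose nonzeros sit on the diagonal (from $a_0\identity$) and at offsets $\pm1,\dots,\pm k$ (from $S^{\pm j}$); adding $c\,\identity$ only modifies the already nonzero diagonal, so $s(F_1)=2k+1$. Since $F_M$ merely extracts the components $u_i^M$ from $\uu^{\otimes M}$, it has a single nonzero per row, giving $s(F_M)=1$. Using $s(\identity)=1$, every summand $\identity^{\otimes(\ell-1)}\otimes F_1\otimes\identity^{\otimes(j-\ell)}$ in $A_j^{(1)}$ has sparsity $s(F_1)=2k+1$, and every summand of $A_{j+M-1}^{(M)}$ has sparsity $s(F_M)=1$. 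As each of these blocks is a sum of $j$ such terms, subadditivity gives $s(A_j^{(1)})\le j(2k+1)$ and $s(A_{j+M-1}^{(M)})\le j$.

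It then remains to assemble these through the block structure. In \cref{eq:carlmatrix} the $j$th block-row contains only the diagonal block $A_j^{(1)}$ together with at most one off-diagonal block $A_{j+M-1}^{(M)}$ (present when $j+M-1\le N$). These two blocks occupy disjoint ranges of columns, so the nonzeros they contribute to any single scalar row fall in distinct columns and their counts simply add. Hence any row lying in block-row $j$ has at most $s(A_j^{(1)})+s(A_{j+M-1}^{(M)})\le j(2k+1)+j=j(2k+2)$ nonzeros, and taking the maximum over $j\in[N]$ yields $s(\mathcal{A}_N)\le N(2k+2)=\order{kN}$. The identical argument applied to block-columns gives the same bound, and the rescaling $\gamma^{M-1}$ of \cref{eq:resccarlmatrix} leaves every sparsity pattern unchanged.

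The computation is essentially bookkeeping, so the only place where care is genuinely needed is the claim that the blocks within a given block-row contribute to disjoint columns; this is immediate from the fact that each block-row of \cref{eq:carlmatrix} has exactly one diagonal and one off-diagonal block sitting in different block-columns. One could sharpen the intermediate estimates---noting that the off-diagonal tensor slots perturb distinct coordinates while all diagonal contributions coincide, which would give the exact $s(A_j^{(1)})=1+2kj$---but such refinements do not affect the final $\order{kN}$ scaling.
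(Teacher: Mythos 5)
Your proposal is correct and follows essentially the same route as the paper's proof: the same block decomposition of $\carlmatrix$, the same elementary counts $s(F_1)=2k+1$ and $s(F_M)=1$, and the same subadditivity of sparsity over the $j$ tensor-product summands in \cref{eq:carleman-assembly}, yielding $s(A_j^{(1)})\le j(2k+1)$ and $s(A_{j+M-1}^{(M)})\le j$ before summing the diagonal and off-diagonal contributions. Your explicit remark that the two blocks in a given block-row occupy disjoint column ranges, and your sharpened count $1+2kj$ from the coinciding diagonal entries, correspond directly to the paper's implicit summation and its observation that the sparsity is $2kN+1$.
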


\begin{proof}
As individual diagonal blocks in $\mathcal{A}_N$ (see Eq.~\eqref{eq:carlmatrix}) do not overlap, we can treat them independently and sum up the contributions.
    We first consider the linear term $F_1=DL_k+c\identity$. As the diagonal entries of $L_k$ are non-zero, we can equivalently only consider $L_k$.
    The central-difference stencil is of the form
     $\frac{1}{h^2} \sum_{\ell=-k}^k r_\ell f(\ell h)$,
    hence for each row of $L_k$, there are $2k+1$ non-zero contributions, where there is one diagonal entry and $2k$ off the diagonal.
    The diagonal blocks of $\carlmatrix$ are of the form
    \begin{equation}
        A^{(1)}_{j} = F_1\otimes \identity^{\otimes (j-1)} + \identity\otimes F_1\otimes \identity^{\otimes(j-2)} + \cdots +\identity^{\otimes (j-1)}\otimes F_1 \, .
    \end{equation}
    Each term is of the form of the tensor product of $F_1$ with identity matrices, which leaves the sparsity unchanged.
    The sparsity of $A^{(1)}_{j}$ is then upper bounded by the sum of the sparsities of the terms, and so is upper bounded by $j(2k+1)$.
    Since the maximum value of $j$ is $N$, that gives an upper bound of $N(2k+1)$.
    In fact, since each term has one on-diagonal and $2k$ off-diagonal entries in each row, the sparsity will be $2kN+1$.
    
Next, consider the off-diagonal blocks of $\carlmatrix$, given by (see \cref{eq:carleman-assembly})
\begin{align}
A^{(M)}_{j+M-1} = F_M\otimes \identity^{\otimes (j-1)} + \identity\otimes F_M\otimes \identity^{(j-2)} + \cdots+ \identity^{\otimes (j-1)}\otimes F_M \, .
\end{align}    
The $n\times n^M$ matrices $F_M$ have the only non-zero entries (see \cref{eq:FM-definition} for the case $M=2$)
\begin{align}
    (F_M)_{i, i+(i-1)(\sum_{\ell=1}^{M-1}n^\ell)} = b \, .
\end{align}    
Hence $F_M$ is one-sparse, and again taking the tensor product with the identity leaves the sparsity unchanged.
The sparsity of $A^{(M)}_{j+M-1}$ is then no more than the sum of the sparsities of the terms, and so is no larger than $j$.
The last ($N$th) column of $\carlmatrix$ has $A_{j+M-1}^{(M)}$ with $j+M-1=N$, and so the maximum value of $j$ is $N-M+1<N$.

The overall sparsity of $\carlmatrix$ is then upper bounded by the sum of the sparsities of $A^{(1)}_{j}$ (on-diagonal blocks) and $A^{(M)}_{j+M-1}$ (off-diagonal blocks).
That gives the total sparsity upper bounded as
\begin{equation}
    s(\carlmatrix) < N(2k+1) + N \in\order{kN} \, .
\end{equation}
\end{proof}

\section{Block-encoding of Carleman matrix $\mathcal{A}_N$}\label{app:block_enc}

As usual in the block encoding of an operator written as a sum, the sum can be block encoded by using a register in superposition to select between the terms in the sum, and then a select operation controlled by that register to implement the terms in the sum.
Here we have a sum of block diagonals with $A_j^{(1)}$ and $A_{j+M-1}^{(M)}$.
We therefore have the following approach to the block encoding.
\begin{enumerate}
    \item A qubit in superposition is used to select between $A_j^{(1)}$ and $A_{j+M-1}^{(M)}$.
    \item We prepare ancilla registers in equal superpositions over $N$ and $N-M+1$ basis states; these register store $\ell$.
    These registers are for the block encodings of $A_j^{(1)}$ and $A_{j+M-1}^{(M)}$, respectively.
    \item For the block encoding of $A_j^{(1)}$ we apply $F_1$ to target register $\ell$.
    Instead of using $N$ controlled block encodings of $F_1$, we can perform a controlled swap of register $\ell$ to a working register, use \emph{one} controlled block encoding of $F_1$, then swap the result back.
    These controlled swaps have cost $\order{N\log n}$ for dimension $n$ of the vector, in terms of elementary gates.
    \item For the block encoding of $A_{j+M-1}^{(M)}$ we perform a controlled application of $F_M$ on target registers $\ell$ to $\ell+M-1$ (with $\ell$ starting from 1).
    Rather than using $N-M+1$ controlled block encodings of $F_M$, we can swap the $M$ target registers to working registers, then perform one controlled block encoding of $F_M$, then swap the result back.
    These controlled swaps have cost $\order{NM\log n}$.
    \item 
    For block encoding $A_{j+M-1}^{(M)}$ a further complication is that $F_M$ maps $M$ copies to 1, so we need to shift the target systems over so we have a contiguous block of copies.
    That is, if $\ell=1$, we have the first term in the sum for $A_{j+M-1}^{(M)}$, the first $M$ copies are mapped to 1 and we have the remaining $M-1$ copies being zeroed, and then there are the rest of the copies.
    The $M-1$ zeroed copies need to be shifted to the end.
    These controlled swaps have a further cost $\order{NM\log n}$ in elementary gates.
    We also subtract $M-1$ from $m$, to indicate that the number of copies has been reduced by $M-1$.
    That corresponds to the blocks containing $A_{j+M-1}^{(M)}$ being away from the main diagonal.
    \item 
    Lastly, we need to truncate the sums for $A_j^{(1)}$ and $A_{j+M-1}^{(M)}$ to $j$ terms.
    This can be performed by using an inequality test between $\ell$ and $m$, with a result of $1$ removing the term in the block encoding.
    In particular, for $A_j^{(1)}$ we have $j$ corresponding to $m$, and $\ell>m$ indicates that we are past the end of the sum.
    Then, for $A_{j+M-1}^{(M)}$ we have $j+M-1=m$, so $\ell>m-M+1$ indicates we are past the end of the sum.
\end{enumerate}

It is also possible to prepare the initial Carleman vector with $\order{N}$ calls to the state preparation for $\mathbf{u}_{\mathrm{in}}$.
We may first prepare a register in unary for the appropriate weightings between the components of the Carleman vector.
Then simply use each qubit of that unary register to control preparation of the $\mathbf{u}_{\mathrm{in}}$ state.
Then we obtain a number of copies corresponding to the component of the Carleman vector as required.

\end{document}